\documentclass{llncs}
\usepackage{graphicx}
\usepackage{fullpage}
\usepackage{xspace}
\usepackage[english]{babel}
\usepackage{subfigure} 
\usepackage{enumerate}

\newtheorem{openproblem}{Open Problem}

% Abbreviations used in the paper:
\newcommand{\cdrawing}{compatible drawing\xspace}
\newcommand{\cdrawings}{compatible drawings\xspace}

\author{Patrizio Angelini\inst{1},
Carla Binucci\inst{2},
Giordano Da Lozzo\inst{1},
Walter Didimo\inst{2},\\
Luca Grilli\inst{2},
Fabrizio Montecchiani\inst{2},
Maurizio Patrignani\inst{1},
Ioannis G. Tollis\inst{3}
}

\date{}

\institute{
Universit\`a  Roma Tre, Italy\\
\email{\{angelini,dalozzo,patrigna\}@dia.uniroma3.it} \and
Universit\`a degli Studi di Perugia, Italy\\
\email{\{binucci,didimo,grilli,montecchiani\}@diei.unipg.it} \and
Univ. of Crete and Institute of Computer Science-FORTH, Greece\\
\email{tollis@ics.forth.gr}
}

\title{Algorithms and Bounds for Drawing Non-planar Graphs\\ with Crossing-free Subgraphs\thanks{Research supported in part by the MIUR project AMANDA ``Algorithmics for MAssive and Networked DAta'', prot. 2012C4E3KT\_001. Work on these results began at the 8th Bertinoro Workshop on Graph drawing.
Discussion with other participants is gratefully acknowledged. Part of the research was conducted in the framework of ESF project 10-EuroGIGA-OP-003 GraDR "Graph Drawings and Representations" and of 'EU FP7 STREP Project "Leone: From Global Measurements to Local Management", grant no. 317647'. A preliminary extended abstract of the results contained in this paper has been presented at the 21st International Symposium on Graph Drawing, GD 2013.}}

\newenvironment{keywords}{
       \list{}{\advance\topsep by0.35cm\relax\small
       \leftmargin=1cm
       \labelwidth=0.35cm
       \listparindent=0.35cm
       \itemindent\listparindent
       \rightmargin\leftmargin}\item[\hskip\labelsep
                                     \bfseries Keywords:]}
     {\endlist}
     
\begin{document}
\maketitle

\begin{abstract}
We initiate the study of the following problem: \emph{Given a non-planar graph 
$G$ and a planar subgraph $S$ of $G$, does there exist a straight-line drawing 
$\Gamma$ of $G$ in the plane such that the edges of $S$ are not crossed in 
$\Gamma$ by any edge of $G$?}
We give positive and negative results for different kinds of connected spanning 
subgraphs $S$ of $G$.
Moreover, in order to enlarge the subset of instances that admit a solution, we 
consider the possibility of bending the edges of $G$ not in $S$; in this 
setting we discuss different trade-offs between the number of bends and the required drawing area.
\end{abstract}

\begin{keywords}
Graph drawing,  Graph planarity,  Algorithms,  Area requirement,  Crossing complexity
\end{keywords}

%-------------
% INTRODUCTION
%-------------
\section{Introduction}\label{se:introduction}

Many papers in graph drawing address the problem of computing drawings of non-planar graphs with the goal of mitigating 
the negative effect that edge crossings have on the drawing readability.
Several of these papers describe crossing minimization methods, which are effective and computationally feasible for relatively small and sparse graphs (see~\cite{bcgjm-handbook-13} for a survey). Other papers study how non-planar graphs can be drawn such that the ``crossing complexity'' of the drawing is somewhat controlled, either in the number or in the type of crossings. They include the study of \emph{$k$-planar drawings}, in which each edge is crossed at most $k$ times (see, e.g.,~\cite{DBLP:conf/gd/BrandenburgEGGHR12,ddlm-argdf+-13,DBLP:journals/ipl/Didimo13,DBLP:conf/gd/EadesHKLSS12,DBLP:conf/cocoon/HongELP12,DBLP:journals/jgt/KorzhikM13,DBLP:journals/combinatorica/PachT97}), of \emph{$k$-quasi planar drawings}, in which no $k$ pairwise crossing edges exist (see, e.g.,~\cite{DBLP:journals/dcg/Ackerman09,DBLP:journals/jct/AckermanT07,DBLP:conf/wg/GiacomoDLM12,DBLP:journals/siamdm/FoxPS13,DBLP:journals/algorithmica/PachSS96,DBLP:journals/dcg/Valtr98}), and of \emph{large angle crossing drawings}, in which any two crossing edges form a sufficiently large angle (see~\cite{dl-cargd-12} for a survey). Most of these drawings exist only for sparse graphs.

\smallskip
In this paper we introduce a new graph drawing problem concerned with the drawing of non-planar graphs. 
Namely: \emph{Given a non-planar graph $G$ and a planar subgraph $S$ of $G$, decide whether $G$ admits a drawing $\Gamma$ such that the edges of $S$ are not crossed in $\Gamma$ by any edge of $G$, and compute $\Gamma$ if it exists.}

Besides its intrinsic theoretical interest, this problem is also of practical relevance in many application domains. Indeed, distinct groups of edges in a graph may have different semantics, and a group can be more important than another for some applications; in this case a visual interface might attempt to display more important edges without intersections. Furthermore, the user could benefit from a layout in which a connected spanning subgraph is drawn crossing free, since it would support the user to quickly recognize paths between any two vertices, while keeping the other edges of the graph visible.    

Please note that the problem of recognizing specific types of subgraphs that are not self-crossing (or that have few crossings) in a given drawing $\Gamma$, has been previously studied (see, e.g.,~\cite{jw-cdccp-93,kssw-cctg-07,kln-nstl-91,ru-sceps-13}). This problem, which turns out to be NP-hard for most different kinds of instances, is also very different from our problem. Indeed, in our setting the drawing is not the input, but the output of the problem. Also, we require that the given subgraph $S$ is not crossed by any edge of the graph, not only by its own edges.

\smallskip
In this paper we concentrate on the case in which $S$ is a connected spanning subgraph of $G$ and consider both straight-line and polyline drawings of $G$. Namely:

\smallskip\noindent{\bf {\em (i)}} In the straight-line drawing setting we prove that if $S$ is any given spanning spider or caterpillar, then a drawing of $G$ where $S$ is crossing free always exists; such a drawing can be computed in linear time and requires polynomial area (Section~\ref{sse:straight-line-trees}). We also show that this positive result cannot be extended to any spanning tree, but we describe a large family of spanning trees that always admit a solution, and we observe that any graph $G$ contains such a spanning tree; unfortunately, our drawing technique for this family of trees may require exponential area. Finally, we characterize the instances $\langle G,S \rangle$ that admit a solution when $S$ is a triconnected spanning subgraph, and we provide a polynomial-time testing and drawing algorithm, whose layouts have polynomial area (Section~\ref{sse:straight-line-triconnected}).      

\smallskip\noindent{\bf {\em (ii)}} We investigate polyline drawings where only the edges of $G$ not in $S$ are allowed to bend. In this setting, we show that all spanning trees can be realized without crossings in a drawing of $G$ of polynomial area, and we describe efficient algorithms that provide different trade-offs between the number of bends per edge and the required drawing area (Section~\ref{se:polyline}). Also, we consider the case in which $S$ is any given biconnected spanning subgraph. In this case, we provide a characterization of the positive instances, which yields drawings with non-polynomial area, if only one bend per edge is allowed, and with polynomial area if at most two bends are allowed.  

We finally remark that the study of our problem has been receiving some interest in the graph drawing community. In particular, Schaefer proved that given a graph $G$ and a planar subgraph $S$ of $G$, testing whether there exists a polyline drawing of $G$ where the edges of $S$ are never crossed can be done in polynomial time~\cite{DBLP:journals/corr/Schaefer13}. In Schaefer's setting, differently from ours, there is no restriction on the number of bends per edge and the edges of $S$ are not required to be drawn as straight-line segments.  
 
In Section~\ref{se:preliminaries} we give some preliminary definitions that will be used in the rest of the paper, while in Section~\ref{se:conclusions} we discuss conclusions and open problems deriving from our work.

%--------------
% PRELIMINARIES
%--------------
\section{Preliminaries and Definitions}\label{se:preliminaries}

We assume familiarity with basic concepts of graph drawing and planarity (see, e.g., \cite{dett-gd-99}). Let $G(V,E)$ be a graph and let $\Gamma$ be a drawing of $G$ in the plane. If all vertices and edge bends of $\Gamma$ have integer coordinates, then $\Gamma$ is a \emph{grid drawing} of $G$, and the \emph{area} of $\Gamma$ is the area of the minimum bounding box of $\Gamma$.  We recall that the minimum bounding box of a drawing $\Gamma$ is the rectangle of minimum area enclosing $\Gamma$.
If $\Gamma$ is not on an integer grid, we scale it in order to guarantee the same resolution rule of a grid drawing; namely we expect that the minimum Euclidean distance between any two points on which either vertices or bends of $\Gamma$ are drawn is at least of one unit. Under this resolution rule, we define the area of the drawing as the area of the minimum bounding box of~$\Gamma$.  

Let $G(V,E)$ be a graph and let $S(V,W)$, $W \subseteq E$, be a spanning subgraph of $G$. A straight-line drawing $\Gamma$ of $G$ such that $S$ is crossing-free in $\Gamma$ (i.e., such that crossings occur only between edges of $E \setminus W$) is called a \emph{straight-line \cdrawing} of $\langle G,S \rangle$. If each edge of $E \setminus W$ has at most $k$ bends in $\Gamma$ (but still $S$ is drawn straight-line and crossing-free in $\Gamma$), $\Gamma$ is called a \emph{$k$-bend \cdrawing} of $\langle G,S \rangle$.

%If $S$ is a rooted spanning tree of $G$ such that every edge of $G$ not in $S$ connects either vertices at the same level of $S$ or vertices that are on consecutive levels, then we say that $S$ is a \emph{proper level spanning tree} of $G$.
If $S$ is a rooted spanning tree of $G$ such that every edge of $G$ not in $S$ connects either vertices at the same level of $S$ or vertices that are on consecutive levels, then we say that $S$ is a \emph{BFS-tree} of $G$.

A \emph{star} is a tree $T(V,E)$ such that all its vertices but one have degree one, that is, $V=\{u,v_1,v_2, \dots, v_k\}$ and $E=\{(u,v_1), (u,v_2), \dots, (u,v_k)\}$; any subdivision of $T$ (including $T$), is a \emph{spider}: vertex $u$ is the \emph{center} of the spider and each path from $u$ to $v_i$ is a \emph{leg} of the spider. A \emph{caterpillar} is a tree such that removing all its leaves (and their incident edges) results in a path, which is called the \emph{spine} of the caterpillar. The one-degree vertices attached to a spine vertex $v$ are called the \emph{leaves} of $v$.

In the remainder of the paper we implicitly assume that $G$ is always a connected graph (if the graph is not connected, our results apply for any connected component).

% -------------
% STRAIGHT-LINE
% -------------
\section{Straight-line Drawings}\label{se:straight-line}

We start studying straight-line \cdrawings of pairs $\langle G,S \rangle$: Section~\ref{sse:straight-line-trees} concentrates on the case in which $S$ is a spanning tree, while Section~\ref{sse:straight-line-triconnected} investigates the case in which $S$ is a spanning triconnected graph.

\subsection{Spanning Trees}\label{sse:straight-line-trees}

The simplest case is when $S$ is a given Hamiltonian path of $G$; in this case $\Gamma$ can be easily computed by drawing all vertices of $S$ in convex position, according to the order they occur in the path. In the following we prove that in fact a straight-line \cdrawing $\Gamma$ of $\langle G,S \rangle$ can be always constructed in the more general case in which $S$ is a spanning spider (Theorem~\ref{th:spanning-spider}), or a spanning caterpillar (Theorem~\ref{th:spanning-caterpillar}), or a BFS-tree (Theorem~\ref{th:good-trees}); our construction techniques guarantee polynomial-area drawings for spiders and caterpillars, while require exponential area for BFS-trees. On the negative side, we show that if $S$ is an arbitrary spanning tree, a straight-line \cdrawing of $\langle G,S \rangle$ may not exist (Lemmas~\ref{le:bad-trees-1} and~\ref{le:bad-trees-2}). 

%%%%%%%%%%%%%%%%%%%%%%%%%%%%%%%%%%%%%
%%%%% SPIDER
%%%%%%%%%%%%%%%%%%%%%%%%%%%%%%%%%%%%%
\begin{theorem}\label{th:spanning-spider}
Let $G$ be a graph with $n$ vertices and $m$ edges,  and let $S$ be a spanning 
spider of $G$. There exists a grid straight-line \cdrawing $\Gamma$ of 
$\langle G,S \rangle$. Drawing $\Gamma$ can be computed in $O(n+m)$ time and has 
$O(n^3)$ area.   
\end{theorem}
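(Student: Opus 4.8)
The plan is to place the center $u$ of the spider at a point from which all legs can "fan out" into disjoint angular wedges, so that the edges of each leg stay inside its own wedge and the non-$S$ edges can be routed outside all wedges, or at least never cross any wedge boundary that a leg-edge lies on. Concretely, I would first root $S$ at the center $u$, and let $L_1, \dots, L_k$ be the legs (each a path $u = w_0^i, w_1^i, \dots, w_{\ell_i}^i$). I would place $u$ at the origin and assign to leg $L_i$ a thin wedge $W_i$ of angular width $\varepsilon$ around a direction $\theta_i$, with the $\theta_i$'s spread so the closed wedges are pairwise disjoint except at $u$. Inside $W_i$ I draw the leg $L_i$ as a monotone "staircase"-like polyline-free path: put $w_j^i$ at distance roughly $2^{\,?}$ — actually, to keep polynomial area, at distance $c\cdot j$ along a slightly perturbed ray, or better, exploit that a spider has at most $n$ vertices total across all legs, so I can afford to give leg $L_i$ a horizontal strip and lay its vertices on a convex arc. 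The key structural fact I want is: each vertex $v \neq u$ of $G$ is an internal or leaf vertex of exactly one leg, hence sits in exactly one wedge $W_i$, and the only $S$-edges incident to $v$ are the (at most two) edges of $L_i$ at $v$; every other edge at $v$ is a non-$S$ edge and may be drawn with crossings.

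The central idea for making $S$ crossing-free is a visibility/convex-position argument localized to each wedge: within $W_i$, place the leg vertices in convex position so that $L_i$ is drawn as (part of) the boundary of a convex region $R_i$, with all of $R_i$ lying inside $W_i$; then no straight segment between two points lying in wedges $W_i$ and $W_j$ with $i\neq j$ can cross an edge of $L_i$, because such a segment, to reach $R_i$'s boundary edge, would have to enter $W_i$, and I will choose the geometry (wedges thin, legs pushed to the far side of each wedge) so that the segment enters $W_i$ "above" the whole leg. More carefully: I would place leg $L_i$'s vertices on a tiny concave arc hugging one side of $W_i$ so that the leg is visible from the entire opposite side of the plane; then any non-$S$ edge $(a,b)$ with $a \in W_i$, $b \notin W_i$ leaves $a$ heading away from the leg, and any non-$S$ edge with both endpoints in $W_i$ connects two leg vertices of $L_i$ and can be drawn as a chord on the non-leg side of the arc. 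This is exactly the Hamiltonian-path-in-convex-position trick applied per leg, glued at $u$.

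For the quantitative claims: linear time is immediate since the construction is a single pass over $S$ to identify the legs and a coordinate assignment by simple formulas; I would verify $O(n+m)$ by noting we touch each edge a constant number of times. For the $O(n^3)$ area bound, the bottleneck is fitting $k \le n$ thin wedges, each containing up to $n$ convex-position points with unit vertex resolution, into one bounding box; placing leg $L_i$'s vertices on an integer grid inside a wedge of width $\Theta(1/n)$ at distance $\Theta(n)$ forces coordinates of order $\Theta(n^2)$ in one direction to keep unit separation across the $n$ legs, and $\Theta(n)$ in the other, or some balancing giving $\Theta(n^{3})$ total; I would pick the explicit placement (e.g.\ leg $i$ along direction of slope $i/n$, vertex $j$ at the lattice point closest to distance proportional to $j\cdot n$) and check both the resolution rule and the wedge-disjointness, then compute the bounding box.

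The step I expect to be the main obstacle is making the "a segment from outside $W_i$ cannot reach the leg edges of $L_i$" claim fully rigorous while simultaneously respecting the integer-grid resolution rule: thin wedges and long legs pull in opposite directions from the unit-separation requirement, and one must be careful that after snapping to the grid the legs are still in convex position and still confined to their wedges. I would handle this by first doing the construction in the reals with a clean parameter $\varepsilon$, proving crossing-freeness there, and only then scaling by a $\mathrm{poly}(n)$ factor and snapping, arguing the perturbation is too small to destroy convexity or wedge-containment — this is where the cubic rather than quadratic area enters.
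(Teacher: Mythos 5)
Your overall instinct --- legs occupying consecutive arcs of a convex chain, non-$S$ edges becoming chords --- is the right one, but as written the plan has a gap that is not a technicality, and it concerns exactly the cases you never analyze: the edges incident to the center $u$. You place $u$ at the origin with the legs fanning out into wedges around it. Then the $S$-edges $(u,w_1^i)$ from the center to the first vertex of each leg run from a point that is, in general, interior to the convex hull of the remaining vertices out to that hull. A non-$S$ chord between a vertex of $W_i$ and a vertex of $W_j$ lives inside that hull and can perfectly well cross such a radial $S$-edge; nothing in your wedge-disjointness or ``legs hug one side of the wedge'' conditions prevents this. Similarly, you never argue that a non-$S$ edge from $u$ to a vertex deep inside a leg avoids the earlier edges of that same leg; if a leg is laid out roughly radially inside a thin wedge with apex $u$, such an edge either overlaps or crosses the leg. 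Finally, your claim that an inter-wedge segment ``enters $W_i$ above the whole leg'' holds only if the per-leg arcs jointly form a single convex polygon; independent little arcs, one per wedge, do not guarantee it (a chord between two far-out vertices of $L_1$ and $L_3$ can slice through $L_2$).

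The paper's proof closes all three holes with one choice: all $n-1$ non-center vertices are placed on a single convex curve (the vertex at position $j$ in the concatenated leg order goes to $(j^2,j)$), so every leg edge is an edge of one global convex polygon $P$ and every non-$S$ edge between non-center vertices is a chord lying inside $P$; crucially, $u$ is placed \emph{outside} $P$, at $(0,n-2)$, a point from which every other vertex is visible. Visibility means every segment from $u$ meets $P$ only at its far endpoint, so edges incident to $u$ (whether in $S$ or not) neither cross the boundary edges of $P$ nor get crossed by chords interior to $P$. This placement is also already an integer-grid drawing of size $(n-2)^2\times(n-2)=O(n^3)$, with no perturbation or snapping step --- which is precisely where your proposal anticipated its main technical difficulty. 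If you constrain your wedges to span less than a half-plane, force the per-leg arcs onto one common convex curve, and move $u$ outside that curve to a point seeing all of it, your construction collapses to the paper's; without those constraints it does not prove the theorem.
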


\begin{figure}[tb!]
\begin{center}
\includegraphics[width=0.8\textwidth]{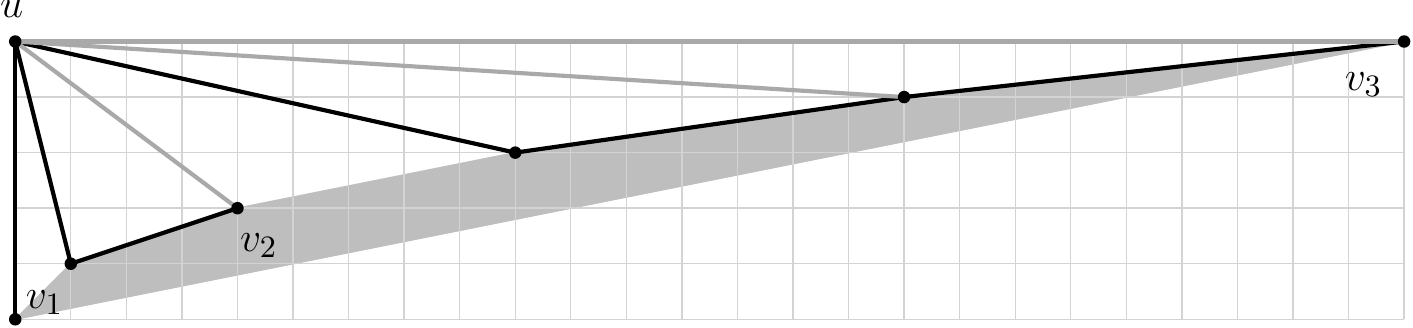}
\caption{Illustration of the drawing construction of 
Theorem~\ref{th:spanning-spider}. The thick edges belong to the spider. The 
edges of $G$ not incident to $u$ are drawn in the gray convex 
region.}\label{fi:spanning-spider}
\end{center}
\end{figure}

\begin{proof}
Let $u$ be the center of $S$ and let $\pi_1, \pi_2, \dots, \pi_k$ be the legs 
of $S$. Also, denote by $v_i$ the vertex of degree one of leg $\pi_i$ $(1 \leq i 
\leq k)$. Order the vertices of $S$ distinct from $u$ such that: $(i)$ the 
vertices of each $\pi_i$ are ordered in the same way they appear in the simple 
path of $S$ from $u$ to $v_i$; $(ii)$ the vertices of $\pi_i$ precede those of 
$\pi_{i+1}$ $(1 \leq i \leq k-1)$. If $v$ is the vertex at position $j$ $(0 \leq 
j \leq n-2)$ in the ordering defined above, draw $v$ at coordinates $(j^2,j)$. 
Finally, draw $u$ at coordinates $(0,n-2)$. Refer to 
Fig.~\ref{fi:spanning-spider} for an illustration.
With this strategy, all vertices of $S$ are in convex position, and they are 
all visible from $u$ in such a way that no edge incident to $u$ can cross other 
edges of $\Gamma$. Hence, the edges of $S$ do not cross other edges in $\Gamma$. 
The area of $\Gamma$ is $(n-2)^2 \times (n-2) = O(n^3)$ and $\Gamma$ is 
constructed in linear time.
\end{proof}

%%%%%%%%%%%%%%%%%%%%%%%%%%%%%%%%%%%%%
%%%%% CATERPILLAR
%%%%%%%%%%%%%%%%%%%%%%%%%%%%%%%%%%%%%
The following algorithm computes a straight-line compatible drawing of $\langle G,S 
\rangle$ when $S$ is a spanning caterpillar. 
Theorem~\ref{th:spanning-caterpillar} proves its correctness, time and area 
requirements. Although the drawing area is still polynomial, the layout is not 
a grid drawing.

The basic idea of the algorithm is as follows. It first places the spine vertices of the caterpillar in convex position, along a quarter of circumference. Then, it places the leaf vertices inside the convex polygon formed by the spine vertices, in such a way that they also suitably lie in convex position. With this strategy an edge of the caterpillar will be outside the inner polygon formed by the leaf vertices, and hence it will not cross any edge that connects two leaf vertices. Also, the inner polygon is chosen sufficiently close to the outer polygon formed by the spine vertices in order to guarantee that the edges of the caterpillar are never crossed by other edges incident to the spine vertices. We now formally describe the algorithm.

\medskip\noindent{\bf Algorithm}~\textsc{Straight-line-Caterpillar}. Denote by 
$u_1, u_2, \dots, u_k$ the vertices of the spine of $S$. Also, for each spine 
vertex $u_i$ $(1 \leq i \leq k)$, let $v_{i1}, \dots, v_{in_i}$ be its leaves in 
$S$ (refer to the bottom image in Fig.~\ref{fi:spanning-caterpillar}). The 
algorithm temporarily adds to $S$ and $G$ some dummy vertices, which will be 
removed in the final drawing. Namely, for each $u_i$, it attaches to $u_i$ two 
dummy leaves, $s_i$ and $t_i$. Also, it adds a dummy spine vertex $u_{k+1}$ 
attached to $u_k$ and a dummy leaf $s_{k+1}$ to $u_{k+1}$ (see the top image in 
Fig.~\ref{fi:spanning-caterpillar}). Call $G'$ and $S'$ the new graph and the 
new caterpillar obtained by augmenting $G$ and $S$ with these dummy vertices. 

\begin{figure}[tb!]
  \centering
  	\subfigure[]{
    \includegraphics[height=0.5\textwidth] {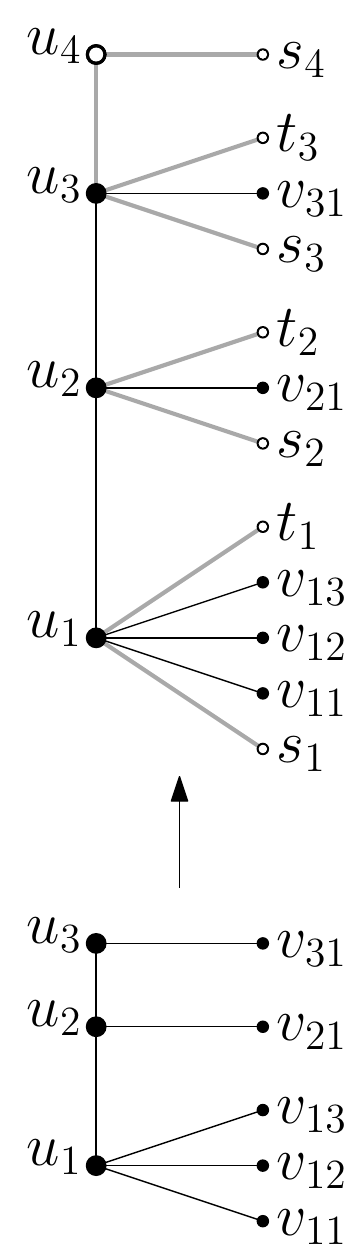}
    \label{fi:spanning-caterpillar}
	} 
    \hspace{0.05\textwidth}
  \subfigure[]{
    \includegraphics[height=0.5\textwidth] {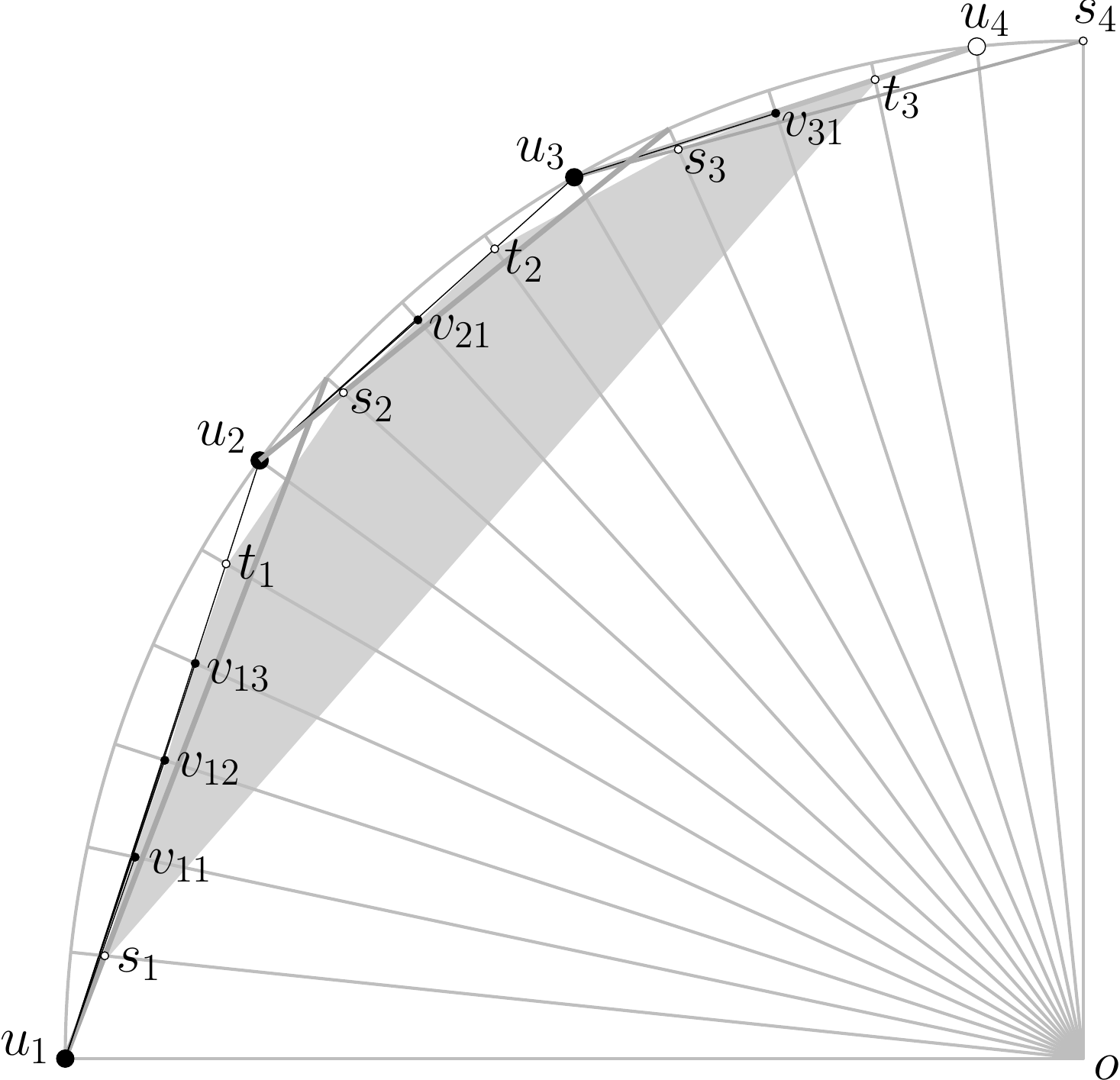}
    \label{fi:spanning-caterpillar-drawing}
	} 
    \hspace{0.05\textwidth}
  \subfigure[]{
    \includegraphics[height=0.5\textwidth] {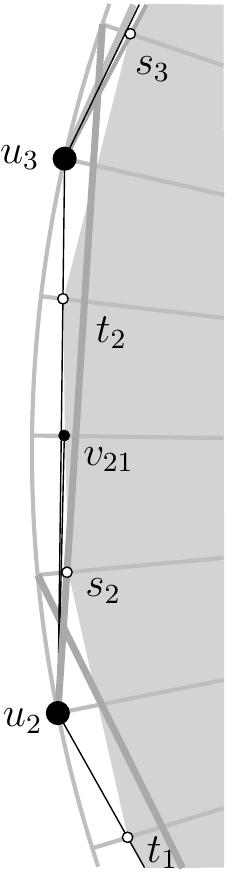}
    \label{fi:spanning-caterpillar-zoomed}
	} 
\caption{Illustration of Algorithm~\textsc{Straight-line-Caterpillar}: (a) a 
caterpillar $S$ and its augmented version $S'$; (b) a drawing of $S'$; edges of 
the graph connecting leaves of $S$ are drawn in the gray (convex) region; (c) 
enlarged detail of the picture (b).
}
\label{fi:caterpillar}
\end{figure}

The construction of a drawing $\Gamma'$ of $G'$ is illustrated in Fig.~\ref{fi:spanning-caterpillar-drawing}. 
Consider a quarter of circumference $C$ with center $o$ and radius $r$. Let $N$ 
be the total number of vertices of $G'$. Let $\{p_1, p_2, \dots, p_N\}$ be $N$ 
equally spaced points along $C$ in clockwise order, where $\overline{op_1}$ and 
$\overline{op_N}$ are a horizontal and a vertical segment, respectively. For 
each $1 \leq i \leq k$, consider the ordered list of vertices $L_i = \{u_i, s_i, 
v_{i1}, \dots v_{in_i}, t_i\}$, and let $L$ be the concatenation of all $L_i$. 
Also, append to $L$ the vertices $u_{k+1}$ and $s_{k+1}$, in this order. Clearly 
the number of vertices in $L$ equals $N$. For a vertex $v \in L$, denote by 
$j(v)$ the position of $v$ in $L$. Vertex $u_i$ is drawn at point $p_{j(u_i)}$ 
$(1 \leq i \leq k)$; also, vertices $u_{k+1}$ and $s_{k+1}$ are drawn at points 
$p_{N-1}$ and $p_{N}$, respectively. Each leaf $v$ of $S'$ will be suitably 
drawn along radius $\overline{op_{j(v)}}$ of $C$. More precisely, for any $i \in 
\{1, \dots, k$\}, let $a_i$ be the intersection point between segments 
$\overline{p_{j(u_i)}p_{j(s_{i+1})}}$ and $\overline{op_{j(s_i)}}$, and let 
$b_i$ be the intersection point between segments 
$\overline{p_{j(u_i)}p_{j(u_{i+1})}}$ and $\overline{op_{j(t_i)}}$. Vertices 
$s_i$ and $t_i$ are drawn at points $a_i$ and $b_i$, respectively. Also, let 
$A_i$ be the circular arc that is tangent to 
$\overline{p_{j(u_i)}p_{j(u_{i+1})}}$ at point $b_i$, and that passes through 
$a_i$; vertex $v_{ih}$ is drawn at the intersection point between $A_i$ and 
$\overline{op_{j(v_{ih})}}$ $(1 \leq h \leq n_i)$. 

Once all vertices of $G'$ are drawn, each edge of $G'$ is drawn in $\Gamma'$ as 
a straight-line segment between its end-vertices. Drawing $\Gamma$ is obtained 
from $\Gamma'$ by deleting all dummy vertices and their incident edges.

\begin{theorem}\label{th:spanning-caterpillar}
Let $G$ be a graph with $n$ vertices and $m$ edges, and let $S$ be a spanning 
caterpillar of $G$. There exists a straight-line \cdrawing $\Gamma$ of $\langle 
G,S \rangle$. Drawing $\Gamma$ can be computed in $O(n+m)$ time in the real RAM 
model\footnote{We also assume that basic trigonometric functions are executed in 
constant time.} and has $O(n^2)$ area. 
\end{theorem}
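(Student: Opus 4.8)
The plan is to verify that Algorithm~\textsc{Straight-line-Caterpillar} is correct, and then bound its running time and area. The correctness argument splits into three claims, mirroring the three geometric ideas sketched before the algorithm. First I would show that the spine vertices $u_1,\dots,u_k$ (together with the dummy $u_{k+1}$), placed at consecutive points $p_{j(u_i)}$ along the quarter-circumference $C$, are in convex position, so that the spine path of $S$ lies on the boundary of a convex polygon. Second, I would show that every leaf $v_{ih}$ of $u_i$, drawn on the arc $A_i$ at the intersection with radius $\overline{op_{j(v_{ih})}}$, lies strictly inside the convex region bounded by the spine polygon and, crucially, that the set of all leaf vertices is itself in convex position; the key geometric fact here is that each $A_i$ is chosen tangent to $\overline{p_{j(u_i)}p_{j(u_{i+1})}}$ at $b_i$ and through $a_i$, so consecutive arcs $A_i$, $A_{i+1}$ join up $C^1$-smoothly at the shared point $a_{i+1}=b_i$ (this is exactly why the dummy leaves $s_i,t_i$ were introduced), producing one globally convex curve through all leaves. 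Third, I would argue that each caterpillar edge $(u_i,v_{ih})$ lies outside the inner polygon formed by the leaves, hence cannot be crossed by any leaf–leaf edge of $G$; and that, because the arc $A_i$ is squeezed between the chord $\overline{p_{j(u_i)}p_{j(u_{i+1})}}$ and the chord through $a_i$, each spine edge $(u_i,u_{i+1})$ and each spine–leaf edge is not crossed by any edge incident to a spine vertex either. Removing the dummy vertices and their incident edges only deletes edges, so it cannot introduce crossings on $S$; thus $\Gamma$ is a straight-line compatible drawing of $\langle G,S\rangle$.

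For the running time, I would observe that $N = n + O(k) = O(n)$, that the points $p_i$ and all intersection points $a_i,b_i$ and arc–radius intersections are computed by a constant number of arithmetic and trigonometric operations each (legitimate in the real RAM model with the stated assumption), and that each edge is then drawn in $O(1)$ time, for a total of $O(n+m)$.

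For the area bound I would reason as follows: all vertices lie inside the sector of radius $r$, so the bounding box has area $O(r^2)$; the task is to choose $r = O(n)$ while respecting the unit-distance resolution rule. The closest pair of drawn points are either two consecutive spine points on $C$, at distance $\Theta(r/N)$, or a spine vertex and its nearest leaf (or two nearest leaves on the same arc $A_i$), whose separation is governed by how far the arc $A_i$ dips below the chord $\overline{p_{j(u_i)}p_{j(u_{i+1})}}$; a short computation shows this sagitta, and the spacing between consecutive radii hitting $A_i$, is $\Theta(r/N^2)$ in the worst case. Forcing this minimum distance to be at least one unit gives $r = \Theta(N^2) = \Theta(n^2)$, hence area $O(r^2) = O(n^4)$ — so to reach the claimed $O(n^2)$ I must be more careful, placing the spine points not uniformly on $C$ but with spacing tuned so that the induced leaf spacing is uniform, which trades the quadratic blow-up for a linear one and yields $r = \Theta(n)$ and area $O(n^2)$. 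The main obstacle I anticipate is precisely this area estimate: proving the convex-position and non-crossing claims is geometrically routine once the tangent-arc construction is understood, but pinning down the worst-case resolution among the leaf vertices — and showing it can be made $\Omega(r/n)$ rather than $\Omega(r/n^2)$ — is the delicate part and the place where the $O(n^2)$ (rather than a larger polynomial) genuinely has to be earned.
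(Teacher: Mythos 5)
Your correctness plan follows the same outline as the paper's proof (spine vertices in convex position, leaf vertices in convex position, then a case analysis on the possible crossing edges), but the mechanism you cite for the global convexity of the leaf polygon is not what the construction provides: the arcs $A_i$ and $A_{i+1}$ do \emph{not} share a point, since $b_i$ lies on the radius of $t_i$ (position $j(u_{i+1})-1$) while $a_{i+1}$ lies on the radius of $s_{i+1}$ (position $j(u_{i+1})+1$), with the radius of the spine vertex $u_{i+1}$ strictly between them. There is no $C^1$ joining of consecutive arcs, so convexity at the junction between the leaves of $u_i$ and those of $u_{i+1}$ must be argued separately; the paper does this by comparing the slopes of the two segments meeting at $s_{i+1}$, reducing to the extremal case $n_i=n_{i+1}=0$ and verifying by trigonometry that the relevant slope ratio exceeds $1$. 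As stated, your plan would not close this step.

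The more serious gap is in the area analysis. Your claim that the resolution can degrade to $\Theta(r/N^2)$, and the resulting proposal to re-space the spine points non-uniformly, are both mistaken (and the proposed fix is in any case left unproven, so the $O(n^2)$ bound is not actually established in your write-up). The quantity $\Theta(r/N^2)$ you computed is essentially the distance from a leaf on $A_i$ to the spine chord $\overline{p_{j(u_i)}p_{j(u_{i+1})}}$ above it; this is a vertex-to-edge distance, which does not enter the resolution rule adopted in the paper (minimum distance between points hosting vertices or bends). For vertex-to-vertex distances, observe that every vertex of $L$ sits on its \emph{own} radius $\overline{op_{j(v)}}$, and every placed point lies between the quarter-circle $C$ and its chord $\overline{p_1p_N}$, hence at distance at least $r/\sqrt{2}$ from $o$. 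Two points at distance at least $r/\sqrt{2}$ from $o$ on radii separated by the angle $\alpha=\frac{\pi}{2(N-1)}$ are at distance $\Omega(r\alpha)=\Omega(r/N)$ from each other, irrespective of how closely the arcs hug the chords, and non-consecutive positions are only farther apart. Hence $r=\Theta(N)=\Theta(n)$ already satisfies the unit-resolution requirement for the algorithm exactly as stated, giving area $O(n^2)$; this is precisely the paper's argument, which lower-bounds $d_{\min}$ by the spacing of the intersection points $p'_i$ of the radii with the chord $\overline{p_1p_N}$, a spacing that is $\Theta(r/N)$ at its minimum near the midpoint of that chord.
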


\begin{proof}
Compute $\Gamma$ by using Algorithm~\textsc{Straight-line-Caterpillar}. In the following we first 
prove that $\Gamma$ is a straight-line compatible drawing of $\langle G,S 
\rangle$, and then we analyze time complexity and area requirement. We adopt the 
same notation used in the description of the algorithm.\medskip

\noindent\textsc{Correctness.} We have to prove that in $\Gamma$ the edges of 
$S$ are never crossed. For a line $\ell$ denote by $s(\ell)$ its slope. Our 
construction places all spine vertices of $S'$ (and hence of $S$) in convex 
position. We claim that also all the leaves of $S'$ are in convex position. 
Indeed, this is clearly true for the subset of leaves of each $u_i$ $(1 \leq i 
\leq k)$, because this subset is drawn on a circular arc $A_i$; also, for any $i 
\in \{1, \dots, k-1\}$ consider the poly-line connecting the leaves of two 
consecutive spine vertices $u_i$, $u_{i+1}$, in the order they appear in $L$. 
This poly-line is convex if and only if the two segments incident to $s_{i+1}$ 
form an angle $\lambda$ smaller than $\pi$ on the side of the plane where the 
origin $o$ lies. In particular, let $\ell_1$ be the line through $t_i$ and 
$s_{i+1}$, and let $\ell_2$ be the line through $s_{i+1}$ and $v$, where $v$ 
coincides with $v_{{(i+1)}1}$ if such a vertex exists, while $v$ coincides with 
$t_{i+1}$ otherwise. Angle $\lambda < \pi$ if $s(\ell_1) > s(\ell_2)$. Denote by 
$c$ the intersection point between the chord $\overline{u_{i+1}u_{i+2}}$ and the 
radius $\overline{op_{j(v)}}$, and denote by $\ell_3$ the line through $s_{i+1}$ 
and $c$, we have $s(\ell_3) \geq s(\ell_2)$ by construction; then it suffices to 
show that $s(\ell_1) > s(\ell_3)$. For any fixed $N$, $s(\ell_3)$ is maximized 
when $c$ is as close as possible to $s_{i+1}$, i.e., when $n_{i+1}=0$; similarly 
$s(\ell_1)$ is minimized when $t_i$ is as close as possible to $s_{i+1}$, i.e., 
when $n_i=0$.       
For $n_i=n_{i+1}=0$ it can be verified by trigonometry that 
$\frac{s(\ell_1)}{s(\ell_3)} > 1$ (namely, this ratio tends to 1.23 when $N$ 
tends to infinity). Hence, the leaves of $S'$ except $s_{k+1}$ form a convex 
polygon $P$, which proves the claim.

Now, since by construction the edges of $S$ are all outside $P$ in $\Gamma$, 
these edges cannot be crossed by edges of $G$ connecting two leaves of $S$. It 
is also immediate to see that an edge of $S$ cannot be crossed by another edge 
of $S$. It remains to prove that an edge of $S$ cannot be crossed by an edge of 
$G$ connecting either two non-consecutive spine vertices or a leaf of $S$ to a 
spine vertex of $S$. 

There are two kinds of edges in $S$. Edges $(u_i,u_{i+1})$, connecting two 
consecutive spine vertices, and edges $(u_i,v_{ih})$, connecting a spine vertex 
to its leaves. Since by construction $\Gamma$ is totally drawn inside the closed 
polygon formed by the spine vertices of $S$ (recall that $u_{k+1}$ and $s_{k+1}$ 
are dummy vertices, and then they do not belong to $\Gamma$), edges 
$(u_i,u_{i+1})$ cannot be crossed in $\Gamma$.    
Now, consider an edge $(u_i,v_{ih})$. Obviously, it cannot be crossed by an 
edge $(u_i, u_j)$, because two adjacent edges cannot cross in a straight-line 
drawing; yet, it cannot be crossed by an edge $(u_j, u_z)$ where $j < z$ and 
$j,z \neq i$. Indeed, if $i < j$ or $i > z$ then there is a line $\ell$ through 
$o$ such that $(u_i,v_{ih})$ completely lies in one of the two half planes 
determined by $\ell$ and $(u_j,u_z)$ completely lies in the other half plane; 
also, if $j < i < z$, then edge $(u_i,v_{ih})$ completely lies in the open 
region delimited by $(u_j,u_z)$ and $C$. We finally show that $(u_i,v_{ih})$ 
cannot be crossed by any edge $(u_j,v_{df})$, where $j \neq i$ and $v_{df} \neq 
v_{ih}$. Indeed, if $d > i$ and $j > i$, or $d < i$ and $j < i$, then 
$(u_i,v_{ih})$ and $(u_j,v_{df})$ are completely separated by a line through 
$o$. 
Also, if $d < i$ and $j > i$ (or $d > i$ and $j < i$), edge $(u_i,v_{ih})$ lies 
completely outside the triangle with vertices $o, u_j, v_{df}$, thus it cannot 
cross edge $(u_j,v_{df})$.      
     
\medskip

\noindent\textsc{Time and area requirement.} It is immediate to see that the 
construction of $\Gamma'$ (and then of $\Gamma$) can be executed in linear time, 
in the real RAM model. It remains to prove that the area of $\Gamma$ is 
$O(n^2)$. Assume that $o$ coincides with the origin of a Cartesian coordinate 
system, so that $p_1$ has coordinates $(-r,0)$ and $p_N$ has coordinates 
$(0,r)$. We need to estimate the minimum distance $d_{\min}$ between any two 
points of $\Gamma$. According to our construction, the vertex at position $i$ in 
$L$ is drawn on a point $q_i$ along radius $\overline{op_i}$, and $d_{\min}$ 
corresponds to the minimum distance between any two points $q_i$ and $q_{i+1}$ 
$(1 \leq i \leq N-1)$. Also, denote by $p'_i$ the intersection point between 
radius $\overline{op_i}$ and the chord $\overline{p_1p_N}$, point $q_i$ is 
in-between $p_i$ and $p'_i$ along $\overline{op_i}$; this implies that 
$d'_{\min} \leq d_{\min}$, where $d'_{\min}$ is the minimum distance between any 
two points $p'_i$ and $p'_{i+1}$.
Now, it is immediate to observe that $d'_{\min}$ equals the length of segment 
$\overline{p'_{\lceil N/2 \rceil}p'_{\lceil N/2 \rceil +1}}$. 
Let $p'$ be the middle point of chord $\overline{p_1p_N}$; $p'$ has coordinates 
$(\frac{-r}{2}, \frac{r}{2})$. Point $p'$ coincides with $p'_{\lceil N/2 
\rceil}$ if $N$ is odd, while it is equidistant to $p'_{\lceil N/2 \rceil}$ and 
$p'_{\lceil N/2 \rceil +1}$ if $N$ is even. Hence, denoted by $d'$ the distance 
between $p'$ and $p'_{\lceil N/2 \rceil +1}$, we have that $d' \leq d'_{\min} 
\leq d_{\min}$. Now, let $\alpha=\frac{\pi}{2(N-1)}$ be the angle at $o$ formed 
by any two radii $\overline{op_i}$, $\overline{op_{i+1}}$, and let $\beta$ be 
the angle at $o$ formed by $\overline{op'}$ and $\overline{op'_{\lceil N/2 
\rceil +1}}$. Clearly $\beta = \alpha$ if $N$ is odd, while $\beta = 
\frac{\alpha}{2}$ if $N$ is even. 
Also, since $\overline{op'}$ forms a right angle with $\overline{p'p'_{\lceil 
N/2 \rceil +1}}$, and since the length of $\overline{op'}$ is 
$\frac{r}{\sqrt{2}}$, we have that $d' = \frac{r \tan\beta}{\sqrt{2}}$.   
Hence, if we let $d'=1$ (which guarantees that $d_{\min} \geq 1$), we obtain $r 
= \frac{\sqrt{2}}{\tan\beta}$.
Since $\tan\beta > \beta$ for $\beta \in (0,\frac{\pi}{2})$, then $r < 
\frac{\sqrt{2}}{\beta}$, which implies $r = O(N)$, because $\beta = 
\theta(\frac{1}{N})$ . Thus, the area of $\Gamma$ is $O(N^2)=O(n^2)$.
\end{proof}

%%%%%%%%%%%%%%%%%%%%%%%%%%%%%%%%%%%
%%%   BAD TREES
%%%%%%%%%%%%%%%%%%%%%%%%%%%%%%%%%%%

The following lemmas show that, unfortunately, Theorem~\ref{th:spanning-spider} and
Theorem~\ref{th:spanning-caterpillar} cannot be extended to every spanning tree
$S$, that is, there exist pairs $\langle G,S \rangle$ that do not admit straight-line \cdrawings, even if $S$ is a ternary or a binary tree.

\begin{lemma}\label{le:bad-trees-1}
Let $G$ be the complete graph on $13$ vertices and let $S$ be a complete rooted
ternary spanning tree of $G$. There is no straight-line \cdrawing of $\langle G,S \rangle$.
\end{lemma}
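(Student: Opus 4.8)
The plan is to assume, for contradiction, that a straight-line \cdrawing $\Gamma$ of $\langle G,S\rangle$ exists, and to extract a geometric impossibility from the heavy branching of $S$. Name the root $r$, its children $a,b,c$, and the children of $x\in\{a,b,c\}$ by $x_1,x_2,x_3$; the edges of $S$ are the three \emph{spokes} $ra,rb,rc$ and the nine \emph{legs} $xx_i$, and compatibility means precisely that none of these twelve segments is crossed by any other edge of $K_{13}$.

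The basic tool, used repeatedly, is a local fact about claws. If $\{w,p_1,p_2,p_3\}$ is a set of four vertices for which $S$ contains the three edges $wp_1,wp_2,wp_3$, then these four points are \emph{not} in convex position in $\Gamma$: four points in convex position span a straight-line $K_4$ whose unique crossing is between its two diagonals, one of which is incident to $w$ (in every cyclic order of four points, $w$ is ``opposite'' exactly one of the others) and hence is a claw edge, contradicting compatibility. Thus one of the four points lies strictly inside the triangle of the other three (after a generic perturbation we may assume no three of the $13$ points are collinear). I would apply this first to $\{r,a,b,c\}$, a claw centered at $r$, which — up to the symmetry permuting $a,b,c$ — leaves two cases: (A) $r$ lies inside $\triangle abc$, or (B) some child, say $a$, lies inside $\triangle rbc$.

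In both cases the spokes cut the ambient triangle into cells, and the next step is to localize the nine grandchildren. Two deductions do most of the work: (i) a leg $xx_i$ is a tree edge, hence cannot cross any spoke, and since it emanates from the corner $x$ of the cell structure this forbids $x_i$ from sitting in the cell that is separated from $x$ by a spoke (in case (A), for instance, a grandchild of $a$ lying inside $\triangle abc$ must lie in $\triangle rab\cup\triangle rca$, and symmetrically for $b,c$); (ii) the claw fact applied to $\{x,x_1,x_2,x_3\}$ and to mixed quadruples such as $\{x,r,x_i,x_j\}$ — also claws, centered at $x$ — forbids several relative positions inside each subtree, while still allowing grandchildren to lie outside $\triangle abc$. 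Feeding in the remaining edges of $K_{13}$, namely the grandchild--grandchild edges $x_iy_j$ and the edges $rx_i$, should then force in each surviving configuration an explicit crossing of some spoke or leg; a convenient bookkeeping device for the endgame is the partition of a neighbourhood of $r$ into the three angular wedges cut out by the spokes, into which the nine segments $rx_i$ must fall, so that some wedge receives at least three of them and the corresponding grandchildren cannot all be reattached to their parents without violating~(i).

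The routine ingredients — the planar $K_4$ fact, the ``a leg cannot cross a spoke'' observations, and the angular counting — are short; the genuine difficulty, and the step I expect to dominate the argument, is keeping the middle stage finite and readable, since a priori each grandchild may lie in any of several regions and the three subtrees interact through the edges $x_iy_j$. I would tame this by exploiting symmetry aggressively (the $6$-fold symmetry on $\{a,b,c\}$ together with the $3!$-fold symmetry on each grandchild triple), by recording only the combinatorial data (which wedge each spoke, leg, and cross-edge enters at $r,a,b,c$) rather than coordinates, and — if it holds, which I would verify first — by proving an auxiliary claim that at most one subtree $T_x$ can have a vertex inside the hull triangle, collapsing each of (A) and (B) to essentially a single configuration to refute.
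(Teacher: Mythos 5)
Your opening move is sound and is in fact the same as the paper's: the ``claw fact'' (three $S$-edges at a common vertex $w$ force the four endpoints out of convex position, since otherwise the $K_4$ diagonal through $w$ is an $S$-edge crossed by the other diagonal, which is present because $G=K_{13}$) is correct, and applying it to $\{r,a,b,c\}$ yields exactly the paper's Case~1~/~Case~2 dichotomy (your (B) and (A)). The local tools you list --- a leg cannot cross a spoke, sub-claws such as $\{x;r,x_i,x_j\}$ --- are also the right ingredients.

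The gap is that the contradiction is never actually derived: everything after the dichotomy is conditional (``should then force \dots an explicit crossing''), and the auxiliary claim that would collapse the case analysis is prefaced by ``if it holds, which I would verify first.'' Moreover, the two endgame devices you propose do not obviously close the argument. The wedge pigeonhole on the nine segments $rx_i$ is vacuous as a constraint: each $rx_i$ shares the endpoint $r$ with the spokes, so it lies in some wedge automatically, and three such segments landing in one wedge does not by itself produce a crossing with an $S$-edge. And the interaction of the three subtrees through grandchild--grandchild edges, which you identify as the explosion to be tamed, is not needed at all. What closes the proof is a much more local placement argument run separately in each case, using a single branch vertex and two or three of its children: in case (B), one first shows the child $u_1$ must be placed so that $u$ lies in $\triangle(u_1,r,w)$ or $\triangle(u_1,r,v)$, then that a second child $u_2$ traps $u$ and the third child $u_3$ inside $\triangle(u_1,r,u_2)$, whose three sub-cells are all blocked for $u_3$ by the $S$-edges $uu_1$, $ur$, $uu_2$; in case (A), one shows no child of $u$ can lie inside $\triangle(u,v,w)$ at all (the edges $(u,u_1)\in S$ and $(u_1,v),(u_1,w)\in G$ against the spokes at $r$ rule out every sub-cell --- stronger than your ``at most one subtree inside the hull'' claim), and then a second child $u_2$ has no admissible region in $\triangle(u_1,v,w)$, the remaining subcase reducing to this one at another branch vertex. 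Until you carry out an analysis of this kind, you have a plan, not a proof.
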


\begin{figure}[tb!]
  \centering
  	\subfigure[]{
    \includegraphics[width=0.35\textwidth] {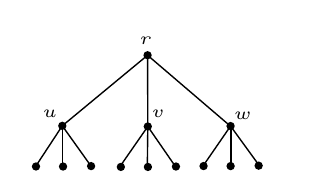}
    \label{fi:tree-13}
	}
    \subfigure[]{
    \includegraphics[width=0.35\textwidth] {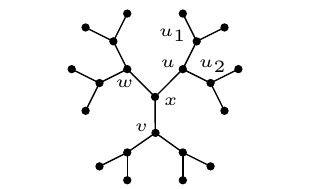}
    \label{fi:tree-22}
	}
\caption{
(a) The complete rooted ternary tree with $13$ vertices in the statement of 
Lemma~\ref{le:bad-trees-1}.
(b) The complete binary tree with $22$ vertices in the statement 
of Lemma~\ref{le:bad-trees-2}.}
\end{figure}

\begin{proof}
We show by case analysis that there is no straight-line \cdrawing of $\langle G,S \rangle$. 
Let $r$ be the root of $S$ (see Fig.~\ref{fi:tree-13}). Note that $r$ is the only vertex of $S$ with
degree $3$. Let $u$, $v$, $w$ be the three neighbors of $r$ in $S$. Two are the
cases, either one of $u$, $v$, $w$ (say $u$) lies inside triangle
$\triangle(r,v,w)$ (Case $1$, see Fig.~\ref{fi:13-case1}), or $r$
lies inside triangle $\triangle(u,v,w)$ (Case $2$, see Fig.~\ref{fi:13-case2}).

In Case $1$, consider a child $u_1$ of $u$. In any straight-line \cdrawing of
$\langle G,S \rangle$, $u_1$ is placed in such a way that $u$ lies inside
either triangle $\triangle(u_1,r,w)$ or triangle $\triangle(u_1,r,v)$, assume
the former (see Fig.~\ref{fi:13-case1-u1}). Then, consider another child $u_2$
of $u$; in order for edge $(u,u_2)$ not to cross any edge, also $u_2$ has to lie
inside $\triangle(u_1,r,w)$, in such a way that both $u$ and $u_1$ lie inside
triangle $\triangle(u_2,r,v)$. This implies that $u$ lies inside
$\triangle(u_1,r,u_2)$ (see Fig.~\ref{fi:13-case1-u2}), together with its
last child $u_3$. However, $u_3$ cannot be placed in any of the three triangles
in which $\triangle(u_1,r,u_2)$ is partitioned by the edges (of $S$) connecting
$u$ to $u_1$, to $r$, and to $u_2$, respectively, without introducing any
crossing involving edges of $S$. This concludes the analysis of Case $1$.

In Case $2$, note that any child $u_1$ of $u$ cannot be drawn inside
$\triangle(u,v,w)$, as otherwise one of the edges of $S$ incident to $r$ would
be crossed by either $(u,u_1) \in S$, $(u_1,v) \in G$ or $(u_1,w) \in G$. We
further distinguish two cases, based on whether $u$ and $r$ lie inside triangle
$\triangle(u_1,v,w)$ (Case $2.1$, see Fig.~\ref{fi:13-case21}), or $r$ and
one of $v$ and $w$ (say $w$) lies inside triangle $\triangle(u_1,u,v)$ (Case
$2.2$, see Fig.~\ref{fi:13-case22}).
In Case $2.1$, consider another child $u_2$ of $u$. Note that, $u_2$ has
to lie inside $\triangle(u_1,v,w)$, due to edge $(u,u_2) \in S$. However, $u_2$
cannot be placed in any of the three regions in which $\triangle(u_1,v,w)$ is
partitioned by paths composed of edges of $S$ connecting $r$ to $u_1$, to $v$,
and to $w$, respectively. To conclude the proof, note that, if Case $2.2$ holds
for the children of vertex $u$, then Case $2.1$ must hold for the children of
vertex $w$, as all of them must lie inside $\triangle(u_1,u,v)$.
\end{proof}

\begin{figure}[tb!]
  \centering
  \subfigure[]{
    \includegraphics[width=0.25\textwidth] {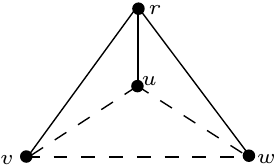}
    \label{fi:13-case1}
	} 
    %\hspace{0.05\textwidth}
  \subfigure[]{
    \includegraphics[width=0.25\textwidth] {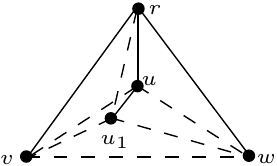}
    \label{fi:13-case1-u1}
	}
    %\hspace{0.05\textwidth}
  \subfigure[]{
    \includegraphics[width=0.25\textwidth] {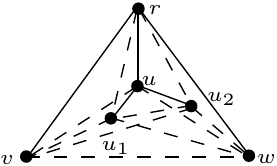}
    \label{fi:13-case1-u2}
	}
	\\
  \subfigure[]{
    \includegraphics[width=0.25\textwidth] {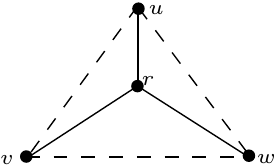}
    \label{fi:13-case2}
	} 
    %\hspace{0.05\textwidth}
    \subfigure[]{
    \includegraphics[width=0.25\textwidth] {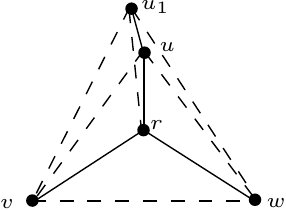}
    \label{fi:13-case21}
	}
    %\hspace{0.05\textwidth}
    \subfigure[]{
    \includegraphics[width=0.25\textwidth] {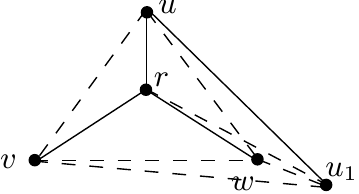}
    \label{fi:13-case22}
	}
\caption{
Illustration for Lemma~\ref{le:bad-trees-1}:
(a) Case $1$ in the proof; $u$ lies inside 
$\triangle(r,v,w)$. (b) Placement of $u_1$. (c) Placement of $u_2$.
(d) Case $2$ in the proof; $r$ lies inside 
$\triangle(u,v,w)$. (e) Case $2.1$. (f) Case $2.2$. 
}
\end{figure}

\begin{lemma}\label{le:bad-trees-2}
Let $G$ be the complete graph on $22$ vertices and let $S$ be a complete
unrooted binary spanning tree of $G$. There is no straight-line
\cdrawing of $\langle G,S \rangle$. 
\end{lemma}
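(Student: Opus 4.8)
The plan is to reduce Lemma~\ref{le:bad-trees-2} to Lemma~\ref{le:bad-trees-1} by finding, inside a complete unrooted binary tree on $22$ vertices, a configuration that forces the same kind of local obstruction already analyzed for the degree-$3$ vertex of the ternary tree. First I would recall the structure of the complete unrooted binary tree $S$ on $22$ vertices (Fig.~\ref{fi:tree-22}): it has two ``central'' vertices of degree $3$ joined by an edge, and from each of these hang two further degree-$3$ vertices, and so on down to the leaves; every internal vertex except possibly the topmost ones has degree exactly $3$. The key observation is that any degree-$3$ vertex $x$ of $S$ together with its three $S$-neighbors, each of which is itself incident in $S$ to two further private vertices not adjacent to $x$, reproduces exactly the combinatorial data used in the proof of Lemma~\ref{le:bad-trees-1}: a center with three neighbors, each neighbor having (at least) three children in the rooted sense, and $G$ being complete so that all the ``diagonal'' edges $(u_1,v)$, $(u_1,w)$, etc., used in the case analysis are present.

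The main step is therefore to argue that such a vertex $x$ exists with enough ``depth'' around it, i.e., that we can pick $x$ of degree $3$ so that each of its three neighbors has two further neighbors distinct from $x$ and from each other, all $10$ vertices involved being distinct; this is where the specific count of $22$ vertices (rather than $13$) is used, since in the unrooted tree the ``center'' must itself be an internal vertex all of whose neighbors are internal. I would verify by inspection of the tree in Fig.~\ref{fi:tree-22} that such a vertex exists. Once it is fixed, rename it $r$, call its three $S$-neighbors $u,v,w$, and for $u$ let $u_1,u_2,u_3$ denote the remaining two $S$-neighbors of $u$ plus (in the case distinction) one more vertex playing the role of ``third child''; because $S$ is a binary tree, $u$ has degree $3$ and hence exactly two neighbors besides $r$, so here we must be slightly more careful than in Lemma~\ref{le:bad-trees-1}, where the center had three children: in the binary case the vertex $u$ plays the role of both a ``child of $r$'' and, simultaneously, a ``center'' for its own two children. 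I would set this up so that $r$ itself is the analogue of one of the children, exploiting the symmetry of the unrooted tree.

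Concretely, I expect the cleanest route is: treat $r$ as a center with neighbors $u,v,w$; apply verbatim the Case~1 / Case~2 dichotomy of Lemma~\ref{le:bad-trees-1} using only the edges of $S$ incident to $r$ and to $u$, together with the $G$-edges $(u_i,v)$, $(u_i,w)$ — all present since $G$ is complete — to derive that some descendant of $u$ has no admissible face; the only change is that when the earlier proof invoked a ``third child'' $u_3$ of $u$, here I use instead the fact that $u$ is itself adjacent in $S$ to a vertex which in turn has two further $S$-neighbors, pushing the same trapping argument one level deeper. The hard part will be checking that the binary branching (degree $3$ at $u$, not $4$) still yields a contradiction: I must confirm that after $u$'s two genuine children $u_1,u_2$ are placed so as to trap $u$ inside $\triangle(u_1,r,u_2)$ — exactly as in Fig.~\ref{fi:13-case1-u2} — the three subtriangles of $\triangle(u_1,r,u_2)$ each fail to accommodate the next required vertex (a child of $u_1$ or of $u_2$, which exists because the tree has depth large enough around $r$, this being the reason $22$ vertices are needed). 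Assembling these pieces, each of the finitely many placement cases leads to an edge of $S$ being crossed, so no straight-line \cdrawing of $\langle G,S\rangle$ exists.
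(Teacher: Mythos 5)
Your overall skeleton matches the paper's: pick a degree-$3$ vertex of $S$ all of whose neighbors are internal, run the Case~1/Case~2 dichotomy of Lemma~\ref{le:bad-trees-1} around it, and find a substitute for the missing ``third child'' of $u$. You also correctly isolate where the difficulty lies. But the substitute you propose does not work as stated. In Lemma~\ref{le:bad-trees-1} the third child $u_3$ yields a contradiction because the $S$-edge $(u,u_3)$ emanates from $u$, which sits in the \emph{interior} of $\triangle(u_1,r,u_2)$; hence $u_3$ is forced inside that triangle (otherwise $(u,u_3)$ would be crossed by one of the $G$-edges bounding it) and then into one of the three subtriangles, each of which is fatal. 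A child $u_{11}$ of $u_1$, by contrast, is attached by an $S$-edge to the \emph{vertex} $u_1$ of the triangle, so nothing forces $u_{11}$ into $\triangle(u_1,r,u_2)$ at all: it can be placed outside, and the edge $(u_1,u_{11})$ leaves the triangle through its corner without crossing anything. Consequently, showing that ``the three subtriangles each fail to accommodate the next required vertex'' does not close the argument, because that vertex need not lie in any of them. Pushing the trapping ``one level deeper'' therefore requires a genuinely new argument, which your proposal does not supply.

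The paper closes this gap differently. It first observes that Case~2 of the proof of Lemma~\ref{le:bad-trees-1} uses only \emph{two} children of each neighbor of the center, and extracts this as a standalone property ($P1$): a vertex whose three $S$-neighbors are all non-leaves cannot lie inside the triangle spanned by those neighbors. Applying $P1$ to the central vertex $x$ (whose three subtrees contain seven vertices each) places some neighbor $u$ inside $\triangle(x,v,w)$; the Case-1 argument then traps $u$ inside $\triangle(u_1,x,u_2)$; and a second application of $P1$, now to $u$ --- legitimate because $u_1$, $x$, $u_2$ are all internal in the $22$-vertex tree --- forbids exactly that. This second application of $P1$ is the ingredient missing from your plan. (A minor point: the complete unrooted binary tree on $22$ vertices has a single center with three seven-vertex subtrees, $1+3+6+12=22$, not two adjacent centers; this does not affect the existence of the vertex you need, but the structure is worth getting right.)
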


\begin{proof}
First, we claim the following property ($P1$): Let $x$ be a vertex of $G$
such that the neighbors $u,v,w$ of $x$ in $S$ are not leaves of $S$. Then, in
any straight-line \cdrawing of $\langle G,S \rangle$, vertex $x$ lies outside
triangle $\triangle(u,v,w)$.
Observe that property $P1$ directly descends from Case $2$ of the proof of
Lemma~\ref{le:bad-trees-1}, where $x$ plays the role of $r$. Indeed, in that
proof, only two of the children of $u$ (and of $v$ and $w$, simmetrically) were
used in the argument.

Now, consider the only vertex $x$ of $G$ such that each of the tree subtrees of
$S$ rooted at $x$ contains seven vertices (see Fig.~\ref{fi:tree-22}). By $P1$, vertex $x$ lies
outside the triangle $\triangle(u,v,w)$ composed of its neighbors $u$, $v$,
$w$, which implies that one of $u$, $v$, $w$ (say $u$) lies inside triangle
$\triangle(x,v,w)$. As in the proof of Case $1$ of Lemma~\ref{le:bad-trees-1},
with $x$ playing the role of $r$, we observe that in any
straight-line \cdrawing of $\langle G,S \rangle$ in which $u$ lies inside
$\triangle(x,v,w)$, the two neighbors $u_1$ and $u_2$ of $u$ are placed in such
a way that $u$ lies inside $\triangle(u_1,x,u_2)$. 
While in Lemma~\ref{le:bad-trees-1} we used the presence of a fourth neighbor
(a third child) of $u$ to prove the statement, here we can apply $P1$, as $u_1$,
$x$, $u_2$ are not leaves of $S$.
\end{proof}

In light of Lemmas~\ref{le:bad-trees-1} and~\ref{le:bad-trees-2}, it is 
natural to ask whether there are specific subfamilies of spanning trees $S$ 
(other than paths, spiders, and caterpillars) such that a straight-line 
\cdrawing of $\langle G, S \rangle$ always exists. The following algorithm gives a 
positive answer to this question: it computes a straight-line \cdrawing when $S$ 
is a BFS-tree of $G$. Theorem~\ref{th:good-trees} proves the algorithm 
correctness, its time complexity, and its area requirement.

The idea of the algorithm is to exploit the properties of BFS-trees. Namely, two consecutive levels of a BFS-tree induce a subgraph that is a forest of caterpillars and there is no edge spanning more than one level. The algorithm is based on a recursive technique that uses, in each recursive step, an argument similar to that used for caterpillars, although the geometric construction is different. The final result is a drawing composed of a set of nested convex polygons, one for each level of the BFS-tree. Edges connecting vertices of two consecutive levels are drawn between the two corresponding polygons and are not crossed by other edges. All the remaining edges are allowed to cross.

\begin{figure}[tb!]
  \centering
  	\subfigure[]{
    \includegraphics[height=0.55\textwidth] {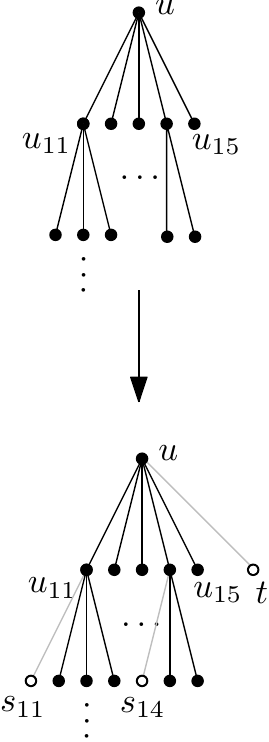}
    \label{fi:proper-level-tree}
    }
    \hspace{0.07\textwidth}
    \subfigure[]{
    \includegraphics[height=0.55\textwidth] {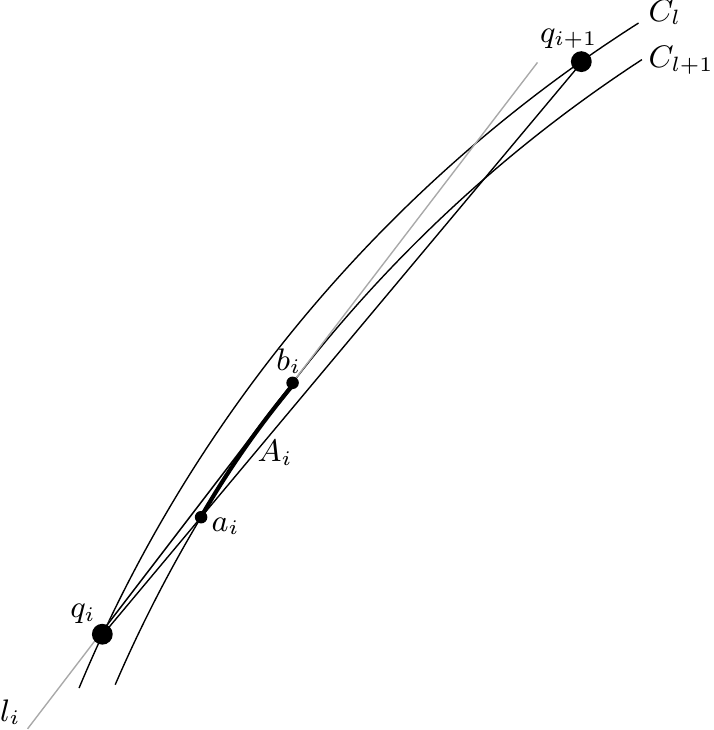}
    \label{fi:proper-level-tree-zoom}
	}     
    \subfigure[]{
    \includegraphics[height=0.5\textwidth] {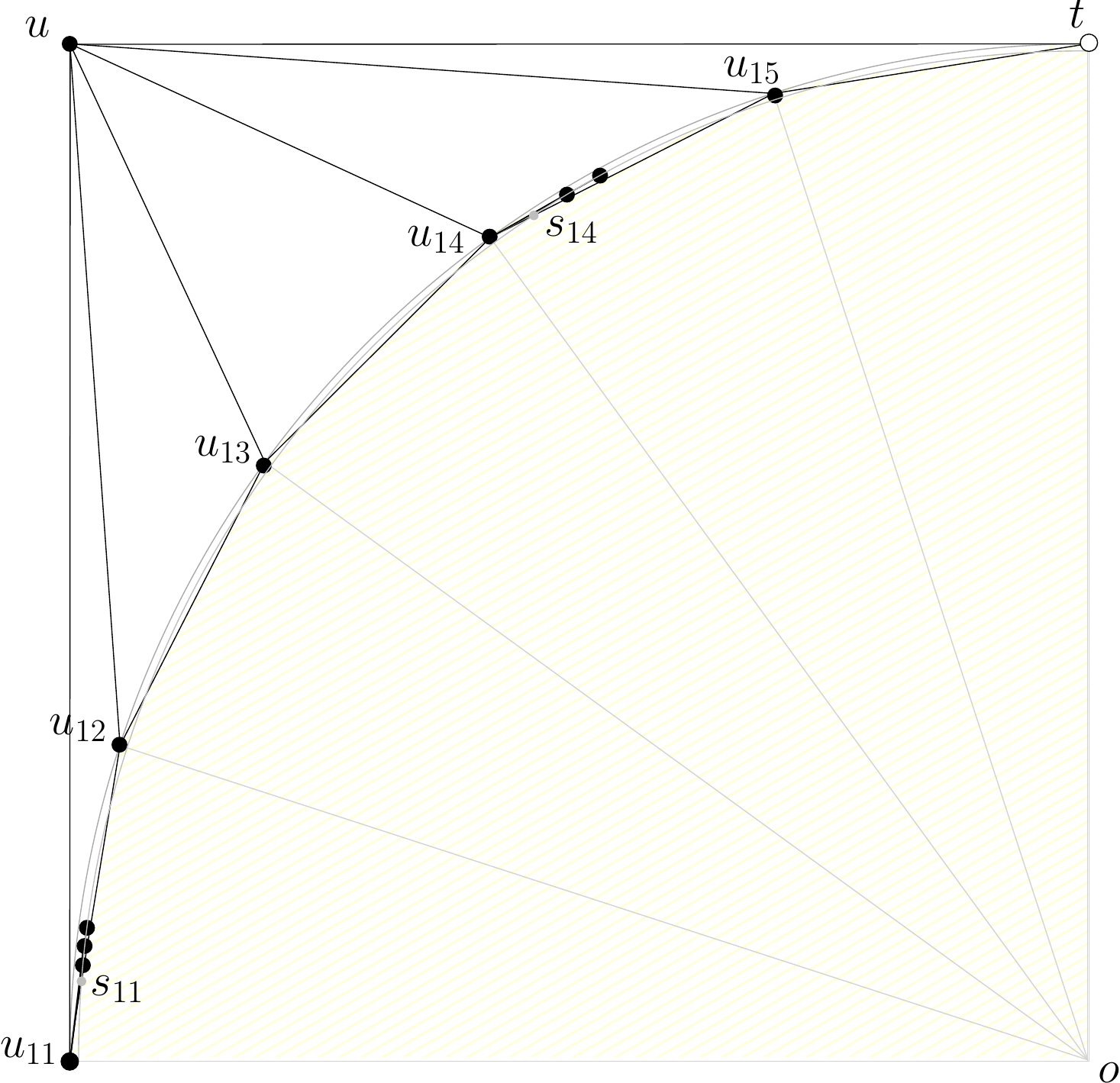}
    \label{fi:proper-level-tree-drawing}
	} 
\caption{Illustration of Algorithm~\textsc{Straight-Line-BFS-Tree}: (a) a 
BFS-tree $S$ and its enhanced version $S'$. (b) Inductive construction of a 
portion of the drawing. (c) Sketch of the final drawing $\Gamma$.}
\label{fi:good-tree}
\end{figure}

\medskip\noindent{\bf Algorithm}~\textsc{Straight-Line-BFS-Tree}. 
Refer to Fig.~\ref{fi:good-tree} for an illustration of the 
algorithm.
Let $u$ be the root of $S$ (which is at level $0$) and let $u_{l1}$, $\dots$, 
$u_{lk_l}$ be the vertices at level $l \in \{1, \dots, d\}$, where $d$ is the 
depth of $S$.   
The algorithm temporarily adds to $S$ and $G$ some dummy vertices, which will 
be removed in the final drawing. 
Namely, for each $u_{li}$, $1 \leq l \leq d-1$ and $1 \leq i \leq k_l$, it 
attaches to $u_{li}$ one more (leftmost) child $s_{li}$. Also, it attaches to 
root $u$ a dummy (rightmost) child $t$. 
Denote by $G'$ and $S'$ the new graph and the new tree, respectively. Notice 
that $S'$ is still a BFS-tree of $G'$. The algorithm iteratively computes a 
drawing $\Gamma'$ of $G'$. For $l = 1, \dots, d$, the algorithm defines a 
circumference $C_l$ with center $o=(0,0)$ and radius $r_l < r_{l-1}$ ($C_1, 
\dots, C_d$ are concentric). The vertices of level $l$ are drawn on the quarter 
of $C_l$ going from point $(-r_l,0)$ to point $(0,r_l)$ clockwise. 
 
%The base case is for $l=1$. Let $L(u)$ = $\{v_{1}$, $\dots$, $v_{n_{u}}, t\}$ be the ordered list of the children of root $u$. Let $\{p_1$,$\dots$, $p_{n_{u}}$, $p_t\}$ be $n_{u}+1$ equally spaced points along $C_1$ in clockwise order, where $\overline{op_1}$ and $\overline{op_{t}}$ are a horizontal and a vertical segment, respectively. For a vertex $v \in L(u)$, let $j(v)$ be the position of $v$ in $L(u)$. Vertex $v_{j}$ is drawn on $p_{j(v_{j})}$ ($1 \leq j \leq n_{u}$) and vertex $t$ is drawn on $p_t$. Also, $u$ is drawn on point $(-r_1,r_1)$. 

Let $\{u_{11}$, $\dots$, $u_{1k_1}$, $t\}$ be the ordered list of the children 
of root $u$ and let $\{p_{11}$,$\dots$, $p_{1k_{1}}$, $p_t\}$ be $k_{1}+1$ 
equally spaced points along $C_1$ in clockwise order, where $\overline{op_{11}}$ 
and $\overline{op_{t}}$ are a horizontal and a vertical segment, respectively. 
Vertex $u_{1j}$ is drawn on $p_{1j}$ ($1 \leq j \leq k_{1}$) and vertex $t$ is 
drawn on $p_t$. Also, $u$ is drawn on point $(-r_1,r_1)$.

Assume now that all vertices $u_{l1},\dots, u_{lk_{l}}$ of level $l$ have been 
drawn ($1 \leq l \leq d-1$) in this order on the sequence of points 
$\{q_1,\dots, q_{k_l}\}$, along $C_l$. The algorithm draws the vertices of level 
$l+1$ as follows. Let $\overline{q_{i}q_{i+1}}$ be the chords of $C_{l}$, for $1 
\leq i \leq k_{l}-1$, and let $c_{l}$ be the shortest of these chords. The 
radius $r_{l+1}$ of $C_{l+1}$ is chosen arbitrarily in such a way that $C_{l+1}$ 
intersects $c_{l}$ in two points and $r_{l+1} < r_l$. This implies that 
$C_{l+1}$ also intersects every chord 
$\overline{q_{i}q_{i+1}}$ in two points. For $1 \leq i \leq k_{l}$, denote by 
$L(u_{l i})$ = $\{v_{1}$, $\dots$, $v_{n_{l i}}\}$ the ordered list of children 
of $u_{l i}$ in $G'$.  Also, let $a_i$ be the intersection point between 
$\overline{q_{i}q_{i+1}}$ and $C_{l+1}$ that is closest to $q_i$, and let 
$\ell_i$ be the line through $q_i$ tangent to $C_{l+1}$; denote by $b_i$ the 
tangent point between $\ell_i$ and $C_{l+1}$. 
Let $A_{l+1}$ be the arc of $C_{l+1}$ between $a_i$ and $b_i$, and let $\{p_0, 
p_{1},\dots, p_{n_{l i}}\}$ be $n_{l i}+1$ equally spaced points along $A_{l+1}$ 
in clockwise order. For $v \in L(u_{l i})$, denote by $j(v)$ the position of $v$ 
in $L(u_{l i})$. Vertex $v_{j}$ is drawn on $p_{j(v_{j})}$ ($1 \leq j \leq n_{l 
i}$) and vertex $s_{l i}$ is drawn on $p_{0}$. 

Once all vertices of $G'$ are drawn each edge of $G'$ is drawn in $\Gamma'$ as 
a straight-line segment between its end-vertices. Drawing $\Gamma$ is obtained 
from $\Gamma'$ by deleting all dummy vertices and their incident edges.  

\begin{theorem}\label{th:good-trees}
Let $G$ be a graph with $n$ vertices and $m$ edges, and let $S$ be a BFS-tree 
of $G$. 
There exists a straight-line \cdrawing $\Gamma$ of $\langle G, S \rangle$. 
Drawing $\Gamma$ can be computed in $O(n+m)$ time in the real RAM model. 
\end{theorem}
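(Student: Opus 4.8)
The plan is to establish correctness and the time bound for Algorithm~\textsc{Straight-Line-BFS-Tree}. Since $\Gamma$ is obtained from $\Gamma'$ by deleting the dummy vertices together with their incident edges, and deleting edges cannot create crossings, it suffices to prove that in $\Gamma'$ no edge of $S$ is crossed by any edge of $G'$. Throughout I write $\overline{D_l}$ for the closed disk bounded by $C_l$ and $D_l$ for its interior; the analysis of the running time is deferred to the end.

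I would begin by recording the geometric invariants produced by the construction. For each level $l$, the non-dummy vertices of level $l$ occur, in the clockwise order of their points on the quarter of $C_l$, as $u_{l1},\dots,u_{lk_l}$; hence they are in convex position, and $\overline{D_1}\supseteq\overline{D_2}\supseteq\cdots$ because $r_{l+1}<r_l$. The requirement that $r_{l+1}$ exceed the distance from the center $o$ to the shortest chord $c_l$ guarantees that $C_{l+1}$ crosses every ``outer'' chord $\overline{u_{li}u_{l,i+1}}$ of the level-$l$ polygon. The key local claim is that the children of a parent $q_i=u_{li}$ are placed on a short arc $A_{l+1}$ of $C_{l+1}$ that straddles the angular position of $q_i$ and is entirely \emph{visible from $q_i$} (it lies between the crossing $a_i$ of the secant $\overline{q_iq_{i+1}}$ with $C_{l+1}$ and the tangent point $b_i$ of the tangent line through $q_i$); consequently every tree edge joining $q_i$ to one of its children is contained in the ``circular triangle'' $T_i$ bounded by $\overline{q_ia_i}$, $\overline{q_ib_i}$ and the arc $a_ib_i$. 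I would then verify, by elementary trigonometry and exactly in the spirit of the convexity argument in the proof of Theorem~\ref{th:spanning-caterpillar}, the two further facts that (a) $T_i\subseteq\overline{D_l}\setminus D_{l+1}$, and (b) the arcs $A_{l+1}$, as $i$ ranges over the parents at level $l$, occur in clockwise order along $C_{l+1}$ without overlap, so that the regions $T_i$ are pairwise disjoint; both facts rely on the admissible-radius inequalities.

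Granting these invariants, I would argue that no edge $e$ of $S$ is crossed. Such an $e$ joins a level $l$ to level $l+1$ (the \emph{band} $l$), so it lies in a unique region $T_i$, hence inside the slab $\Sigma_l:=\overline{D_l}\setminus D_{l+1}$. By the defining property of a BFS-tree, every non-tree edge of $G'$ joins either two vertices of the same level $l'$ (a \emph{horizontal} edge, drawn as a chord of $C_{l'}$ and therefore contained in $\overline{D_{l'}}$) or two vertices on consecutive levels (a \emph{vertical} edge). Now: a tree edge of a band $l''\neq l$ lies in the disjoint slab $\Sigma_{l''}$; a tree edge of band $l$ other than $e$ lies in a region $T_{i'}$ disjoint from $T_i$, or else is adjacent to $e$ and so cannot cross it. Any horizontal or vertical edge with both endpoints at levels $\ge l+1$ is confined to $\overline{D_{l+1}}$, which meets $T_i$ only along the arc $a_ib_i$; on that arc $e$ has only its child-endpoint, so a crossing would force adjacency or would place a level-$(l+1)$ vertex in the interior of $T_i$, both impossible. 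The remaining, genuinely delicate, case is a horizontal or vertical edge incident to a level $l'\le l$. When $r_{l'}$ is large compared with $r_l$ a length/angle bound already suffices: since all level vertices lie in a common $90^\circ$ angular cone at $o$, such an edge stays at distance $\ge r_{l'}\cos 45^\circ$ from $o$ and hence cannot enter $\overline{D_l}\supseteq T_i$ once $r_{l'}>\sqrt2\,r_l$. For the ``close'' levels that remain, I would finish exactly as in the proofs of Lemma~\ref{le:bad-trees-1} and Theorem~\ref{th:spanning-caterpillar}: in each sub-case exhibit either a line through $o$ separating $e$ from the other edge, or a wedge-shaped region with apex $o$ containing $e$ and disjoint from the other edge, exploiting that $T_i$ is a thin region that ``hugs'' the vertex $q_i$ on $C_l$ inside a narrow angular window. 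This case analysis is the step I expect to be the main obstacle.

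For the running time, $G'$ and $S'$ have $O(n+m)$ vertices and edges and $S'$ is still a BFS-tree of $G'$. In the real RAM model with constant-time trigonometric primitives, each of the following is computed in $O(1)$ time: the equally spaced points $p_{lj}$ on $C_l$; a legal radius $r_{l+1}$, taken for instance as the midpoint of its admissible interval, whose endpoints are trigonometric functions of the smallest angular spacing at level $l$; the points $a_i$ and $b_i$; the arc $A_{l+1}$ together with the equally spaced points on it; and all vertex coordinates. Summing these $O(1)$ costs over the vertices and edges yields the claimed $O(n+m)$ bound. (No bound on the area is asserted, consistently with the fact that the radii $r_l$ may contract geometrically over the $d$ levels of $S$.)
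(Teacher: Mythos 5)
Your geometric setup is sound and faithful to Algorithm~\textsc{Straight-Line-BFS-Tree} --- the nested disks, the arcs $A_{l+1}$, and your invariants (a) and (b) about the regions $T_i$ are exactly the structure the paper's induction rests on, and your handling of tree--tree interactions and of edges confined to $\overline{D_{l+1}}$ is fine. But there is a genuine gap, and you name it yourself: the ``remaining, genuinely delicate case'' of an edge incident to a level $l'\le l$ is deferred to an unspecified case analysis that you describe as ``the main obstacle.'' That deferred case is not a technicality; it contains the heart of the theorem, namely showing that a non-tree edge $e'=(u_{lz},u_{l+1,f})$ joining level $l$ to level $l+1$ cannot cross a tree edge $e=(u_{li},u_{l+1,j})$ of the same band. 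Without it the proof does not go through. The paper closes exactly this case with a three-way analysis on the indices: if $i<z,\,j<f$ or $i>z,\,j>f$, a line through $o$ separates $e$ from $e'$; if the index orders disagree, the line through $u_{li}$ and $u_{lz}$ separates them, and here the choice of $r_{l+1}$ (large enough that $C_{l+1}$ crosses the \emph{shortest} consecutive chord $c_l$, hence every consecutive chord) is what guarantees the separation; and if $u_{l+1,f}$ is itself a child of $u_{li}$, one argues that $e'$ must cross $C_{l+1}$ once, near the point where $u_{lz}$ is placed, hence away from the arc carrying the children of $u_{li}$. Your proposed dichotomy ``separating line through $o$ or a wedge with apex $o$'' does not cover the second of these sub-cases, where the correct separator is a chord, not a radial line.

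Two smaller points. First, your quantitative shortcut for ``far'' levels is wrong as stated for vertical edges: an edge between levels $l'$ and $l'+1$ has an endpoint at distance $r_{l'+1}$ from $o$, so it is only guaranteed to stay at distance at least $r_{l'+1}/\sqrt{2}$ from $o$, not $r_{l'}/\sqrt{2}$; since $r_{l'+1}$ can be arbitrarily close to $r_l$, the bound does not dispose of any level in general, and the paper instead dispenses with all edges having both endpoints at levels $\le l$ by observing that, by construction, no such edge enters the circular segment $R$ of $C_l$ between two consecutive level-$l$ vertices, which contains all the children of $u_{li}$. Second, your time and area remarks are correct and match the paper's.
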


\begin{proof}
The algorithm that constructs $\Gamma$ is 
Algorithm~\textsc{Straight-Line-BFS-Tree}. In the following we first prove that 
$\Gamma$ is a straight-line compatible drawing of $\langle G,S \rangle$, and 
then we analyze the time complexity. We adopt the same notation used in the 
description of the algorithm.\medskip

\noindent\textsc{Correctness.} 
We have to prove that in $\Gamma$ the edges of $S$ are never crossed.  
Observe that, since $S$ is a BFS-tree of $G$, there cannot be edges spanning 
more than two consecutive levels of $S$. Following its description, we prove the 
correctness of the algorithm by induction on $l$. 

In the base case consider the levels 0 and 1, also, consider the convex polygon 
$P_1$, whose vertices are the points $\{p_1$,$\dots$, $p_{n_{u}}\}$. 
All the edges connecting two children of root $u$ are drawn inside $P_1$, 
and do not cross the edges connecting the root to its children, which are drawn 
outside $P_1$.

Assume by induction that all the edges in $S$ connecting two vertices of two 
levels $1 \leq l',l'+1 \leq l$ are not crossed in $\Gamma$. We prove that all 
the edges in $S$ connecting two vertices of levels $l$, $l+1$ are not crossed in 
$\Gamma$. Consider any vertex $u_{li}$ $(1 \leq i \leq k_l)$; 
observe that all its children are drawn inside the open plane region $R$ 
defined by the arc of $C_l$ that goes clockwise from $q_i$ to $q_{i+1}$ (where 
$u_{li}$ and $u_{l i+1}$ are drawn) 
and the chord $\overline{q_iq_{i+1}}$. 
By construction, $R$ is never intersected by an edge connecting two vertices 
drawn in a step $l' \leq l$. 
Consider the convex polygon $P_{l+1}$, defined by the points where the vertices 
of level $l+1$ are drawn. 
All the edges connecting two vertices of level $l+1$ are drawn inside $P_{l+1}$, 
and do not cross the edges connecting vertices of level $l$ to their children, 
which are drawn outside $P_{l+1}$. 
It remains to prove that every edge $e' \notin S$, connecting a vertex of level 
$l$ to a vertex of level $l+1$, 
does not cross any edge $e \in S$, connecting a vertex of level $l$ to a vertex 
of level $l+1$. 
In particular, let $e = (u_{li}, u_{l+1 j}) \in S$ ($1 \leq i \leq k_l$ and $1 \leq j \leq k_{l+1}$) 
and $e'=(u_{lz}, u_{l+1 f}) \notin S$ ($1 \leq z \leq k_l$ and $1 \leq f \leq k_{l+1}$). 
Assume that $u_{l+1 f}$ is not a child of $u_{li}$ in $S$. 
If $i < z$ and $j < f$ or $i > z$ and $j > f$, then there is a line $\ell$ 
through $o$ such that $e$ completely lies in one of the two half planes 
determined by $\ell$ and $e'$ completely lies in the other half plane. 
If $i < z$ and $j > f$ or $i > z$ and $j < f$, consider the line $\ell$ 
containing the straight-line segment $\overline{u_{li}u_{lz}}$, then $e$ 
completely lies in one of the two half planes determined by $\ell$ (the one 
containing $u_{{l+1}j}$), and $e'$ completely lies in the other half plane; 
indeed, $r_{l+1}$ has been chosen so that it intersects $c_l$ (the 
minimum-length chord of $C_l$). 
Finally, suppose $u_{l+1 f}$ is a child of $u_{li}$ in $S$; by construction, 
any edge that connects $u_{lz}$ to a vertex of level $l+1$ (which is not a child 
of $u_{lz}$), including $e'$, must cross the circumference $C_{l+1}$ exactly 
once (near the point where $u_{lz}$ is placed).

\medskip
\noindent\textsc{Time requirement.} 
At each inductive step, the technique performs a number of operations 
proportional to $k_l + k_{l+1}$. Indeed, $c_l$ is chosen by looking only at the 
chords between consecutive points on $C_l$. Hence, the overall time complexity 
is $O(\sum_{l=0}^{d-1} (k_l + k_{l+1}) + m)$ = $O(\sum_{l=0}^{d-1} k_l) + 
O(\sum_{l=0}^{d-1} k_{l+1}) + O(m)$ = $O(n + m)$. 

Observe that the \cdrawing computed by 
Algorithm~\textsc{Straight-Line-BFS-Tree} may require area $\Omega(2^n)$. 
Indeed, let $\mathcal{L}(C_1)$ be the length of $C_1$; the children of the 
root 
$u$ are drawn along an arc $A_1$ of $C_1$ whose length is $\mathcal{L}(A_1) < 
\mathcal{L}(C_1)/2$. Inductively, the children of any vertex of level $l-1$ are 
drawn along an arc $A_l$ of $C_l$ whose length is $\mathcal{L}(A_l) < 
\mathcal{L}(A_{l-1})/2 < \mathcal{L}(C_1)/2^l$. 
Hence, the children of any vertex of level $d-1$ are drawn along an arc of 
circumference $A_d$ whose length is $\mathcal{L}(A_d) < \mathcal{L}(C_1)/2^d$. 
It follows that the minimum distance between any two points in $\Gamma$ is 
$d_{\min} = o(\mathcal{L}(C_1)/2^d)$. Consider the case $d \in O(n)$, and 
impose 
$d_{\min} = 1$, it follows that $\mathcal{L}(C_1) \in \Omega(2^n)$, which 
implies that the area of $\Gamma$ is $\Omega(2^n)$.
\end{proof}

It is worth observing that any graph $G$ admits a 
BFS-tree rooted at an arbitrarily chosen vertex $r$ of $G$.
Indeed, it corresponds to the spanning tree computed with a breadth-first-search starting from $r$. 
Thus, each graph admits a straight-line drawing $\Gamma$ such that one of its 
spanning trees $S$ is never crossed in $\Gamma$.

\subsection{Spanning Triconnected Subgraphs}\label{sse:straight-line-triconnected}

Here we focus on the case in which $S$ is a triconnected spanning subgraph of $G$. 
Clearly, since every tree can be augmented with edges to become a 
triconnected graph, Lemmas~\ref{le:bad-trees-1} and~\ref{le:bad-trees-2} imply 
that, if $S$ is a triconnected graph, a straight-line \cdrawing of $\langle G,S 
\rangle$ may not exist. The following theorem characterizes those instances for which 
such a drawing exists.

\begin{theorem}\label{th:triconnected-decision}
Let $G(V,E)$ be a graph, $S(V,W)$ be a planar triconnected spanning subgraph of 
$G$, and $\cal E$ be the unique planar (combinatorial) embedding of $S$ (up to 
a flip). A straight-line \cdrawing $\Gamma$ of $\langle G,S \rangle$ exists if 
and only~if: 
\begin{enumerate}[(1)]
\item Each edge $e \in E\setminus W$ connects two vertices belonging 
to the same face of~$\cal E$. 
\item There exists a face $f$ of $\cal E$ containing three vertices such that any 
pair $u,v$ of them does not separate in the circular order of $f$ the 
end-vertices $x,y \in f$ of any other edge in $E\setminus W$. 
\end{enumerate}
\end{theorem}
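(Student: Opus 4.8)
\noindent\emph{Proof plan.} The plan is to argue the two directions separately, in both cases working with the planar straight-line drawing of $S$ induced by a \cdrawing and with the face set $\mathcal E$, which, up to the choice of outer face and a flip, is realized by every planar drawing of the triconnected graph $S$.

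\smallskip\noindent\emph{Necessity.} Suppose a straight-line \cdrawing $\Gamma$ of $\langle G,S\rangle$ exists. Since no edge of $G$ crosses an edge of $S$, the restriction of $\Gamma$ to $S$ is a planar straight-line drawing realizing $\mathcal E$ with some face $f$ as the outer face. For~(1): given $e=(a,b)\in E\setminus W$, the segment $\overline{ab}$ leaves $a$ into a single face $g$ incident to $a$; since it crosses no edge of $S$ and, by the definition of a drawing, passes through no vertex other than $a$ and $b$, it remains inside $\overline g$ until it reaches $b$, so $a$ and $b$ lie on the common face $g$. For~(2) I claim $f$ is a witness. All vertices of $S$ lie in the closed region bounded by the outer cycle $C_f$, so $\mathrm{conv}(\Gamma)$ coincides with the convex hull of the vertices of $C_f$; pick three of its corners as $u,v,w$. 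If some $e=(x,y)\in E\setminus W$ with $x,y\in C_f$ were separated in the circular order of $C_f$ by, say, $\{u,v\}$, let $\sigma_u,\sigma_v$ be the two arcs of $C_f$ from $x$ to $y$ containing $u$ and $v$, respectively. The arcs $\sigma_u$, $\sigma_v$ and the chord $\overline{xy}$ pairwise meet only at $\{x,y\}$ (the chord lies in the closed outer face), forming a theta curve, i.e.\ two points joined by three internally disjoint arcs, whose unbounded region is bounded by $\overline{xy}$ and exactly one of $\sigma_u,\sigma_v$; after relabeling, say $\sigma_u$. Then $\sigma_v\setminus\{x,y\}$ — in particular $v$ — lies in the bounded region enclosed by the Jordan curve $\sigma_u\cup\overline{xy}$, whose interior is contained in $\mathrm{int}(\mathrm{conv}(\Gamma))$, contradicting that $v$ is a hull corner. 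Repeating the argument for the pairs $\{v,w\}$ and $\{u,w\}$ establishes~(2).

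\smallskip\noindent\emph{Sufficiency.} Assume~(1) and~(2) and let $f$, $u$, $v$, $w$ be as in~(2); these split $C_f$ into three arcs $A_1,A_2,A_3$. First I would draw $S$ planar with $f$ as the outer face so that (a) every internal face is a convex polygon, and (b) $C_f$ is a ``shallow deltoid'': $u,v,w$ sit at the corners of a triangle $T$ containing the whole drawing, and each arc $A_i$ is a convex chain drawn slightly inside the corresponding side of $T$, so that $C_f$ is reflex at the internal vertices of each arc, chosen so that the region $\Pi_i$ enclosed between $A_i$ and that side of $T$ is convex and $\Pi_1,\Pi_2,\Pi_3$ are pairwise disjoint and disjoint from the interior of $C_f$. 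Then I would draw every edge of $E\setminus W$ as a straight segment and verify correctness: an edge $e=(x,y)\in E\setminus W$ has, by~(1), both endpoints on a common face $g$, and as $e\notin W$ its endpoints are not adjacent in $S$; if $g$ is internal then $\overline{xy}$ is a chord of the convex polygon $g$ whose relative interior lies in $\mathrm{int}(g)$, which contains no edge of $S$; if $g=f$ then, by~(2), $x$ and $y$ lie on a single arc $A_i$, so $\overline{xy}\subseteq\overline{\Pi_i}$, and $\overline{\Pi_i}$ meets $S$ only along $A_i$ and there only at $x$ and $y$. In either case $e$ crosses no edge of $S$, so the resulting $\Gamma$ is a straight-line \cdrawing of $\langle G,S\rangle$.

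\smallskip\noindent\emph{Main obstacle.} The heart of the proof is realizing the drawing of $S$ required by the sufficiency direction: Tutte's theorem produces convex drawings only with a \emph{convex} outer polygon, whereas here we need convex internal faces together with a prescribed, mildly non-convex outer boundary $C_f$. I expect this to go through because $S$ is triconnected — so every vertex of $C_f$ has at least one edge entering the interior, which is exactly the kind of condition under which convex drawings with a non-convex outer boundary exist — and because the deltoid shape may be taken arbitrarily close to convex; making this precise, or replacing it with a direct incremental construction that simultaneously controls the three pockets, is the technical crux. The remaining points to check are routine: the degenerate cases in the necessity argument (a relevant segment collinear with a third vertex, or an edge of $E\setminus W$ incident to one of $u,v,w$), which are handled by the definition of a drawing and by a general-position assumption.
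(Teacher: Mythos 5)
Your necessity argument is essentially the paper's (the paper simply takes the vertices on the convex hull of $\Gamma$ and notes that any three of them work), with a more explicit Jordan-curve justification. One small point you gloss over: you assert parenthetically that the segment $\overline{xy}$ lies in the closed outer face. This is true, but it needs the observation that in a $3$-connected planar graph two non-adjacent vertices lie on at most one common face, so $x,y\in C_f$ forces the edge $(x,y)\in E\setminus W$ to be routed in (the closure of) $f$ itself. With that remark the necessity direction is complete.

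The sufficiency direction, however, has a genuine gap, and you correctly identify where it is: everything hinges on producing a planar straight-line drawing of $S$ with $f$ outside, all internal faces convex, and $C_f$ drawn as your ``shallow deltoid'' with three convex pockets $\Pi_1,\Pi_2,\Pi_3$; you explicitly leave the existence of such a drawing as an unresolved ``technical crux.'' The missing idea is a one-step reduction to an ordinary convex drawing: add to $S$ the three chords $\Delta=\{(u,v),(v,w),(w,u)\}$, drawn inside $f$. Since $u,v,w$ lie on the common face $f$, the resulting graph $S^*$ is still planar, is still $3$-connected, and the triangle $\Delta$ bounds a face of its (unique) embedding. Now invoke any algorithm producing a convex straight-line drawing of a $3$-connected planar graph with a prescribed outer face (the paper uses the grid algorithm of~\cite{br-scdpg-06}), choosing $\Delta$ as the outer face. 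Deleting the three added edges yields \emph{exactly} the drawing you postulate: every internal face of $S$ is convex because it is a face of the convex drawing of $S^*$, and the three faces of $S^*$ bounded by a chord of $\Delta$ and an arc $A_i$ of $C_f$ become your convex pockets $\Pi_i$, pairwise disjoint and disjoint from the interior of $C_f$. From there your case analysis for routing the edges of $E\setminus W$ (internal face versus a single arc $A_i$, using Conditions~1 and~2) goes through verbatim. So the overall architecture of your proof is the same as the paper's, but without this augmentation trick the sufficiency proof is not complete.
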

 
\noindent

\begin{proof}
  Suppose that $v_1,v_2$, and $v_3$ are three vertices of a face $f$ satisfying 
  Condition~2 (see Fig.~\ref{fi:trico-a}). Consider the graphs $G^*(V,E \cup 
\Delta)$ and $S^*(V, W\cup 
  \Delta)$, where $\Delta = \{(v_1,v_2),(v_2,v_3),(v_1,v_3)\}$. Observe that, 
due to Condition~2,  $S^*$ is a triconnected planar spanning subgraph of $G^*$ 
and that $\Delta$ forms an empty triangular face in the unique planar embedding 
of $S^*$. Produce a convex straight-line drawing $\Gamma^*$ of $S^*$ 
  with $\Delta$ as the external face (for example, using the algorithm 
in~\cite{br-scdpg-06}). A straight-line \cdrawing $\Gamma$ of $\langle G,S 
\rangle$ 
can be obtained from $\Gamma^*$ by removing the edges in $\Delta$ and by adding 
  the edges of $E \setminus W$. Observe that by Condition~1 and 
  by the convexity of the faces of $\Gamma^*$, the edges of $E \setminus W$ do 
  not intersect edges of $S$.
  
  Conversely, suppose that $\langle G,S \rangle$ admits a straight-line 
\cdrawing $\Gamma$. 
  Condition~1 is trivially satisfied. Regarding Condition~2, consider the 
  circular sequence $v_1,v_2,\dots,v_k$ of vertices of $V$ on the convex hull 
of $\Gamma$ (see Fig.~\ref{fi:trico-b}). 
  Observe that $k \ge 3$ and that any triple of such vertices satisfies 
Condition~2.  
 \end{proof}

\begin{figure}[tb!]
  \centering
  	\subfigure[]{
    \includegraphics[height=0.25\textwidth] {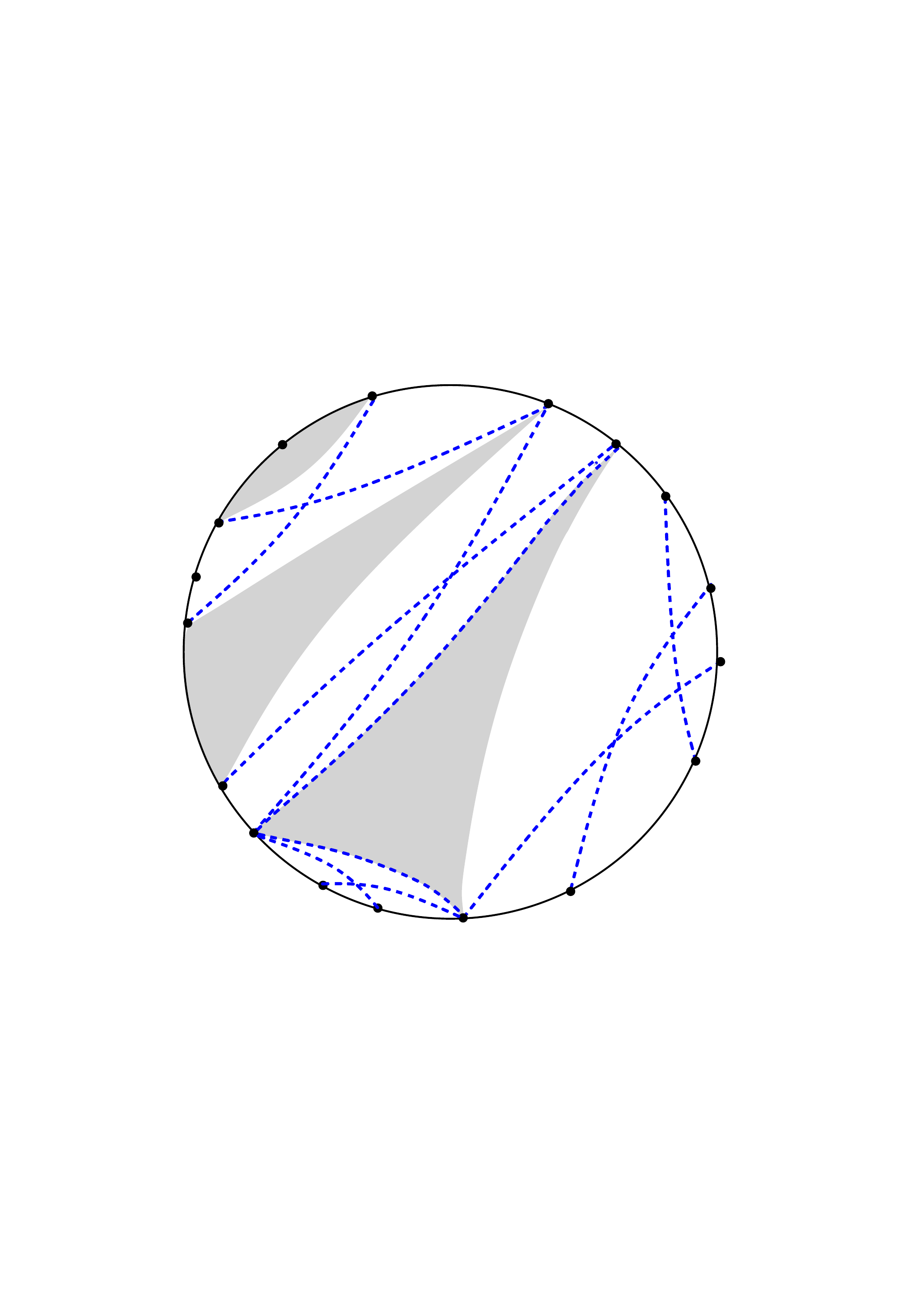}
    \label{fi:trico-a}
	} 
    \hspace{0.1\textwidth}
  \subfigure[]{
    \includegraphics[height=0.25\textwidth] {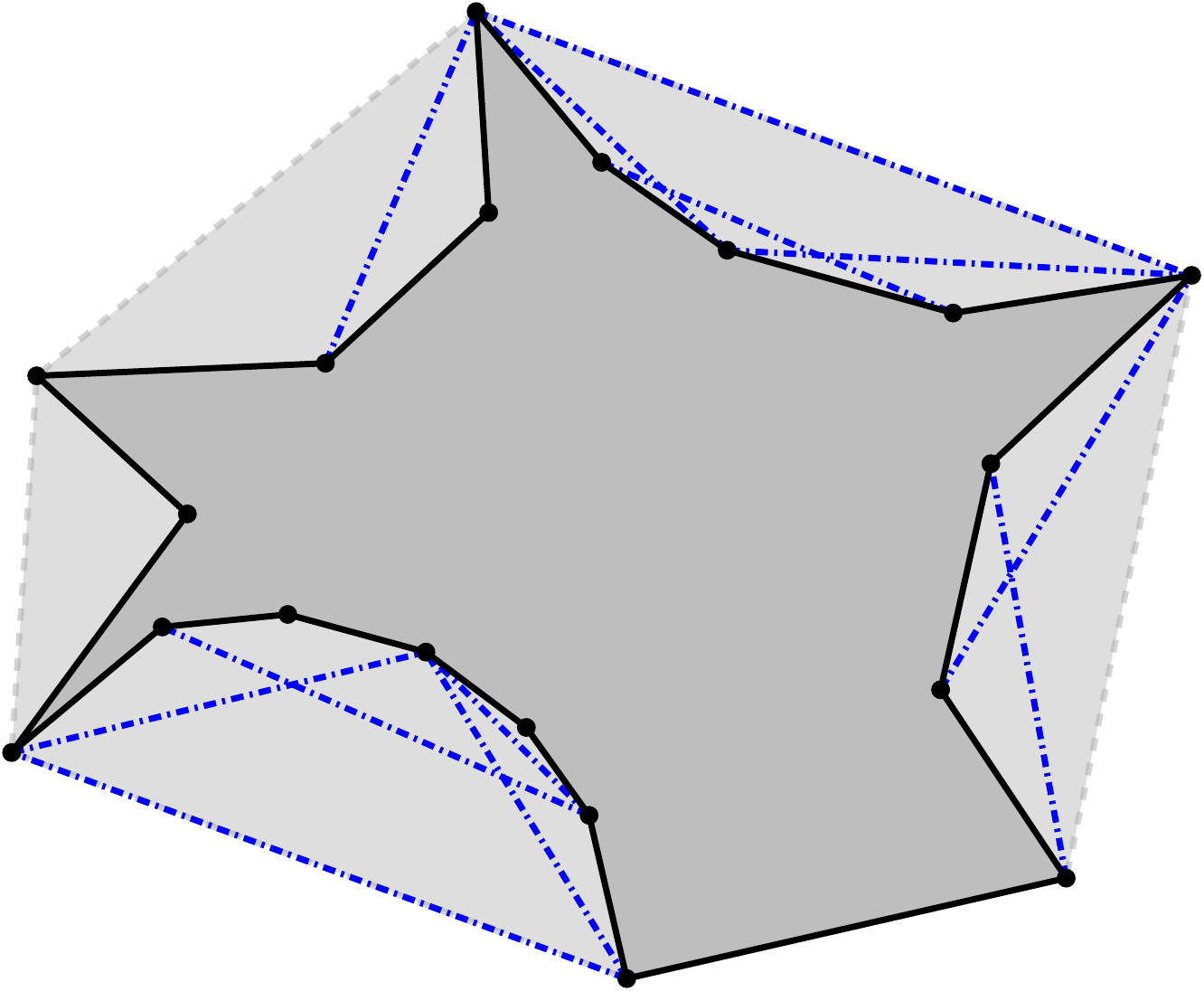}
    \label{fi:trico-b}
	} 
\caption{
(a) A face of $\cal E$, where edges of $E \setminus W$ are drawn as dashed curves. 
Shaded triangles identify three triplets of vertices among those satisfying 
Condition~2. (b) A straight-line \cdrawing of $\langle G, S \rangle$ used to 
show the necessity of Condition~2.
}
\label{fi:trico-theorem}
\end{figure}

In the following we describe an algorithm to test in polynomial time whether 
the conditions of Theorem~\ref{th:triconnected-decision} are satisfied by a 
pair $\langle G,S \rangle$ in which $S$ is triconnected and spanning.

\begin{figure}[tb!]
  \centering
  	\subfigure[]{
    \includegraphics[height=0.25\textwidth] {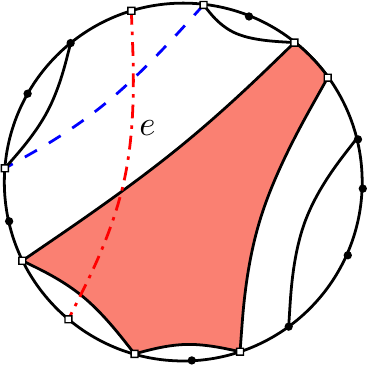}
    \label{fi:trico-algorithm-a}
	} 
    \hspace{0.1\textwidth}
    \subfigure[]{
    \includegraphics[height=0.25\textwidth] {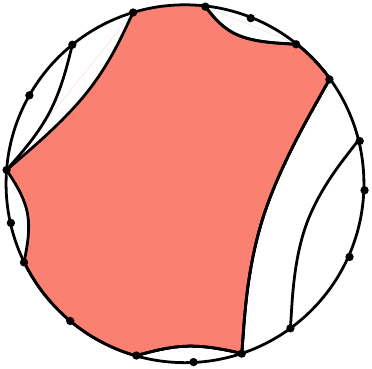}
    \label{fi:trico-algorithm-b}
	} 
\caption{
Two consecutive steps of Algorithm~\textsc{Straight-Line-Triconnected-Decision}.
(a) The outerplane graph $G_f$; the shaded face is \texttt{full} (the others 
are \texttt{empty}); the dash-dot edge $e$ is the next edge of $E_f$ to be 
considered; edges in $E_\chi$ are drawn as dashed lines; white squares are 
vertices of $V_\chi$. (b) Graph $G_f$ after the update due to edge $e$.
}
\label{fi:trico-algorithm}
\end{figure}

\medskip\noindent{\bf Algorithm}~\textsc{Straight-Line-Triconnected-Decision}. 
Let $\cal E$ be the unique planar embedding of $S$ (up to a flip). The 
algorithm verifies that each edge of $E \setminus W$ satisfies Condition~1 of 
Theorem~\ref{th:triconnected-decision} and that there exists a face $f$ of $\cal 
E$ containing three vertices $v_1$, $v_2$, and $v_3$, that satisfy Condition~2 of 
Theorem~\ref{th:triconnected-decision}. If both conditions hold, then $v_1$, 
$v_2$, and $v_3$ can be used to find a straight-line \cdrawing $\Gamma$ of 
$\langle G,S \rangle$ as described in the proof of 
Theorem~\ref{th:triconnected-decision}.% (see the appendix for details). 
% TOLTO PER MOTIVI DI SPAZIO MA PREZIOSO
%Namely, we construct the graphs $G^*(V,E \cup \Delta)$ and $S^*(V, W\cup \Delta)$, where $\Delta = \{(v_1,v_2),(v_2,v_3),(v_1,v_3)\}$. We produce a convex straight-line drawing $\Gamma^*$ of $S^*$ with $\Delta$ as the external face. A straight-line \cdrawing $\Gamma$ of $\langle G,S \rangle$ can be obtained from $\Gamma^*$ by removing the edges in $\Delta$ and by adding the edges of $E \setminus W$.

Condition~1 is verified as follows. Construct an auxiliary graph $S'$ from $S$ by subdividing each edge $e$ of $W$ with a dummy vertex $v_e$. Also, for each face $f$ of $\cal E$ add to $S'$ a vertex $v_f$ and connect $v_f$ to all non-dummy vertices of $f$. We have that two vertices of $V$ belong to the same face of $\cal E$ if and only if their distance in $S'$ is two. 

To test Condition~2 of Theorem~\ref{th:triconnected-decision} we perform the 
following procedure on each face $f$ of $\cal E$, restricting our attention to 
the set $E_f$ of edges in $E \setminus W$ whose end-vertices belong to $f$.
We maintain an auxiliary outerplane graph $G_f$ whose vertices are the vertices 
$V_f$ of $f$. Each internal face of $G_f$ is either marked as \texttt{full} or 
as \texttt{empty}. Faces marked \texttt{full} are not adjacent to each other. 
Intuitively, we have that any three vertices of an \texttt{empty} face satisfy 
Condition~2, while no triple of vertices of a \texttt{full} face satisfies such a condition.
We initialize $G_f$ with the cycle composed of the vertices and the edges of  $f$ and mark its unique internal face as \texttt{empty}. At each step an edge  $e$ of $E_f$ is considered and $G_f$ is updated accordingly. 
If the end-vertices of $e$ belong to a single \texttt{empty} face of $G_f$, then we update $G_f$ by splitting such a face into two \texttt{empty} faces. If the end-vertices of $e$ belong to a single \texttt{full} face, then we ignore $e$, as adding $e$ to $G_f$ would determine crossings between $e$ and several edges and faces (see Fig.~\ref{fi:trico-algorithm-a}). Consider the set $E_\chi$ of internal edges of $G_f$ crossed by $e$. Define a set of vertices $V_\chi$ of $G_f$ containing the end-vertices of $e$, the end-vertices of edges of $E_\chi$ that are incident to two \texttt{empty} faces, and the vertices of the \texttt{full} faces traversed by $e$. Remove all edges in $E_\chi$ from $G_f$. Mark the face $f'$ obtained by such a removal as \texttt{empty}. Form a new face $f_\chi$ inside $f'$ with all vertices in $V_\chi$ by connecting them as they appear in the circular order of $f$, and mark $f_\chi$ as \texttt{full} (see Fig.~\ref{fi:trico-algorithm-b}).

When all the edges of $E_f$ have been considered, if $G_f$ has an internal face 
marked as \texttt{empty}, any three vertices of this face satisfy Condition~2. 
Otherwise, $G_f$ has a single internal face marked \texttt{full} and no triple of 
vertices of $f$ satisfies Condition~2.

\begin{theorem}\label{th:triconnected-algorithm}
Let $G(V,E)$ be a graph and let $S(V,W)$ be a planar triconnected spanning subgraph of $G$. There exists an $O(|V| \times |E \setminus W|)$-time algorithm that decides whether $\langle G,S \rangle$ admits a straight-line \cdrawing $\Gamma$ and, in the positive case, computes it on an $O(|V|^2) \times O(|V|^2)$ grid. 
\end{theorem}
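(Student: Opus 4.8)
The plan is to establish Theorem~\ref{th:triconnected-algorithm} by analyzing the running time and the area requirement of Algorithm~\textsc{Straight-Line-Triconnected-Decision} together with the drawing procedure described in the proof of Theorem~\ref{th:triconnected-decision}. By Theorem~\ref{th:triconnected-decision}, deciding whether $\langle G,S \rangle$ admits a straight-line \cdrawing amounts to checking Conditions~1 and~2; the correctness of the algorithm for these two checks has already been argued in the text, so what remains is to bound the cost of the computations and the size of the output grid. I would split the proof into two parts accordingly: first the time complexity, then the area bound.

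For the time complexity, the first step is to note that the unique embedding $\cal E$ of the triconnected planar graph $S$ can be computed in $O(|V|)$ time, and that $|W| = O(|V|)$ by planarity. For Condition~1, I would argue that the auxiliary graph $S'$ (obtained by subdividing the edges of $W$ and adding one vertex per face connected to the non-dummy vertices of that face) has size $O(|V|)$, so that testing whether the two end-vertices of each edge of $E\setminus W$ are at distance two in $S'$ can be done with a BFS of depth two from each relevant vertex; doing this for all edges of $E\setminus W$ costs $O(|V| \times |E\setminus W|)$ overall (or less, but this bound suffices). For Condition~2, I would observe that the face-processing loop handles each edge of $E_f$ once, and that the update caused by an edge $e$ — finding the crossed internal edges $E_\chi$, removing them, and forming the new \texttt{full} face $f_\chi$ — touches a number of vertices and edges of $G_f$ that is $O(|V_f|)$, since $G_f$ is outerplane on the vertex set $V_f$ and thus has $O(|V_f|)$ edges in total. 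Summing $O(|V_f|)$ over all edges in all faces gives $O(\sum_f |V_f| \cdot |E_f|) = O(|V| \times |E\setminus W|)$, using $\sum_f |V_f| = O(|V|)$ and $\sum_f |E_f| \le |E\setminus W|$. Finally, once the three vertices $v_1, v_2, v_3$ are found, building $G^*$ and $S^*$ and invoking the convex-drawing algorithm of~\cite{br-scdpg-06} costs time linear in the size of $S$, and re-inserting the edges of $E\setminus W$ is $O(|E\setminus W|)$; so the whole procedure runs in $O(|V| \times |E\setminus W|)$ time.

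For the area bound, the key step is to recall that the convex straight-line drawing produced in the proof of Theorem~\ref{th:triconnected-decision} is that of the triconnected planar graph $S^*$ drawn with the triangle $\Delta$ as the outer face. The algorithm of~\cite{br-scdpg-06} (Bárány–Rote, or the analogous convex grid drawing results) produces a convex grid drawing of an $n$-vertex triconnected planar graph on a grid of size $O(n^2) \times O(n^2)$; applying it to $S^*$, which has $|V|$ vertices, yields an $O(|V|^2) \times O(|V|^2)$ grid. Since $\Gamma$ is obtained from $\Gamma^*$ merely by deleting the three edges of $\Delta$ and adding the straight-line segments for the edges of $E\setminus W$ between vertices already placed on the grid, no new grid points are introduced and the bounding box does not grow. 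Hence $\Gamma$ lies on an $O(|V|^2) \times O(|V|^2)$ grid, as claimed.

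The main obstacle I expect is the amortized analysis of the Condition~2 test: one must argue carefully that, although a single edge $e$ may cross many internal edges of $G_f$ and may merge several \texttt{empty} faces together with the \texttt{full} faces it traverses, the total work summed over all edges of $E_f$ stays within $O(|V_f| \cdot |E_f|)$ — this relies on the invariant that \texttt{full} faces are pairwise non-adjacent and that $G_f$ always remains an outerplane graph on at most $|V_f|$ vertices, so each update is bounded by the current size of $G_f$, which is $O(|V_f|)$. A secondary, more routine point is to make explicit which convex grid drawing result is being invoked and to confirm that it applies with a prescribed triangular outer face, which is exactly the setting needed here since $\Delta$ bounds an empty triangular face of $S^*$.
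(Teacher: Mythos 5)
Your time and area analysis matches the paper's almost exactly: the per-face cost $O(|E_f|\times|V_f|)$ summed via $\sum_f|V_f|=O(|V|)$ and $\sum_f|E_f|=O(|E\setminus W|)$, and the $O(|V|^2)\times O(|V|^2)$ grid from the convex drawing of the triconnected graph $S^*$ in~\cite{br-scdpg-06}. (For Condition~1 the paper gets away with $O(|V|+|E\setminus W|)$ by using constant-time distance-two queries in a planar graph after linear preprocessing, whereas your depth-two BFS per edge gives only $O(|V|\times|E\setminus W|)$; that is weaker but still within the claimed total, so it is not a problem.)

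The genuine gap is your opening claim that ``the correctness of the algorithm for these two checks has already been argued in the text, so what remains is to bound the cost.'' It has not been argued: the algorithm description only states, explicitly as intuition, that no triple of vertices of a \texttt{full} face satisfies Condition~2, and the bulk of the paper's proof of this theorem is devoted to actually establishing that. Concretely, one must show that when every internal face of $G_f$ ends up marked \texttt{full} (for every face $f$ of $\cal E$), no triple of vertices of $f$ satisfies Condition~2 --- otherwise the algorithm could incorrectly reject a positive instance. The paper does this by first reducing Condition~2 to the statement that three vertices of $f$ are not separated by any edge of $E_f$, and then proving two inductive invariants on the construction of a \texttt{full} face $f_{full}$: (i) any edge crossing the boundary of $f_{full}$ also crosses an edge of $E_f$, and (ii) any pair of vertices of $f_{full}$ that is non-contiguous in its circular order is separated by an edge of $E_f$; since a \texttt{full} face has at least four vertices, (ii) rules out any valid triple inside it. This argument also justifies the step of the algorithm that simply ignores an edge whose end-vertices lie in a single \texttt{full} face --- something your proposal does not address. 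Without these invariants the decision procedure's soundness is unproven, so this part cannot be waved away as already done.
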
 

\begin{proof}
First, apply Algorithm~\textsc{Straight-Line-Triconnected-Decision} to decide whether $\langle G,S \rangle$ satisfies the two conditions of Theorem~\ref{th:triconnected-decision}. If this is the case, apply the algorithm described in the proof of Theorem~\ref{th:triconnected-decision} to construct a straight-line \cdrawing $\Gamma$ of $\langle G,S \rangle$.

\medskip
\noindent\textsc{Correctness.}
Suppose that the \textsc{Straight-Line-Triconnected-Decision} algorithm verifies Condition~1 of Theorem~\ref{th:triconnected-decision} and identifies a face $f$ of $\cal E$ such that a face~$f_e$ marked \texttt{empty} can be found in the auxiliary outerplane graph $G_f$. Since $f_e$ is not traversed by any edge of $E_f$, we have that any three vertices of $f_e$ verify Condition~2 of Theorem~\ref{th:triconnected-decision} and any convex drawing of $S$ where such three vertices are incident to the external face corresponds to a \cdrawing of $\langle G,S \rangle$.  

Conversely, suppose that the \textsc{Straight-Line-Triconnected-Decision} algorithm does not verify Condition~1 or that the computation for every face $f$ of $\cal E$ yields a single internal face of the auxiliary outerplane graph $G_f$ marked \texttt{full}. Then we prove that Condition~2 of Theorem~\ref{th:triconnected-decision} is not verified. Our proof is based on the invariant that during the computation of Algorithm~\textsc{Straight-Line-Triconnected-Decision}, as well as at its end, any face of $G_f$ marked \texttt{full} cannot contain three vertices that satisfy Condition~2 of Theorem~\ref{th:triconnected-decision}. In order to see this, first observe that a pair of vertices that separates the end-vertices of some edge in the circular order of a face $f$ of $\cal E$ is also separated by such an edge. Hence, searching for three vertices of $f$ that do not separate the end-vertices of any edge of $E_f$ is equivalent to searching for three vertices that are not separated by any edge in $E_f$.  We prove the following statement: any pair of vertices of a \texttt{full} face $f_{full}$ that is non-contiguous in the circular order of $f_{full}$ is separated by an edge of $E_f$. Since a \texttt{full} face has at least four vertices, this implies that Condition~2 of Theorem~\ref{th:triconnected-decision} cannot be satisfied by any three vertices of $f_{full}$.

We first need to prove the following: any edge that crosses the boundary of a \texttt{full} face $f_{full}$ also crosses one edge of $E_f$. We prove this inductively on the composition of $f_{full}$. In the base case, when $f_{full}$ is created, the current edge $e$ of $E_f$ crosses a set of edges $E_\chi$ separating \texttt{empty} faces (i.e., no previous \texttt{full} face involved). In this case, any vertex of $V_\chi$ is the end-vertex of an edge in $E_\chi$ or of $e$, implying the statement. In the general case, when $f_{full}$ is created, the current edge $e$ of $E_f$ crosses both a set of edges $E_\chi$ separating \texttt{empty} faces and a set of \texttt{full} faces of $G_f$. Applying an inductive argument it can be proved that all vertices of $V_\chi$ are the end-vertices of some edge of $E_f$, thus proving the statement.

Finally, we prove inductively that any pair of vertices of a \texttt{full} face $f_{full}$ that is non-contiguous in the circular order of $f_{full}$ is separated by an edge of $E_f$. As above, in the base case (when face $f_{full}$ is created) the current edge $e=(v_1,v_2)$ of $E_f$ does not cross any \texttt{full} face. Since edges in $E_\chi$ do not cross each other, it is easy to verify that any pair of vertices of $V_\chi$ that are non-contiguous in the circular order of $f_{full}$ are separated either by an edge in $E_\chi$ or by $e$. 
Second, suppose that when face $f_{full}$ is created the current edge $e$ of $E_f$ crosses both a set of edges $E_\chi$ separating \texttt{empty} faces and a set of \texttt{full} faces of $G_f$. Consider a pair of vertices $u,v$ of $V_\chi$ that are non-contiguous in the circular order of $f_{full}$. Observe that, if $u$ and $v$ are separated by some edge, adding edge $(u,v)$ would introduce a crossing. If both $u$ and $v$ belong to the same \texttt{full} face that is merged into $f_{full}$, then by applying an inductive argument we have that they are separated by an edge of $E_f$. Otherwise, either edge $(u,v)$ would cross the boundary of at least one \texttt{full} face or it would cross $e$. In both cases edge $(u,v)$ would cross an edge of $E_f$, proving that $u$ and $v$ are separated by some edge of $E_f$.    

\medskip
\noindent\textsc{Time and Area Requirements.} 
The unique planar embedding $\cal E$ of $S$ (up to a flip) can be computed in 
$O(|V|)$ time. 
Regarding the time complexity of testing Condition~1, the auxiliary graph $S'$ can be constructed in 
time linear in the size of $S$. Since $S'$ is a planar graph, deciding if two 
vertices have distance two can be done in constant time provided that an 
$O(|V|)$-time preprocessing is performed~\cite{kk-spqpgct-03}. Thus, testing 
Condition~1 for all edges in $E \setminus W$ can be done in $O(|V| + |E 
\setminus W|)$ time. 
Regarding the time complexity of testing Condition~2, observe that, for each 
face $f$ of $\cal E$, $E_f$ can be computed in $O(|V| + |E \setminus W|)$ time 
while verifying Condition~1.
Since for each face $f$ of $\cal E$, the size of $G_f$ is $O(|V_f|)$, adding 
edges in $E_f$ has time complexity $O(|E_f| \times |V_f|)$. Overall, as $\sum_{f 
\in {\cal E}}{|E_f|} = O(|E \setminus W|)$ and $\sum_{f \in {\cal E}}{|V_f|} = 
O(|V|)$, the time complexity of testing Condition~2 is $O(|E 
\setminus W| \times |V|)$, which gives the time complexity of the whole 
algorithm.
Regarding the area, the algorithm in~\cite{br-scdpg-06} can be used to obtain in 
linear time a straight-line grid drawing of $S^*$ on an $O(|V|^2) \times 
O(|V|^2)$ grid.
\end{proof}

% ----------------------
% Polyline drawings
% ----------------------

\section{Polyline Drawings}\label{se:polyline}

In this section we allow bends along the edges of $G$ not in $S$, while we still require that the edges of $S$ are drawn as straight-line segments. Of course, since edge bends are negatively correlated to the drawing readability, our goal is to compute $k$-bend \cdrawings for small values of $k$. However, it might happen that the number of bends in the drawing can only be reduced at the cost of increasing the required area. Throughout the section, we discuss possible trade-offs between these two measures of the quality of the drawing. 

We split the section into two subsections, dealing with the case in which $S$ is a spanning tree and with the case in which $S$ is a biconnected spanning graph, respectively.

\subsection{Spanning Trees}\label{se:bend-tree}

In this subsection we prove that allowing bends along the edges of $G$ not in $S$ permits us to compute  \cdrawings of pairs $\langle G,S \rangle$ for every spanning tree $S$ of $G$; such drawings are realized on a polynomial-area integer grid. We provide algorithms that offer different trade-offs between number of bends and drawing area.

Let $G(V,E)$ be a graph with $n$ vertices and $m$ edges, and let $S(V,W)$ be any spanning tree of $G$. 
We denote by $x(v)$ and $y(v)$ the $x$- and the $y$-coordinate of a vertex $v$, respectively.
The following algorithm computes a $1$-bend \cdrawing of $\langle G,S \rangle$.

\medskip\noindent{\bf Algorithm}~\textsc{One-bend Tree}. 
The algorithm works in two steps (refer to Fig.~\ref{fi:polyline-trees-1-bend}).

\smallskip\noindent{\textsc{Step~1:}} 
Consider a point set of size $n$ such that for each point $p_i$, the $x$- and
$y$-coordinates of $p_i$ are $i^2$ and $i$, respectively. 
Construct a straight-line drawing of $S$ by placing the
vertices on points $p_i$, $1 \leq i \leq n$, according to a DFS traversal.

\smallskip\noindent{\textsc{Step~2:}} 
Let $v_i$ be the vertex placed on point $p_i$. For each $i \in \{1,\ldots,n\}$, draw each edge
$(v_i,v_j) \in E \setminus W$ such that $j>i$ as a polyline connecting $p_i$
and $p_j$, and bending at point $(i^2+1, n+c)$ where $c$ is a progressive
counter, initially set to one.

\begin{figure}[tb!]
  \centering
    \subfigure[]{
    	\includegraphics[height=33mm] {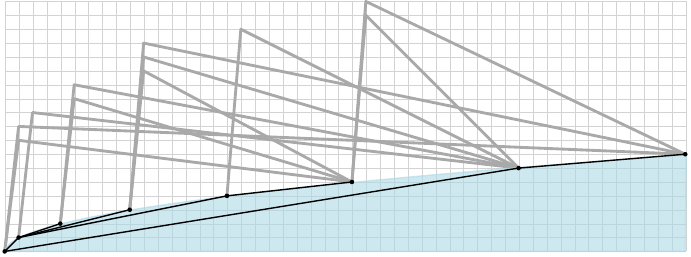}
    	\label{fi:polyline-trees-1-bend}
    } 
    \subfigure[]{
	    \includegraphics[height=33mm] {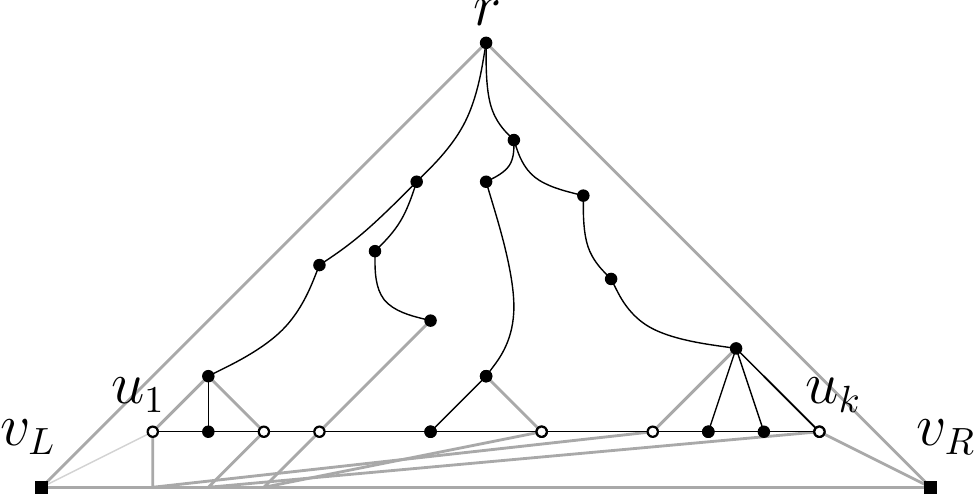}
	    \label{fi:polyline-trees-3-bend}
     } 
\caption{Illustration of: (a) Algorithm~\textsc{One-bend Tree} and (b) Algorithm~\textsc{Three-bend Tree}; a graph $G$ with a given spanning tree $S$ (black edges).
}
\end{figure}

\begin{theorem}\label{th:trees-1-bend}
Let $G(V,E)$ be a graph with $n$ vertices and $m$ edges, and let $S(V,W)$ be
any spanning tree of $G$. There exists a 1-bend \cdrawing $\Gamma$ of $\langle
G,S \rangle$. Drawing $\Gamma$ can be computed in $O(n+m)$ time and has
$O(n^2(n+m))$ area. 
\end{theorem}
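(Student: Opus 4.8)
The plan is to verify that Algorithm~\textsc{One-bend Tree} indeed produces a valid $1$-bend \cdrawing, and then to account for its running time and area. The argument splits naturally into three parts: correctness (the edges of $S$ are straight-line and uncrossed), the bend count (each edge of $E\setminus W$ has exactly one bend), and the time/area bounds.

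For \textbf{correctness}, the key observation is that Step~1 places the $n$ vertices on the points $p_i=(i^2,i)$, which lie in strictly convex position (the function $i\mapsto i^2$ is strictly convex, so no three points are collinear and the lower-left chain is convex). Consequently every edge of $S$, drawn as a straight segment between two of these points, lies on or above the lower convex hull, and in fact the whole graph $S$ is drawn inside the convex hull of the $p_i$'s with all its edges ``below'' any point of the form $(i^2+1,n+c)$ with $c\ge 1$, since such bend points have $y$-coordinate $n+c > n \ge i$ and thus lie strictly above the horizontal line $y=n$ containing all vertices at its highest. More precisely I would argue: each non-tree edge $(v_i,v_j)$ with $j>i$ is routed as the polyline $p_i \to (i^2+1,n+c) \to p_j$; the first segment goes up-and-slightly-right from $p_i$ to a point far above all vertices, and the second comes back down to $p_j$. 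Because every bend point has $y$-coordinate at least $n+1$, strictly larger than the $y$-coordinate of any vertex, the only portions of non-tree edges that can possibly enter the convex hull of the vertex set (where $S$ lives) are small initial/final stubs near $p_i$ and $p_j$; one checks that the stub leaving $p_i$ departs into the open halfplane $x > i^2$ immediately (the bend point has $x$-coordinate $i^2+1>i^2$, and no vertex other than $v_i$ has $x$-coordinate in the interval $(i^2,(i+1)^2)$ except possibly… here I would note that the bend $x$-coordinate $i^2+1$ is to the right of $p_i$ and to the left of $p_{i+1}=( (i+1)^2, i+1)$ since $(i+1)^2 = i^2+2i+1 > i^2+1$ for $i\ge 1$, so the stub passes between consecutive vertices). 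The cleanest way to phrase this is: the open region $y>n$ contains no vertex and no edge of $S$, and every non-tree edge can be drawn so that it leaves $p_i$, immediately rises into $y>n$ without crossing any edge of $S$, travels within $y>n$, and descends to $p_j$ symmetrically — this requires checking that the two stubs (from $p_i$ up to the bend, and from the bend down to $p_j$) do not cross any edge of $S$, which follows from convex position plus the fact that distinct bend $x$-coordinates $i^2+1$ are used so stubs leaving the \emph{same} vertex $p_i$ fan out without re-entering the hull. I expect this geometric bookkeeping to be the main obstacle: one must be careful that the stub from the bend point $(i^2+1,n+c)$ down to $p_j$ does not clip an edge of $S$ incident to vertices between $p_i$ and $p_j$; the saving grace is that this descending stub can be taken to approach $p_j$ from above, staying in the wedge at $p_j$ that is exterior to the lower convex hull, so it only meets $S$ at the endpoint $p_j$ itself.

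For the \textbf{bend count and running time}, each edge of $E\setminus W$ is assigned one bend by construction, and $S$ is drawn straight-line; the DFS of Step~1 and the single pass over $E\setminus W$ in Step~2 (incrementing the counter $c$ each time) run in $O(n+m)$ total time. For the \textbf{area}, the vertices occupy an $x$-range of $[1,n^2]$ and a $y$-range of $[1,n]$, while the bend points occupy $x$-coordinates of the form $i^2+1 \le n^2+1$ and $y$-coordinates $n+c$ with $c$ ranging over $1,\dots,m$ (one per non-tree edge, and $|E\setminus W| = m-n+1 \le m$), so the $y$-extent is $O(n+m)$ and the $x$-extent is $O(n^2)$. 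Since all coordinates are integers, the drawing is a grid drawing, and the bounding box has area $O(n^2)\times O(n+m) = O\bigl(n^2(n+m)\bigr)$, as claimed. I would close by noting that no scaling is needed since the construction is already on the integer grid, so the resolution rule of Section~\ref{se:preliminaries} is satisfied automatically.
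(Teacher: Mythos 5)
Your write-up analyzes the same algorithm as the paper and your bend count, running time, and area accounting ($O(n^2)$ width times $O(n+m)$ height) are all correct. However, the correctness part has a genuine gap: you never show that the edges of $S$ do not cross \emph{each other}. A compatible drawing requires $S$ itself to be crossing-free, and your ``key observation'' --- that the points $p_i=(i^2,i)$ are in strictly convex position --- does not give this: an arbitrary assignment of tree vertices to points in convex position can produce two tree edges whose endpoints interleave in the convex order, and such chords cross. What saves the construction is the DFS ordering of \textsc{Step~1}: every subtree occupies a contiguous interval of indices, so the index intervals of any two tree edges are nested or disjoint (possibly sharing an endpoint), never strictly interleaving, and hence the corresponding chords do not cross. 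This use of the DFS order is entirely absent from your argument, whereas the paper's proof at least invokes the planarity of the drawing of $S$ explicitly.

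A second, smaller weakness is your treatment of the segment from the bend $(i^2+1,n+c)$ down to $p_j$. You write that this stub ``can be taken to approach $p_j$ from above,'' but the algorithm leaves no freedom: the segment is completely determined, and one must \emph{verify} that it meets the convex hull of the vertex set only at $p_j$. This does hold, and a clean way to see it is that the segment has negative slope (its upper endpoint has $y$-coordinate $n+c>n\ge j$ and lies to the left of $p_j$), while the upper hull chain $p_1p_2\cdots p_n$ is increasing in $x$; hence their difference, as a function of $x$ on $[i^2+1,j^2]$, is strictly decreasing and vanishes at $x=j^2$, so the segment lies strictly above the chain for all $x<j^2$. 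Combined with the analogous check for the ascending segment on $[i^2,i^2+1]$, this shows each non-tree polyline touches the hull --- and hence the drawing of $S$ --- only at its two endpoints, which is the statement your ``stubs'' discussion is groping toward. Finally, note that two stubs leaving the same vertex $p_i$ have the \emph{same} bend $x$-coordinate $i^2+1$ (they are separated by their distinct heights $n+c$), not distinct ones as you assert; the distinctness of the heights, guaranteed by the progressive counter, is also what rules out a bend point of one edge lying on another edge.
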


\begin{proof}
The algorithm that constructs $\Gamma$ is Algorithm~\textsc{One-bend Tree}. In 
the following we first prove that $\Gamma$ is a straight-line \cdrawing of 
$\langle G,S \rangle$, and then we analyze the time and area complexity. We 
adopt the same notation used in the description of the algorithm.\medskip

\noindent\textsc{Correctness.} 
We have to prove that in $\Gamma$ $(i)$: the edges of $S$ are never crossed and 
$(ii)$ there exists no overlapping between a bend-point and an edge in $E 
\setminus W$.
To prove $(i)$, observe that the drawing of $S$ contained in $\Gamma$ is planar 
and that the edges in $E \setminus W$ are drawn outside the convex region 
containing the drawing of $S$. 
To prove $(ii)$, observe that, for each two edges $(v_i,v_j)$ and $(v_p,v_q)$ 
such that $i<j$, $p<q$, and $p<i$, the bend-point of the polyline representing 
$(v_i,v_j)$ lies above the polyline representing $(v_p,v_q)$. 

\medskip

\noindent\textsc{Time and area requirement.} 
Concerning time complexity, \textsc{Step~1} can be performed in $O(n)$ time. 
\textsc{Step~2} can be performed in $O(m)$ time, since for each edge in $E 
\setminus W$ a constant number of operations is required.
Concerning area requirements, the width of $\Gamma$ is $O(n^2)$, by 
construction, while the height of $\Gamma$ is given by the $y$-coordinate of the 
topmost bend-point, that is $n+m$.

\end{proof}

Next, we describe an algorithm that constructs $3$-bend \cdrawings
of pairs $\langle G, S \rangle$ with better area bounds than
Algorithm~\textsc{One-bend Tree} for sparse graphs.

% 3 Bends (Algorithm)
\medskip\noindent{\bf Algorithm}~\textsc{Three-bend Tree}. 
The algorithm works in four steps (refer to Fig.~\ref{fi:polyline-trees-3-bend} for an illustration).

\smallskip\noindent{\textsc{Step~1:}} 
Let $G'$ be the graph obtained from $G$ by subdividing each edge $(v_i,v_j) \in E \setminus W$ with two dummy vertices $d_{i,j}$ and $d_{j,i}$.
Let $S'$ be the spanning tree of $G'$, rooted at any non-dummy vertex $r$,
obtained by deleting all edges connecting two dummy vertices.
Clearly, every dummy vertex is a leaf of $S'$. 

\smallskip\noindent{\textsc{Step~2:}} 
For each vertex of $S'$, order its children arbitrarily, thus
inducing a left-to-right order of the leaves of $S'$. Rename the leaves of $S'$
as $u_1,\dots,u_k$ following this order. For each $i \in \{1,\dots,k-1\}$, add an
edge $(u_i,u_{i+1})$ to $S'$. Also, add to $S'$ two dummy vertices $v_L$ and
$v_R$, and edges $(v_L,r)$,$(v_R,r)$,$(v_L,u_1)$,$(u_k,v_R)$, $(v_L,v_R)$.

\smallskip\noindent{\textsc{Step~3:}} 
Construct a straight-line grid drawing $\Gamma'$ of $S'$, as described in~\cite{kant-dpguco-96}, in which edge $(v_L,v_R)$ is drawn as a horizontal segment on the outer face, vertices $u_1,\dots,u_k$ all lie on points having the same $y$-coordinate $Y$, and the rest of $S'$ is drawn above such points.
Remove from $\Gamma'$ the vertices and edges added in \textsc{Step~2}.

\smallskip\noindent{\textsc{Step~4:}}
Compute a drawing $\Gamma$ of $G$ such that each edge in $W$ is drawn as in
$\Gamma'$, while each edge $(v_i,v_j) \in E \setminus W$ is drawn as a polyline
connecting $v_i$ and $v_j$, bending at $d_{i,j}$, at $d_{j,i}$, and at a
point $(c,Y-1)$ where $c$ is a progressive counter, initially set to $x(u_1)$. 

% 
% 3 Bends (Theorem)
\begin{theorem}\label{th:trees-3-bend}
Let $G(V,E)$ be a graph with $n$ vertices and $m$ edges, and let $S(V,W)$ be any spanning tree of $G$. 
There exists a 3-bend \cdrawing $\Gamma$ of $\langle G,S \rangle$. 
Drawing $\Gamma$ can be computed in $O(n+m)$ time and has $O((n+m)^2)$ area. 
\end{theorem}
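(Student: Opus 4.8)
The plan is to analyze Algorithm~\textsc{Three-bend Tree} directly, in three blocks: first fix the intermediate drawing $\Gamma'$ and record its shape, then argue that the routing of Step~4 never crosses $S$ (and creates no bad incidences), and finally bound the running time and the area. To start, I would observe that $S$ has $n-1$ edges, hence $|E\setminus W|=m-n+1$ and the subdivided graph $G'$ has $n+2(m-n+1)=O(n+m)$ vertices; the graph built in Step~2 (the tree $S'$ together with the leaf-path $u_1\cdots u_k$, the frame vertices $v_L,v_R$ and their incident edges) is a $2$-connected planar graph of the same order, so the algorithm of~\cite{kant-dpguco-96} applies and, with the embedding and outer edge chosen as in Step~3, yields the claimed straight-line grid drawing $\Gamma'$ on an $O(n+m)\times O(n+m)$ grid, in which $u_1,\dots,u_k$ are the only vertices lying on the line $y=Y$ and every other vertex of $S'$ lies strictly above it. After deleting the Step~2 additions I keep $\Gamma'$ as a fixed planar straight-line grid drawing of the \emph{tree} $S'$ with this leaf-alignment.

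For correctness I must show that in $\Gamma$ no edge of $G$ crosses an edge of $S$, that no bend lies on a non-$S$ edge, and that each $e\in E\setminus W$ has at most three bends (the last is immediate by construction). Since the edges of $W$ are drawn in $\Gamma$ exactly as in $\Gamma'$ and form a subdrawing of a planar drawing, no two edges of $S$ cross. For an edge $e=(v_i,v_j)\in E\setminus W$ I would split its polyline into the two outer segments $\overline{v_i d_{i,j}}$, $\overline{d_{j,i} v_j}$ and the pendant two-segment path $P=\overline{d_{i,j}(c,Y-1)}\cup\overline{(c,Y-1)d_{j,i}}$. The two outer segments coincide with edges of $S'$ in $\Gamma'$, so they are disjoint, except possibly at shared endpoints, from every other edge of $S'$ and in particular from every edge of $S$. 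For $P$: because $d_{i,j}$ is a child of $v_i$ in $S'$ (and $d_{j,i}$ a child of $v_j$), neither $v_i$ nor $v_j$ is a leaf of $S'$, so both lie at $y\ge Y+1$; hence $P$ is contained in the strip $Y-1\le y\le Y$ and meets the line $y=Y$ only at the dummy vertices $d_{i,j},d_{j,i}$. Every vertex of $S$ — an original leaf drawn at $y=Y$ or an internal vertex drawn above — has $y\ge Y$, so the whole drawing of $S$ lies in the half-plane $y\ge Y$; therefore $P$ can touch $S$ only on the line $y=Y$, i.e.\ only at $d_{i,j}$ or $d_{j,i}$, and since these are dummy vertices incident to no edge of $S$ and $\Gamma'$ is a proper straight-line drawing, no edge of $S$ passes through them. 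Reading the same argument symmetrically shows no edge of $S$ crosses $e$. The incidence checks are then routine: the points $(c,Y-1)$ have pairwise distinct abscissas and lie on $y=Y-1$, which the outer segments (at $y\ge Y$) never reach and which each pendant path meets only at its own apex; and each $d$-vertex, being a vertex of the proper planar drawing $\Gamma'$, lies on no drawn edge other than the one of $\langle G,S\rangle$ it belongs to. Hence $\Gamma$ is a $3$-bend \cdrawing of $\langle G,S\rangle$.

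For the bounds the time is $O(n+m)$: Steps~1--2 build and augment $G'$ and $S'$ in linear time; Step~3 runs the linear-time algorithm of~\cite{kant-dpguco-96} on an $O(n+m)$-vertex graph and then deletes $O(n+m)$ elements; Step~4 spends $O(1)$ per edge of $E\setminus W$. For the area, passing from $\Gamma'$ to $\Gamma$ only appends one grid row at $y=Y-1$ and widens the drawing by at most $|E\setminus W|=O(m)$ (the progressive counter $c$, starting at $x(u_1)$), so $\Gamma$ still fits on an $O(n+m)\times O(n+m)$ integer grid; since all vertices and bends have integer coordinates the resolution rule is met automatically and the area equals the bounding-box area, i.e.\ $O((n+m)^2)$.

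I expect the only genuinely delicate ingredient to be the property of $\Gamma'$ invoked at the start: that the $k$ leaves of $S'$ can be forced onto one horizontal line, with the rest of the tree strictly above and everything on a polynomial grid. This is exactly what the frame $v_L,v_R$ and the leaf-path added in Step~2 buy us, and it is the drawing produced by the canonical-ordering algorithm of~\cite{kant-dpguco-96} as recalled in the description of Algorithm~\textsc{Three-bend Tree}; once it is in hand, the remainder of the proof is the routing-compatibility argument above together with the straightforward counting.
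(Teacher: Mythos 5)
Your proposal is correct and follows essentially the same route as the paper's own proof: it analyzes Algorithm~\textsc{Three-bend Tree} directly, using the fact that the drawing of $S'$ lies (weakly) above the line $y=Y$ while the new segments lie in the strip below it, checks that bend-points do not overlap other edges, and obtains the same $O(n+m)$ time and $O((n+m)^2)$ area bounds. Your write-up is in fact somewhat more detailed than the paper's (e.g.\ in handling original leaves of $S$ that end up on the line $y=Y$ and in verifying the incidence conditions), but the underlying argument is the same.
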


\begin{proof}
The algorithm that constructs $\Gamma$ is Algorithm~\textsc{Three-bend Tree}.
In the following we first prove that $\Gamma$ is a straight-line \cdrawing of
$\langle G,S \rangle$, and then we analyze the time and area complexity. We
adopt the same notation used in the description of the algorithm.\medskip

\noindent\textsc{Correctness.} 
We have to prove that in $\Gamma$ $(i)$ the edges of $S$ are never crossed and 
$(ii)$ there exists no overlapping between a bend-point and an edge in $E 
\setminus W$.
To prove $(i)$, observe that the drawing of $S$ contained in $\Gamma$ is 
planar~\cite{kant-dpguco-96} and lies above the horizontal line with 
$y$-coordinate $Y$. Also, observe that each edge-segment that is drawn by 
{\textsc{Step~3}} and {\textsc{Step~4}} either lies below such line or has the 
same representation as an edge of $S'$ in $\Gamma'$.
To prove $(ii)$, observe that, for each edge in $E \setminus W$, the first and 
the last bend-points have the same position as dummy vertices of $S'$ in 
$\Gamma'$, while the second bend-point has the lowest $y$-coordinate in $\Gamma$ 
and no two points with the same $y$-coordinate are connected by a straight-line 
segment.

\medskip

\noindent\textsc{Time and area requirement.} 
Concerning time complexity, \textsc{Step~1} can clearly be performed in $O(m)$ 
time; \textsc{Step~2} in $O(n+m)$ time; \textsc{Step~3} in $O(n+m)$ 
time~\cite{kant-dpguco-96}; and \textsc{Step~4} in $O(n+m)$ time, since for each 
edge in $E$ a constant number of operations is required.
Concerning area requirements, by construction and by the requirements of the
algorithm in~\cite{kant-dpguco-96}, both the width and the height of $\Gamma$
are $O(n+m)$.

\end{proof}

We finally prove that there exists a drawing algorithm that computes $4$-bend  
\cdrawings that are more readable than those computed by 
Algorithm~\textsc{Three-bend Tree}. Although the area of these drawings is still 
$O((n+m)^2)$,  they have optimal crossing angular resolution, i.e., edges cross 
only at right angles. % (see the appendix). 
Drawings of this type, called \emph{RAC drawings}, have been widely studied in the literature~\cite{DBLP:journals/tcs/DidimoEL11,dl-cargd-12} and are motivated by cognitive studies showing that drawings where the edges cross at very large angles do not affect too much the drawing  readability~\cite{DBLP:conf/apvis/HuangHE08}.

% 4-Bends
Let $G(V,E)$ be a graph and let $S(V,W)$ be any spanning tree of $G$ (see, 
e.g., 
Fig.~\ref{fi:polyline-trees-4-bends-a}). The drawing algorithm that computes a 
$4$-bend \cdrawings of $\langle G,S \rangle$ is described below.

\medskip\noindent{\bf Algorithm}~\textsc{Four-bend Tree}. The algorithm   works 
in three steps:

\smallskip\noindent{\textsc{Step~1:}} Root $S$ at any  vertex and subdivide 
each edge $e=(u,v) \in E \setminus W$ with two dummy vertices $u_e$ and $v_e$. 
Let $G'(V',E')$ be the obtained graph and let $S'(V',W')$ be the spanning tree 
of $G'$ obtained by deleting all edges connecting two dummy vertices. Clearly, 
every dummy vertex is a leaf of $S'$. For each vertex $u \in V'$, order the 
children of $u$ arbitrarily, but such that all its dummy children appear before 
the real ones from left to right (see Fig.~\ref{fi:polyline-trees-4-bends-b}).

\smallskip\noindent{\textsc{Step~2:}} Visit $S'$ in post  order and for each 
vertex $u$ of $S'$, define a weight $\omega(u)$ as follows: if $u$ is a leaf, 
then $\omega(u)=0$; if $u$ is an internal vertex with children $v_1, v_2, 
\dots, 
v_k$, then $\omega(u) = \sum_{i=1}^k (\omega(v_i) + 1) - 1$. Draw $S'$ 
downward, 
so that vertices of the same level in the tree are at the same $y$-coordinate, 
and such that $|y(u) - y(v)| = 1$ if $u$ and $v$ belong to two consecutive 
levels. Denote by $y_{\min}$ the minimum $y$-coordinate of a vertex. Visiting 
$S'$ top-down and left-to-right, assign the $x$-coordinate to each vertex as 
follows: if $u$ is the root, then $x(u)=0$; if $v_1, v_2, \dots, v_k$ are the 
children of an internal vertex $u$, then $x(v_1)=x(u)$ and $x(v_i) = x(v_{i-1}) 
+ \omega(v_{i-1}) + 1$ $(2 \leq i \leq k)$. The edges of $S'$ are drawn as 
straight-line segments, while each edge $e=(u_e,v_e) \in E'\setminus W'$ is 
drawn as a polyline $u_e, a, b, v_e$, where $a$ and $b$ are two points such 
that 
$x(a)=x(u_e)$, $x(b)=x(v_e)$ and $y(a)=y(b)=y_{\min} - c$, where $c$ is a 
progressive counter, initially set to one. Refer to 
Fig.~\ref{fi:polyline-trees-4-bends-c} for an illustration.

\smallskip\noindent{\textsc{Step~3:}} Replace each dummy  vertex with a 
bend-point to get a 4-bend \cdrawing of $\langle G,S \rangle$. See 
Fig.~\ref{fi:polyline-trees-4-bends-d} for an illustration of the resulting 
drawing.
 
\begin{figure}[tb!]
  \centering
  	\subfigure[\textsc{Input}]{
    \includegraphics[width=0.44\textwidth] {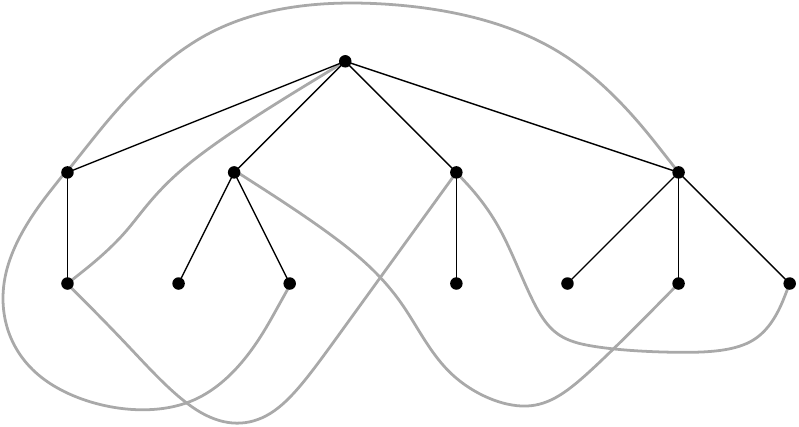}
    \label{fi:polyline-trees-4-bends-a}
	} 
    \hspace{0.02\textwidth}
    \subfigure[\textsc{Step~1}]{
    \includegraphics[width=0.44\textwidth] {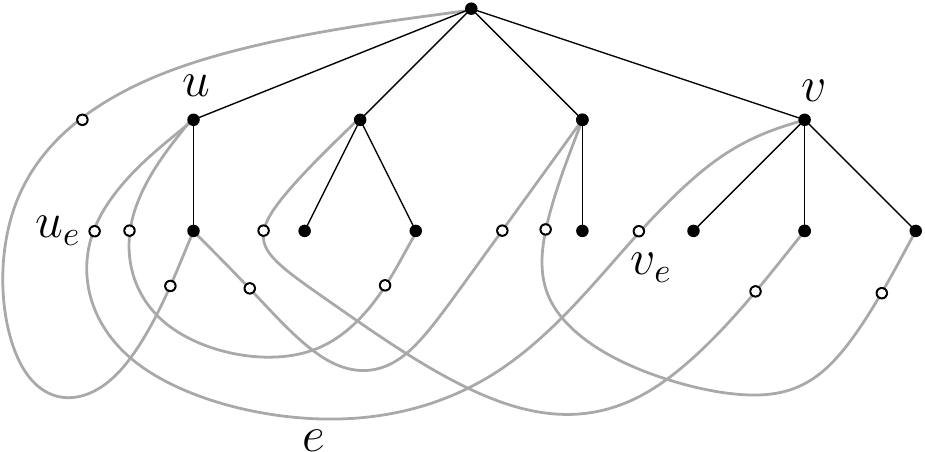}
    \label{fi:polyline-trees-4-bends-b}
	} 
	\subfigure[\textsc{Step~2}]{
    \includegraphics[width=0.44\textwidth] {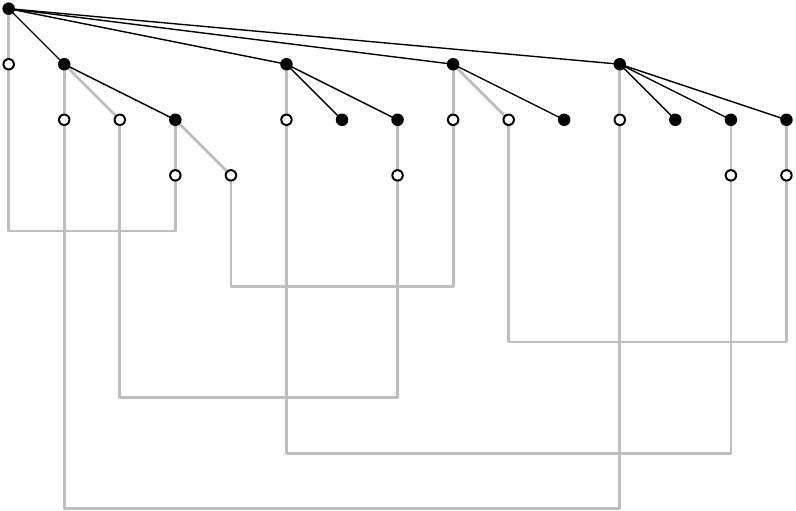}
    \label{fi:polyline-trees-4-bends-c}
	} 
	\hspace{0.02\textwidth}
	\subfigure[\textsc{Step~3}]{
    \includegraphics[width=0.44\textwidth] {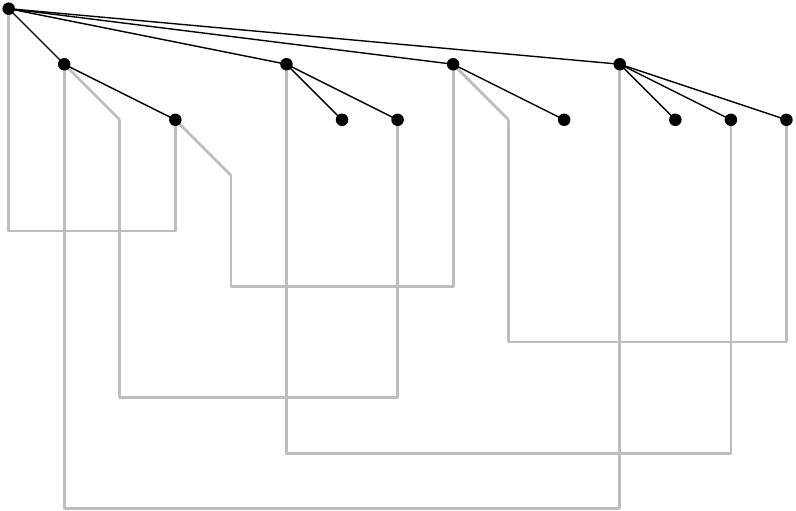}
    \label{fi:polyline-trees-4-bends-d}
	} 
\caption{Illustration of Algorithm~\textsc{Four-bend-Tree}: (a) a graph  $G$ 
with a given spanning tree $S$ (black edges); (b) an enhanced version $G'$ of 
$G$, where each edge that does not belong to $S$ is subdivided with two dummy 
vertices (white circles); (c) a 4-bend \cdrawing of $\langle G',S' \rangle$; (d) 
a 4-bend \cdrawing of $\langle G,S \rangle$. 
}
\label{fi:polyline-trees-4-bends}
\end{figure}

\begin{theorem}\label{th:trees-4-bends}
Let $G$ be a graph with $n$ vertices and $m$ edges, and let $S$ be any  
spanning tree of $G$. There exists a grid 4-bend \cdrawing $\Gamma$ of $G$, 
which is also a RAC drawing. Drawing $\Gamma$ can be computed in $O(n+m)$ time 
and has $O((n+m)^2)$ area.
\end{theorem}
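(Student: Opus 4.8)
The plan is to verify that the drawing $\Gamma$ produced by Algorithm~\textsc{Four-bend Tree} is a $4$-bend \cdrawing of $\langle G,S\rangle$, that it is RAC, that all coordinates are integers, and that the stated time and area bounds hold; throughout I would keep the notation of the algorithm. The starting observation is that the placement of Step~2 is the classical layered drawing of the rooted tree $S'$: a vertex of depth $\delta$ sits at $y$-coordinate $-\delta$, and an easy post-order induction gives $\omega(v)+1=$ (number of leaves of $S'$ in the subtree of $v$), so the subtree of $v$ occupies exactly the $\omega(v)+1$ consecutive integer abscissae of $[x(v),x(v)+\omega(v)]$, subtrees rooted at distinct vertices lie in disjoint abscissa intervals, and every edge of $S'$ joins a vertex of abscissa $x(p)$ to a child of abscissa $\ge x(p)$ inside that same interval. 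This makes the straight-line drawing of $S'$ planar, and hence also the drawing of $S$, since $S$ is a subgraph of $S'$.

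Next I would analyze a single non-tree edge $e=(u,v)\in E\setminus W$, which in $\Gamma$ is the polyline $u,u_e,a,b,v_e,v$ made of the $S'$-edges $u u_e$ and $v_e v$, the vertical segments $u_e a$ and $b v_e$, and the horizontal segment $a b$. The $S'$-edges cross no edge of $S$ because the $S'$-drawing is planar; the segment $ab$ and the portions of $u_e a,\ b v_e$ with ordinate below $y_{\min}$ lie strictly below every vertex and edge of $S'$; and the portion of $u_e a$ with ordinate in $[y_{\min},y(u_e)]$ lies on the vertical line $x=x(u_e)$. The crucial claim is that no edge of $S'$ meets this last portion: because $u_e$ is a dummy leaf and, by the ordering rule of Step~1, all dummy children of $u$ precede its real children, any $S'$-vertex of depth larger than $\mathrm{depth}(u_e)$ that could have abscissa $x(u_e)$ lies in the subtree of a real child of $u$, which starts at an abscissa strictly larger than $x(u_e)$; all other candidate vertices lie in an abscissa interval disjoint from $[x(u),x(u)+\omega(u)]\ni x(u_e)$; and at depth $\mathrm{depth}(u_e)$ itself the only $S'$-vertex on that line is $u_e$. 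The symmetric statement holds for $b v_e$. Hence no edge of $S$ is crossed, and the bend points do not fall in the interior of any edge, since the points $a,b$ of distinct non-tree edges have distinct ordinates $y_{\min}-c$, distinct vertical segments have distinct abscissae (distinct leaves of $S'$), and $u_e,v_e$ are vertices of the planar $S'$-drawing; so $\Gamma$ is a $4$-bend \cdrawing.

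For the RAC property I would observe that, by the facts just collected, every crossing of $\Gamma$ is between a vertical segment $u_e a$ or $b v_e$ and a horizontal segment $a b$ of a different non-tree edge — two horizontal segments have distinct ordinates, two vertical segments distinct abscissae, and an $S'$-edge crosses neither another $S'$-edge, nor a vertical segment (the claim above), nor, being above $y_{\min}$, a horizontal segment — and each such crossing is orthogonal. For the grid claim, all $y$-coordinates are integers ($-\delta$ for tree vertices, $y_{\min}-c$ for lower bends), all $x$-coordinates are built from $x(\mathrm{root})=0$ by integer additions, and $a,b$ are intersections of an integer vertical line with an integer horizontal line. Finally, $|V(S')|=n+2|E\setminus W|=O(n+m)$, so the post-order computation of $\omega$, the top-down assignment of abscissae, and the routing of the $O(n+m)$ edges all run in $O(n+m)$ time; the width of $\Gamma$ equals $\omega(\mathrm{root})+1$, the number of leaves of $S'$, at most $n+2|E\setminus W|=O(n+m)$; and its height is the depth of $S'$ (at most $n$) plus $|E\setminus W|$ (the number of distinct lower levels $y_{\min}-c$), i.e.\ $O(n+m)$; hence the area is $O((n+m)^2)$.

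I expect the main obstacle to be precisely the non-crossing argument for the upper parts of the vertical segments $u_e a$ and $b v_e$: establishing that the vertical column above each dummy leaf is free of $S'$-edges, via the interaction of the weight-based abscissa assignment (disjoint slabs for distinct subtrees, and the monotonicity $x(\mathrm{child})\ge x(\mathrm{parent})$) with the rule placing dummy children leftmost. The boundary situations — when $x(u_e)$ coincides with $x(u)$ or with the abscissa of an ancestor of $u$ — require care, but they are absorbed by the slab argument, since such an ancestor edge lives at a level no deeper than $\mathrm{depth}(u_e)$ and therefore can meet the column only at the point $u_e$ itself.
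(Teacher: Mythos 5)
Your proposal is correct and follows essentially the same route as the paper: it verifies Algorithm~\textsc{Four-bend Tree} by establishing planarity of the layered drawing of $S'$ via the disjoint-slab property of the weight-based abscissae, showing the vertical columns below the dummy leaves are free of tree edges, and observing that all remaining crossings are between horizontal and vertical segments of non-tree edges. The paper's own proof asserts the column-freeness claim in one sentence, whereas you supply the supporting slab argument explicitly; the bound accounting for time and area is the same.
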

\begin{proof}
The algorithm that constructs $\Gamma$ is Algorithm~\textsc{Four-bend-Tree}.  
In 
the following we first prove that $\Gamma$ is a straight-line compatible 
drawing 
of $\langle G,S \rangle$, and then we analyze time complexity and area 
requirement. We adopt the same notation used in the description of the 
algorithm.\medskip

\noindent\textsc{Correctness.} The use of dummy vertices and the way in which  
the vertex coordinates are defined guarantee that each edge $e \in E' 
\setminus 
W'$ is incident to any of its end-vertices with a vertical segment that does 
not 
intersect any other element in the sub-drawing of $S'$, and hence of $S$. Also, 
since every bend of $\Gamma$ is drawn below the drawing of $S'$ (and hence of 
$S$), the horizontal segments that connect two bends do not cross the edges of 
$S$. Finally, it is immediate to see that each edge of $W$ has at most four 
bends, and that its vertical segments can only intersect horizontal segments, 
thus $\Gamma$ is a RAC \cdrawing of $\langle G,S \rangle$.

\noindent\textsc{Time and area Requirement.} By construction, $\Gamma$ requires
at most one column for each vertex and two columns for each edge. Also, it 
requires at most $n$ rows for the sub-drawing of $S$ (namely, one row for each 
level), and one row for each edge in $E \setminus W$. Thus, $\Gamma$ takes 
$O((n+m)^2)$ area. 

About the time complexity, \textsc{Step~1} can be easily executed in linear  
time and adds $O(m)$ dummy vertices. In \textsc{Step~2}, all the vertices are 
drawn in $O(n')=O(n+m)$ time, where $n'$ denotes the number of vertices of 
$G'$. 
Also, all the edges are drawn in $O(m')=O(m)$, where $m'$ is the number of 
edges 
of $G'$. Finally, the removal of dummy vertices in \textsc{Step~3} takes $O(m)$ 
time. Hence, the time-complexity bound follows.   
\end{proof}

\subsection{Biconnected Spanning Subgraphs}\label{se:biconnected}

In this subsection we consider the case in which $S(V,W)$ is a biconnected spanning subgraph. We note that, in this case, an example of a negative instance $\langle G, S \rangle$ can be directly inherited from the case in which $S$ is a triconnected graph (Section~\ref{sse:straight-line-triconnected}), as any triconnected graph is also a biconnected graph. More generally, a trivial necessary condition for an instance to be positive is that $S$ admits an embedding in which for every edge $e$ of $E \setminus W$ the two end-vertices of $e$ share a face. In the following we prove that this condition is also sufficient if the edges of $E \setminus W$ are allowed to bend, hence providing a characterization of the positive instances. Then, in the rest of the section, we propose algorithms to realize \cdrawings of such instances, discussing the trade-off between the number of bends and the area of the resulting drawings.

\begin{theorem}\label{th:k-bend-biconnected-decision}
Let $G(V,E)$ be a graph and let $S(V,W)$ be a spanning biconnected planar 
subgraph of $G$. There exists a positive integer $k=O(n)$ such that 
$\langle G, S\rangle$ admits a $k$-bend \cdrawing $\Gamma$ if and only if 
there exists a planar (combinatorial) embedding $\cal E$  of $S$ such that each edge $e \in E\setminus W$ connects two vertices belonging 
to the same face of~$\cal E$.  
\end{theorem}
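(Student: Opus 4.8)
The forward direction is the trivial necessary condition already noted: if $\langle G,S\rangle$ admits a $k$-bend \cdrawing $\Gamma$, then the drawing of $S$ in $\Gamma$ (which is straight-line and crossing-free) induces a planar embedding $\cal E$ of $S$, and each edge $e\in E\setminus W$, being a curve disjoint from $S$ except at its endpoints, lies inside a single face of $\cal E$; hence its two end-vertices share that face. So the whole content is the ``if'' direction, and the plan is to give a constructive drawing algorithm. First I would fix the embedding $\cal E$ promised by the hypothesis and compute a convex straight-line grid drawing of $S$ with respect to $\cal E$; since $S$ is only biconnected, not triconnected, I would not use~\cite{br-scdpg-06} directly but instead first internally triconnect $S$ (add dummy edges inside faces to make it triconnected while keeping planarity and the chosen embedding), draw the triconnected supergraph with convex faces on a polynomial grid, and then delete the dummy edges; the faces of $S$ in the resulting drawing are then star-shaped (in fact each original face is a union of convex cells), which is the geometric property I actually need.

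The key step is routing each edge $e=(u,v)\in E\setminus W$ inside the (open) region of the face $f_e$ of $\cal E$ that contains both $u$ and $v$, using only a bounded number of bends, and making sure distinct non-$S$ edges do not create bend/edge overlaps forbidden by the definition of a \cdrawing. The idea is to process all edges of $E\setminus W$ that lie in a common face $f$ together: inside the polygonal region $P_f$ of $f$, pick an interior point $z_f$ that sees every vertex of $P_f$ (such a point exists because $P_f$ is star-shaped after the triconnection-then-deletion step, or one can subdivide $P_f$ into convex cells and route through a bounded-depth sequence of them). Route $e=(u,v)$ as the polyline $u \to z_f^{(u)} \to z_f^{(v)} \to v$, where $z_f^{(u)}$ and $z_f^{(v)}$ are two distinct points near $z_f$, one on the segment from $u$ toward $z_f$ and one on the segment from $v$ toward $z_f$, perturbed by a tiny amount so that the ``bundle'' of edges through $f$ is drawn like a book embedding on a short spine: this gives each edge at most $3$ or $4$ bends inside its face. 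Summing over faces, no edge ever leaves its face, so the edges of $S$ are never crossed, and within each face the standard convex-position / nested-polygon trick (analogous to the caterpillar and BFS-tree constructions earlier in the paper) ensures the non-$S$ edges only cross one another. Finally I would argue $k=O(n)$ bends suffice (indeed a small constant per face, and each edge lies in one face, but stating $k=O(n)$ is safe and matches the theorem) and that after perturbing to an integer grid the coordinates and area remain controlled — area is not claimed in this theorem, so a crude bound is enough.

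The main obstacle is the combination of two issues: (1) $S$ is merely biconnected, so a given face can be a non-convex, even non-star-shaped, simple polygon once we insist the drawing of $S$ be straight-line with the prescribed embedding, and (2) many edges of $E\setminus W$ may share the same face and must be routed without spurious crossings with $S$ or forbidden overlaps with each other's bends. I expect most of the work to be in step~(1): showing that we can always choose the straight-line drawing of $S$ so that every face is star-shaped (or decomposes into a bounded number of convex cells through which a polyline with $O(1)$ bends can thread), since this is exactly what lets the in-face routing succeed with a bounded number of bends per edge. Once a star-shaped (or boundedly-convex-decomposable) drawing of $S$ is in hand, the routing argument is a routine adaptation of the convex-position bundling already used for caterpillars and BFS-trees, together with a perturbation-to-grid argument, and the bound $k=O(n)$ follows immediately.
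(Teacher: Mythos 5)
Your necessity argument is the same as the paper's and is fine. For sufficiency, however, you have taken a substantially more elaborate route than the paper, and the extra machinery is where the trouble lies. The paper's proof needs no convexity and no star-shapedness at all: it takes an arbitrary straight-line planar drawing of $S$ with embedding $\cal E$, and for each edge $e=(u,v)\in E\setminus W$ it picks one of the two boundary paths $p$ of the face $f_e$ between $u$ and $v$ and draws $e$ as a polyline hugging $p$, with one bend arbitrarily close to each internal vertex of $p$ (perturbed so that no bend-point overlaps an edge). This trivially keeps $e$ inside $f_e$, so $S$ is never crossed, and it uses at most $\lceil n/2\rceil-2$ bends per edge, which is all that $k=O(n)$ requires. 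What you are really trying to prove is the stronger constant-bend statement, which the paper defers to Theorems~\ref{th:1-bend-biconnected-algorithm} and~\ref{th:2-bend-biconnected-algorithm}.

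The concrete gap is your claim that after augmenting $S$ to a triconnected planar graph, drawing it with convex faces via~\cite{br-scdpg-06}, and deleting the dummy edges, every face of $S$ is star-shaped because ``each original face is a union of convex cells.'' A union of convex cells need not be star-shaped: a spiral- or comb-shaped polygon triangulated into (convex) triangles is the standard counterexample, and nothing in the convex-drawing algorithm prevents an original face from being realized this way. Consequently the kernel point $z_f$ that sees every vertex of $P_f$ may fail to exist, and your routing $u\to z_f^{(u)}\to z_f^{(v)}\to v$ can leave the face and cross $S$. Your fallback of threading through a ``bounded-depth sequence'' of convex cells is also unjustified: the dual path between the cells containing $u$ and $v$ can have length $\Theta(n)$. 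That weaker version would still salvage $k=O(n)$ for this theorem, but it is not the argument you wrote. The correct way to force star-shaped faces (used by the paper for its $1$-bend result) is to add a dummy apex vertex $v_f$ inside each face $f$ joined to all vertices of $f$; any planar straight-line drawing of the augmented graph then places $v_f$ in the kernel of $f$. If you want the constant-bend conclusion, that is the construction you need; if you only want the stated theorem, the boundary-path routing is both simpler and sufficient.
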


\begin{proof}
We prove the necessity. Suppose that $\langle G,S \rangle$ admits a 
$k$-bend \cdrawing $\Gamma$ for some positive integer $k=O(n)$. 
Consider the drawing $\Gamma'$ of $S$ contained in
$\Gamma$ and let $\cal E$ be the planar embedding of $S$ determined by 
$\Gamma'$. For each edge $e\in E\setminus W$, the polyline representing $e$ 
in $\Gamma$ must be drawn inside a face of $\cal E$, as otherwise a crossing 
between $e$ and some edge of $S$ would occur.

We prove the sufficiency. Suppose that $S$ admits a planar embedding $\cal E$ 
such that each edge $e \in E\setminus W$ is incident to vertices belonging 
to the same face of~$\cal E$. We construct a $k$-bend \cdrawing $\Gamma$ 
of $\langle G,S \rangle$ as follows. Produce a straight-line planar drawing $\Gamma^*$ of $S$ whose embedding is $\cal E$. For each edge $e=(u,v)$ of $E\setminus W$, let $f_{e}$ be any face of $\Gamma^*$ containing both the end-vertices of $e$, and let $p$ be one of the two paths of $f_{e}$ between $u$ and $v$.   
Draw $e$ inside the polygon representing 
$f_{e}$ in $\Gamma^*$ as a polyline starting at $u$, ending at $v$, bending 
at points arbitrarily close to each vertex of $p$, and such that
there exists no overlap between a bend-point and an edge. Observe that, 
each polyline representing an edge of $E\setminus W$ has at most $\lceil 
\frac{n}{2} \rceil-2$ bend-points. This concludes the proof of the theorem.
\end{proof}

%%%%%%%%%%%%%%%%%%%%%%%%%%%%%%
Theorem~\ref{th:k-bend-biconnected-decision} presented a simple characterization 
of those pairs $\langle G,S\rangle$ that admit a $k$-bend \cdrawing provided 
that $k$ can be arbitrarily large, although linear in the size of $S$.
In the following we describe an algorithm that tests in polynomial time whether 
the condition of Theorem~\ref{th:k-bend-biconnected-decision} is satisfied by a 
pair $\langle G,S \rangle$ in which $S$ is biconnected and spanning and, if this is the case, returns an embedding $\cal E$ of $S$ satisfying such condition.

Our algorithm exploits a reduction to instances of problem Simultaneous Embedding with Fixed Edges with Sunflower Intersection ({\sc Sunflower SEFE})~\cite[Chapter 11]{bkr-sepg-12}. Problem {\sc Sunflower SEFE} takes as input a set of $k$ planar graphs $G_1, \dots, G_k$ on the same set of vertices such that, for each two graphs $G_i$ and $G_j$, with $i \neq j$, $G_i \cap G_j = G_\cap$, where $G_\cap = \bigcap_{l=1}^k G_l$. Namely, the intersection graph is the same for each pair of input graphs. Problem {\sc Sunflower SEFE} asks whether $G_1, \dots, G_k$ admit planar drawings $\Gamma_1, \dots, \Gamma_k$, respectively, on the same point set such that each edge $e \in G_\cap$ is represented by the same curve in each of $\Gamma_1, \dots, \Gamma_k$.

Problem {\sc Sunflower SEFE} has been shown NP-complete for $k=3$ and $G_\cap$ being a spanning tree~\cite{adn-ocsprs-13}. However, in our reduction, the produced instances of the problem will always have the property that $G_\cap$ is biconnected and each graph $G_i$ ($1 \leq i \leq k$) is composed only of the edges of $G_\cap$ plus a single edge. In this setting, a polynomial-time algorithm can be easily derived from the known algorithms~\cite{adfpr-tsetgibgt-11,bkr-seeorpc-13,hrl-tspcg2c-13} that solve in polynomial time the case in which $k=2$ and $G_\cap$ is biconnected.

\medskip\noindent{\bf Algorithm}~\textsc{K-Bend-Biconnected-Decision}.
The algorithm tests whether $S$ admits an embedding $\cal E$ such that the 
end-vertices of each edge $e \in E\setminus W$ belong to the same face of~$\cal E$ 
in two steps, as follows.

First, instance  $\langle G(V,E),S(V,W) \rangle$ is reduced to an instance $\langle 
G_1,\dots,G_{| E \setminus W| }\rangle$ of {\sc Sunflower SEFE}. For each edge 
$e_i$ with $i=1,\dots,| E \setminus W|$, the reduction simply sets $G_i = S 
\cup e_i$. By construction, the intersection graph $G_\cap$ of the constructed instance coincides with $S$ (and hence is biconnected) and each graph $G_i$ is composed only of the edges of $G_\cap$ plus a single edge.

Second, the existence of drawings $\Gamma_1,\dots,\Gamma_{| E \setminus W|}$ of $\langle G_1,\dots,G_{| E \setminus W|}\rangle$ is tested by means of the algorithm by Angelini \emph{et al.}~\cite{adfpr-tsetgibgt-11}, which also constructs such drawings if a {\sc Sunflower SEFE} exists. The embedding $\cal E$ of $S$ is obtained by restricting any of the drawings $\Gamma_i$ to the edges of $G_\cap$ (and hence of $S$). Note that, in order for $\Gamma_1,\dots,\Gamma_{| E \setminus W|}$ to be a solution of the {\sc Sunflower SEFE} instance, the embedding of $G_\cap$ resulting by restricting $\Gamma_i$ to its edges is the same for every $i = 1, \dots, |E \setminus W|$.

\begin{lemma}\label{le:k-bend-biconnected-algorithm}
Let $G(V,E)$ be a graph and let $S(V,W)$ be a biconnected planar spanning subgraph of $G$. There exists an $O(|V|+|E\setminus W|)$-time algorithm that decides whether $S$ admits an embedding $\cal E$ such that the end-vertices of each edge $e \in E\setminus W$ belong to the same face of~$\cal E$ and, if this is the case, computes $\cal E$.
\end{lemma}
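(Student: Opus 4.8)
The plan is to establish correctness and the running time of Algorithm~\textsc{K-Bend-Biconnected-Decision}. First I would argue correctness: by Theorem~\ref{th:k-bend-biconnected-decision}, the instance $\langle G,S\rangle$ admits a $k$-bend \cdrawing for some $k=O(|V|)$ if and only if $S$ admits a planar embedding $\cal E$ in which, for every edge $e\in E\setminus W$, the two end-vertices of $e$ lie on a common face. So it suffices to show that the reduction to {\sc Sunflower SEFE} faithfully captures exactly this property. For one direction, suppose such an $\cal E$ exists; then for each $e_i=(u_i,v_i)$ we can route $e_i$ inside a face of $\cal E$ incident to both $u_i$ and $v_i$, obtaining a planar drawing $\Gamma_i$ of $G_i=S\cup e_i$ whose restriction to $S$ realizes $\cal E$; since every $\Gamma_i$ induces the same embedding $\cal E$ on $G_\cap=S$ and draws the shared edges identically, $\langle\Gamma_1,\dots,\Gamma_{|E\setminus W|}\rangle$ is a valid {\sc Sunflower SEFE}. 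Conversely, a solution of the {\sc Sunflower SEFE} instance restricts to a common embedding $\cal E$ of $G_\cap=S$, and the existence of a planar $\Gamma_i$ of $S\cup e_i$ compatible with $\cal E$ forces the end-vertices of $e_i$ to share a face of $\cal E$; this holds simultaneously for all $i$, so $\cal E$ is the required embedding.

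Next I would justify that the {\sc Sunflower SEFE} subroutine is applicable and runs in linear time. The key structural observations are that (i) $G_\cap=\bigcap_i G_i=S$, since each $G_i$ equals $S$ plus one extra edge and distinct extra edges are not shared, so the sunflower condition $G_i\cap G_j=G_\cap$ holds for all $i\neq j$; and (ii) $G_\cap=S$ is biconnected by hypothesis. This is precisely the restricted regime mentioned in the text — $G_\cap$ biconnected and each $G_i$ consisting of $G_\cap$ plus a single edge — for which a polynomial-time algorithm follows from the $k=2$ biconnected case solved in~\cite{adfpr-tsetgibgt-11,bkr-seeorpc-13,hrl-tspcg2c-13}. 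I would then invoke the algorithm of Angelini \emph{et al.}~\cite{adfpr-tsetgibgt-11}, noting that it both decides feasibility and produces the drawings $\Gamma_i$, from which $\cal E$ is read off by restriction to $S$.

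For the time bound, I would observe that the reduction itself is trivial: building $G_\cap=S$ takes $O(|V|)$ time (since $S$ is planar, $|W|=O(|V|)$), and the $|E\setminus W|$ instances $G_i$ are specified implicitly by the single edge $e_i$ each adds, so listing them costs $O(|V|+|E\setminus W|)$ in total. The core point is that the {\sc Sunflower SEFE} algorithm, in this special case where all $G_i$ differ from the common biconnected graph $S$ by only one edge, can be run in time linear in the total input size, i.e. $O(|V|+|E\setminus W|)$; intuitively, one processes the fixed biconnected part $S$ once (computing its SPQR-tree and the set of admissible embeddings), and then tests each additional edge $e_i$ against this structure in amortized constant or near-constant time by checking whether $e_i$'s endpoints can be made to share a face. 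Summing over all $i$ gives the claimed bound, and extracting $\cal E$ from any $\Gamma_i$ is $O(|V|)$.

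The main obstacle I anticipate is precisely this last point: turning the generic polynomial-time guarantee for the two-graph biconnected {\sc SEFE} case into a genuinely linear-time algorithm for the $|E\setminus W|$-fold sunflower version with single-edge differences. One must argue that the admissible embeddings of the biconnected graph $S$ — as encoded by its SPQR-tree, where the only freedom is the choice of flip for R-nodes and of arrangement for P-nodes — can be precomputed once, and that each constraint "the endpoints of $e_i$ share a face" translates into a local constraint on this tree that can be tested and accumulated efficiently, so that the simultaneous satisfiability of all $|E\setminus W|$ constraints is decided without blowing up the running time. I would handle this by appealing to the machinery of~\cite{adfpr-tsetgibgt-11} for the representation of valid embeddings of a biconnected planar graph and the fact that each $G_i$ contributes exactly one pairwise face-sharing constraint, then verifying that the total work is proportional to the number of such constraints plus the size of $S$.
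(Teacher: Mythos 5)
Your proposal is correct and follows essentially the same route as the paper: reduce to {\sc Sunflower SEFE} with $G_\cap=S$ and $G_i=S\cup e_i$, argue the equivalence between the existence of the embedding $\cal E$ and the positivity of the {\sc SEFE} instance in both directions exactly as the paper does, and invoke the algorithm of Angelini \emph{et al.} for the $O(|V|+|E\setminus W|)$ bound. The only difference is that you spell out (via SPQR-tree considerations) why the linear-time bound should hold in this restricted sunflower setting, a point the paper simply delegates to the cited reference.
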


\begin{proof}
Apply Algorithm~\textsc{K-Bend-Biconnected-Decision}.

\medskip
\noindent\textsc{Correctness.} In order to prove the correctness of the algorithm, we show that an embedding $\cal E$ with the required properties exists if and only if the instance of {\sc Sunflower SEFE} constructed in the first step of Algorithm~\textsc{K-Bend-Biconnected-Decision} is positive.
Namely, it is easy to observe that in any {\sc Sunflower SEFE} of the $|E \setminus W|$  constructed graphs, the embedding of $S$ satisfies the condition of Theorem~\ref{th:k-bend-biconnected-decision}. For the other direction, it is sufficient to observe that, since each graph $G_i$ contains exactly one edge $e_i$ not belonging to $S$, once an embedding of $S$ with the property of Theorem~\ref{th:k-bend-biconnected-decision} has been computed, edge $e_i$ can be drawn inside a face of such embedding containing both its end-vertices without intersecting any edge of $G_i$, thus yielding a {\sc Sunflower SEFE} of $\langle G_1,\dots,G_{| E \setminus W| }\rangle$.

\noindent\textsc{Time Requirement.} The construction of the {\sc Sunflower SEFE} instance $\langle G_1,\dots,G_{| E \setminus W| }\rangle$ requires $O(|V|+|E\setminus W|)$ total time, as the description of $S$ does not need to be repeated in each data structure representing a graph $G_i$. Also, the second step of the algorithm can be performed in $O(|V|+|E\setminus W|)$ total time~\cite{adfpr-tsetgibgt-11}.
\end{proof}

The following corollary summarizes the results of Theorem~\ref{th:k-bend-biconnected-decision} and Lemma~\ref{le:k-bend-biconnected-algorithm}.

\begin{corollary}\label{co:k-bend-biconnected}
Let $G(V,E)$ be a graph and let $S(V,W)$ be a biconnected planar spanning subgraph of $G$. There exists an $O(|V|+|E\setminus W|)$-time algorithm that decides whether $\langle G, S \rangle$ admits a $k$-bends \cdrawing for some positive integer $k$.
\end{corollary}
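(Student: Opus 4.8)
The plan is to compose the characterization of Theorem~\ref{th:k-bend-biconnected-decision} with the testing algorithm of Lemma~\ref{le:k-bend-biconnected-algorithm}; the corollary is essentially a restatement of their conjunction. First I would observe that, by Theorem~\ref{th:k-bend-biconnected-decision}, the pair $\langle G,S\rangle$ admits a $k$-bend \cdrawing for \emph{some} positive integer $k$ if and only if $S$ admits a planar (combinatorial) embedding $\cal E$ in which the two end-vertices of every edge of $E\setminus W$ lie on a common face of $\cal E$. Indeed, the sufficiency direction of that theorem shows that whenever such an embedding exists one can actually realize a $k$-bend \cdrawing with $k=O(n)$ (concretely $k\le\lceil n/2\rceil-2$, by bending each non-tree edge near the vertices of one of the two paths of its face), while the necessity direction shows that the drawing of $S$ inside any $k$-bend \cdrawing induces exactly such an embedding. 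Hence the decision problem stated in the corollary is logically equivalent to the embedding-existence problem for $S$.

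Then I would simply invoke Lemma~\ref{le:k-bend-biconnected-algorithm}: it provides Algorithm~\textsc{K-Bend-Biconnected-Decision}, which runs in $O(|V|+|E\setminus W|)$ time and decides precisely whether $S$ admits a planar embedding in which the end-vertices of each edge of $E\setminus W$ share a face (returning such an embedding in the positive case). Running this algorithm and returning its answer therefore decides whether $\langle G,S\rangle$ admits a $k$-bend \cdrawing for some positive $k$, and does so within the claimed time bound; in the positive case one can additionally output the drawing via the construction in the proof of Theorem~\ref{th:k-bend-biconnected-decision}, applied to the embedding produced by the lemma.

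There is no genuine obstacle in this argument — it is a direct combination of the two preceding results. The only point I would be careful to state explicitly is that the existential quantifier ``for some positive integer $k$'' in the corollary is exactly the one appearing in Theorem~\ref{th:k-bend-biconnected-decision}, so the algorithm does not need to recompute or guess a value of $k$: the embedding test is a faithful proxy for the whole family of questions ranging over $k$, and, if desired, the explicit bound $k\le\lceil n/2\rceil-2$ from the sufficiency construction can be reported alongside a positive answer.
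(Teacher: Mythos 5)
Your argument is exactly the paper's: the corollary is presented there as a direct summary of Theorem~\ref{th:k-bend-biconnected-decision} (the embedding characterization) combined with Lemma~\ref{le:k-bend-biconnected-algorithm} (the linear-time embedding test via the {\sc Sunflower SEFE} reduction), with no additional argument needed. Your proposal is correct and matches this composition, including the observation that the embedding test is a faithful proxy for the existential quantifier over $k$.
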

%%%%%%%%%%%%%%%%%%%%%%%%%%%%%%

In the following we further elaborate on the result given in Corollary~\ref{co:k-bend-biconnected}, by showing that, for the positive instances $\langle G, S \rangle$ there always exist 1-bend \cdrawings (with no guarantee on the area requirement) or 2-bend \cdrawings with polynomial area.

\medskip\noindent{\bf Algorithm}~\textsc{1-Bend-Biconnected-Drawing}.
Given an instance $\langle G,S \rangle$ satisfying the condition of Theorem~\ref{th:k-bend-biconnected-decision}, the algorithm constructs a 1-bend \cdrawing $\Gamma$ of $\langle G,S \rangle$ as follows.

First, let $\cal E$ be a planar embedding of $S$ such that each edge $e \in E\setminus W$ is incident to vertices belonging to the same face of~$\cal E$. Such embedding exists as $\langle G,S \rangle$ satisfies the condition of Theorem~\ref{th:k-bend-biconnected-decision}.

Then, for each face $f$ of $\cal E$, add to $S$ a dummy vertex $v_f$ inside $f$ and connect it to each vertex of $f$, hence obtaining a planar graph $S'$. Produce any straight-line planar drawing $\Gamma'$ of $S'$, and obtain a straight-line planar drawing $\Gamma^*$ of $S$ by removing from $\Gamma'$ all dummy vertices and their incident edges. Observe that, $\Gamma^*$ is a \emph{star-shaped} drawing, namely each face is represented by a polygon whose kernel is not empty (as it contains the point on which vertex $v_f$ has been placed and an infinite set of points around it). 

Finally, obtain $\Gamma$ from $\Gamma^*$ as follows. For each edge $e$ of $E\setminus W$, let $f_{e}$ be any face of $\Gamma^*$ containing both the end-vertices of $e$. Draw $e$ as a polyline whose unique bend-point is placed on a point inside the kernel of the polygon representing $f_{e}$ in $\Gamma^*$ in such a way that there exists no overlapping between a bend-point and an edge.

\begin{theorem}\label{th:1-bend-biconnected-algorithm}
Let $G(V,E)$ be a graph and let $S(V,W)$ be a biconnected planar spanning subgraph of $G$. There exists an $O(|V|+|E\setminus W|)$-time algorithm that decides whether $\langle G,S \rangle$ admits a 1-bend \cdrawing $\Gamma$ and, in the positive case, computes such a drawing.
\end{theorem}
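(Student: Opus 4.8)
The plan is to verify that Algorithm~\textsc{1-Bend-Biconnected-Drawing}, preceded by the test of Lemma~\ref{le:k-bend-biconnected-algorithm}, satisfies all the claims. For the decision part, I would first observe that the property tested by Lemma~\ref{le:k-bend-biconnected-algorithm} --- that $S$ has a planar embedding $\cal E$ in which the two end-vertices of every edge of $E\setminus W$ lie on a common face --- is precisely the condition for $\langle G,S\rangle$ to admit a 1-bend \cdrawing. Necessity is immediate and is already contained in the necessity argument of Theorem~\ref{th:k-bend-biconnected-decision} (which applies verbatim with $k=1$): restricting a 1-bend \cdrawing to $S$ yields a planar straight-line drawing of $S$, and since each polyline representing a non-$S$ edge does not cross $S$ it lies entirely in one face of the induced embedding, so its end-vertices share that face. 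Sufficiency will follow from the correctness of the construction below, so the decision is settled within the $O(|V|+|E\setminus W|)$ time of Lemma~\ref{le:k-bend-biconnected-algorithm}.

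For the correctness of the construction I would argue as follows. The drawing $\Gamma^*$ is obtained by deleting the dummy vertices from a straight-line drawing $\Gamma'$ of $S'$ that realizes the planar embedding extending $\cal E$ (such a drawing exists for any embedded planar graph); hence $\Gamma^*$ is planar with embedding $\cal E$. The key observation is that in $\Gamma^*$ the segments formerly incident to the dummy $v_f$ cut the polygon of each face $f$ into triangles that all share the point $q_f$ on which $v_f$ was placed; thus $q_f$ sees every point of that polygon, i.e.\ $q_f$ lies in its (open, two-dimensional) kernel, so $\Gamma^*$ is star-shaped. Now route each edge $e=(u,v)\in E\setminus W$ with a single bend at a point $b_e$ in the kernel of the polygon $f_e$ containing both $u$ and $v$: by the defining property of the kernel, $b_e$ sees $u$ and $v$, so $\overline{ub_e}\cup\overline{b_ev}$ stays inside the closed polygon of $f_e$, whose interior contains no edge of $S$; hence $e$ never crosses $S$. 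Since the kernel is two-dimensional, one can in addition choose each $b_e$ off the finitely many previously placed segments and bend-points and off the finitely many lines through $u$ or $v$ and another placed point; this prevents a bend from lying on an edge and prevents a boundary edge of $f_e$ incident to $u$ or $v$ from overlapping part of $e$. Equivalently, one can take $b_e$ to be a sufficiently small, edge-specific perturbation of $q_{f_e}$. The result is a 1-bend \cdrawing of $\langle G,S\rangle$.

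For the running time I would bound the phases separately. Since $S$ is planar it has $O(|V|)$ faces with $\sum_f|f|=2|W|=O(|V|)$, so $S'$ has $O(|V|)$ vertices and edges and is built, together with a straight-line planar drawing $\Gamma'$ realizing the extension of $\cal E$, in $O(|V|)$ time by any linear-time straight-line drawing algorithm (no grid or area bound is required here). The kernel of each face polygon is computed in time linear in its size (e.g.\ by the Lee--Preparata algorithm) --- it is nonempty because $\Gamma^*$ is star-shaped --- for $O(|V|)$ in total; alternatively this step is skipped and $q_f$ is perturbed directly. Placing the bend of each non-$S$ edge then costs $O(1)$ amortized, hence $O(|E\setminus W|)$ overall, provided that for each such edge we know a face of $\cal E$ containing both of its end-vertices; this incidence is recorded while running Lemma~\ref{le:k-bend-biconnected-algorithm} (it is the face into which $e_i$ is drawn in the corresponding SEFE drawing $\Gamma_i$). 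Adding the $O(|V|+|E\setminus W|)$ cost of that test yields the stated time bound.

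The main obstacle is not the high-level structure of the argument but the two low-level geometric claims: (i) that a single bend in the kernel always suffices --- the heart of the construction, which rests entirely on the star-shapedness of $\Gamma^*$ and hence on the fan of triangles around each $v_f$; and (ii) that the general-position perturbation that avoids bend/edge incidences and boundary overlaps can be performed within the linear time budget while preserving the invariant that every routed segment stays inside its own face. The equivalence of the decision problem with the embedding condition, and the time accounting, are then routine.
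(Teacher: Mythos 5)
Your proposal is correct and follows essentially the same route as the paper: decide via the \textsc{K-Bend-Biconnected-Decision} test of Lemma~\ref{le:k-bend-biconnected-algorithm}, then build the star-shaped drawing $\Gamma^*$ by inserting and deleting a dummy vertex $v_f$ per face, and place each single bend at a perturbation of $q_f$ inside the (positive-area) kernel. You merely make explicit two points the paper leaves implicit --- the fan-of-triangles argument showing $q_f$ lies in the kernel, and the general-position choice of bend points --- which is a faithful elaboration rather than a different proof.
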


\begin{proof}
First, apply the \textsc{K-Bend-Biconnected-Decision} algorithm to decide whether $\langle G,S \rangle$ satisfies the condition of Theorem~\ref{th:k-bend-biconnected-decision}. If this is the case, let $\cal E$ be the computed embedding of $S$. Then, apply Algorithm~\textsc{1-Bend-Biconnected-Drawing} to construct a 1-bend \cdrawing $\Gamma$ of $\langle G,S \rangle$ in which the embedding of $S$ is $\cal E$.

\medskip
\noindent\textsc{Correctness.} By Theorem~\ref{th:k-bend-biconnected-decision}, Algorithm~\textsc{K-Bend-Biconnected-Decision} correctly recognizes instances that do not admit any $k$-bend \cdrawing, and hence any 1-bend \cdrawing of $\langle G,S \rangle$. On the other hand, if Algorithm~\textsc{K-Bend-Biconnected-Decision} gives a positive answer, returning an embedding $\cal E$ with the required properties, Algorithm~\textsc{1-Bend-Biconnected-Drawing} correctly constructs a 1-bend \cdrawing $\Gamma$ of $\langle G,S \rangle$. Indeed, the addition of a dummy vertex inside every face of the computed embedding of $S$ can always be performed while maintaining the property that the resulting graph is planar and simple, which implies that a planar star-shaped drawing of $S$ preserving $\cal E$ can be constructed. By observing that the kernel of the polygon representing each face of $\cal E$ in such a drawing is a positive-area region containing an infinite number of points, it is easy to see that non-overlapping bend-points for all the edges traversing a face can be placed inside the kernel of this face.

\noindent\textsc{Time Requirement.} By Lemma~\ref{le:k-bend-biconnected-algorithm}, positive or negative instances can be recognized in $O(|V|+|E\setminus W|)$ time. If the instance is positive, a star-shaped drawing of $S$ can be found in linear time by applying any algorithm constructing planar drawings of planar graphs~\cite{s-epgg-90,dfpp-hdpgg-90}, as only a linear number of dummy vertices is added to $S$. Finally, each edge of $E\setminus W$ can be routed inside a face of $\cal E$ in constant time.
\end{proof}

%%%%%%%%%%%%%%%%%%%%%%%%%%%%%%%%%%%%%%%%%%%%

Observe that the compatible drawings computed by Algorithm~\textsc{1-Bend-Biconnected-Drawing} are not guaranteed to have polynomial area, as the kernel of the polygon representing the face in which the bend-points are placed might be arbitrarily small (although containing an infinite number of points). In the following we prove that, when each edge of $E\setminus W$ is allowed to bend twice, it is possible to construct drawings requiring polynomial area in the size of the input.  

\medskip\noindent{\bf Algorithm}~\textsc{2-Bend-Biconnected-Drawing}.
Given an instance $\langle G,S \rangle$ satisfying the condition of Theorem~\ref{th:k-bend-biconnected-decision}, the algorithm constructs a 2-bend \cdrawing $\Gamma$ of $\langle G,S \rangle$ as follows.

First, let $\cal E$ be a planar embedding of $S$ such that each edge $e \in E\setminus W$ is incident to vertices belonging to the same face of~$\cal E$. Such embedding exists as $\langle G,S \rangle$ satisfies the condition of Theorem~\ref{th:k-bend-biconnected-decision}.

We construct a 2-bend \cdrawing $\Gamma$ of $\langle G,S \rangle$ as follows. For each face $f$ of $\cal E$, let $E_f$ be the set of edges of $E\setminus W$ that have to be drawn inside $f$ in $\Gamma$. Add to $S$ a cycle $c_f$ of length $2 |E_f|$ inside $f$ and connect its vertices to the vertices of $f$ in such a way that each vertex of $f$ is adjacent to consecutive vertices of $c_f$, and consecutive vertices of $c_f$ are adjacent to the same vertex of $f$ or to consecutive vertices of $f$, hence obtaining a planar graph $S'$. If $S'$ is not triconnected, augment it to planar triconnected by adding dummy edges.

Produce a straight-line grid convex drawing $\Gamma'$ of $S'$, as described in~\cite{br-scdpg-06}. Denote by $N_f(v)$ the set of grid points the neighbors of a vertex $v$ of $f$ belonging to a cycle $c_f$ have been placed on. Then, obtain a grid drawing $\Gamma^*$ of $S$ by removing from $\Gamma'$ all dummy vertices and all dummy edges.

Finally, obtain $\Gamma$ from $\Gamma^*$ as follows. For each face $f$ of $\Gamma^*$ and for each edge $e = (u,v) \in E_f$, draw $e$ as a polyline starting at $u$, bending at a point in $N_f(u)$ and at a point in $N_f(v)$, and ending at $v$.

\begin{theorem}\label{th:2-bend-biconnected-algorithm}
Let $G(V,E)$ be a graph and let $S(V,W)$ be a biconnected planar spanning subgraph of $G$. There exists an $O(|V|+|E\setminus W|)$-time algorithm that decides whether $\langle G,S \rangle$ admits a 2-bend \cdrawing $\Gamma$ and, in the positive case, computes such a drawing in $O((|V|+|E\setminus W|)^2)$ area.
\end{theorem}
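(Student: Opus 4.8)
The plan is to verify, in order, (a) correctness of the decision, (b) correctness of the drawing returned in the positive case, (c) the $O(|V|+|E\setminus W|)$ time bound, and (d) the $O((|V|+|E\setminus W|)^{2})$ area bound, reusing the description of Algorithm~\textsc{2-Bend-Biconnected-Drawing} together with Theorem~\ref{th:k-bend-biconnected-decision} and Lemma~\ref{le:k-bend-biconnected-algorithm}. For (a), I would first run Algorithm~\textsc{K-Bend-Biconnected-Decision}; by Lemma~\ref{le:k-bend-biconnected-algorithm} this costs $O(|V|+|E\setminus W|)$ time and either certifies that $S$ has no planar embedding in which the two end-vertices of every edge of $E\setminus W$ lie on a common face, or returns such an embedding $\cal E$. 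In the former case I reject, since by the necessity direction of Theorem~\ref{th:k-bend-biconnected-decision} no $k$-bend \cdrawing exists for any $k$, in particular no $2$-bend one. In the latter case I run the rest of Algorithm~\textsc{2-Bend-Biconnected-Drawing} on $\cal E$ and must show that its output $\Gamma$ is a $2$-bend \cdrawing.

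The heart of (b) is geometric. Since $\Gamma^{*}$ is obtained from a convex straight-line planar grid drawing $\Gamma'$ of the triconnected planar graph $S'$ by deleting dummy vertices and edges, the edges of $S$ appear in $\Gamma$ as straight-line segments forming a planar sub-drawing, each face $f$ of $\cal E$ is drawn as a polygon $P_f$, and inside $P_f$ the cycle $c_f$ bounds a convex polygon $Q_f$ that is strictly contained in $P_f$ (disjoint from $\partial P_f$), with the points of $N_f(w)$ being vertices of $Q_f$ joined to $w\in V_f$ by edges of $S'$ drawn inside $P_f$. Now take any edge $e=(u,v)\in E_f$, routed as the polyline $\langle u,a,b,v\rangle$ with $a\in N_f(u)$ and $b\in N_f(v)$, choosing $a$ (resp.\ $b$) distinct from the point used by any other edge of $E_f$ incident to $u$ (resp.\ $v$); this is possible because $|N_f(w)|$ is set equal to the number of edges of $E_f$ incident to $w$ and these numbers sum to $|c_f|=2|E_f|$. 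The segments $ua$ and $bv$ are edges of $S'$, hence they lie inside $P_f$ and meet $\partial P_f$ only at $u$ and $v$; the segment $ab$ is a chord of the convex polygon $Q_f$, hence lies inside $Q_f\subset P_f$. Consequently the whole polyline stays inside $P_f$ and meets $\partial P_f$ exactly at $u$ and $v$, so $e$ crosses no edge of $S$; moreover its two bends lie strictly inside $P_f$, hence not on any edge of $S$, and --- since the chosen $c_f$-vertices are pairwise distinct, a vertex of a convex polygon lies on a chord of it only as an endpoint, and a vertex of a plane graph lies on one of its edges only as an endpoint --- no bend lies on another edge of $E\setminus W$ either (if $\Gamma'$ happens to be only weakly convex one first applies a harmless general-position perturbation of the bend-points on a slightly refined grid). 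Since each routing polyline has at most two bends, $\Gamma$ is a $2$-bend \cdrawing of $\langle G,S\rangle$; the degenerate cases $|E_f|\le 1$ are handled by enlarging $c_f$ to a triangle.

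For (c) and (d), note first that $S'$ has $O(|V|+|E\setminus W|)$ vertices and edges: summed over all faces, the vertices of the faces of $S$ contribute $\sum_f|V_f|=2|W|=O(|V|)$ (since $S$ is planar), the cycles $c_f$ contribute $\sum_f|c_f|=2|E\setminus W|$, the links between each $c_f$ and $f$ add $O(|V_f|+|c_f|)$ edges, and the triconnectivity augmentation adds only $O(|V(S')|)$ further edges. Hence building $S'$, augmenting it (in such a way that each region $Q_f$ remains a face), and computing the convex grid drawing $\Gamma'$ via~\cite{br-scdpg-06} all take time linear in $|V(S')|=O(|V|+|E\setminus W|)$, and routing each edge of $E\setminus W$ takes $O(1)$ time; with the $O(|V|+|E\setminus W|)$ decision step this yields the claimed running time. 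For the area, $\Gamma'$ is produced on a grid whose side is polynomial in $|V(S')|$, so that its area is $O((|V|+|E\setminus W|)^{2})$, and $\Gamma$ places every vertex of $S$ and every bend (each on a point formerly occupied by a vertex of some $c_f$) on a grid point of $\Gamma'$; hence $\Gamma$ fits on the same grid and the bound follows.

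The step I expect to be the main obstacle is (b): the ring gadget $c_f$ and the assignment of its vertices to the edges of $E_f$ must be set up so that, once the triconnected augmentation $S'$ is drawn convex, every two-bend routing polyline is confined to its face and meets the face boundary only at its end-vertices --- here the convexity of $Q_f$ is exactly what makes the middle segment $ab$ safe. A secondary point to treat with care is that the triconnectivity augmentation, together with the choice of the outer face of $S'$, must keep each $Q_f$ a (convex) face, and that the convex grid drawing of $S'$ fits on a grid of side $O(|V(S')|)$, so that the linear size of $S'$ indeed translates into quadratic area.
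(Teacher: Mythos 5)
Your proposal is correct and follows essentially the same route as the paper's proof: run Algorithm~\textsc{K-Bend-Biconnected-Decision} first (rejecting via the necessity direction of Theorem~\ref{th:k-bend-biconnected-decision}), then argue that in the convex drawing of the triconnected augmentation $S'$ the first and last segments of each routed edge coincide with edges of $S'$ while the middle segment is a chord of the convex face bounded by $c_f$, and finally count $|S'|=O(|V|+|E\setminus W|)$ to get the time and area bounds from~\cite{br-scdpg-06}. The extra details you supply (distinct attachment points on $c_f$, degenerate $|E_f|\le 1$, perturbation for weak convexity) are refinements of the same argument rather than a different approach.
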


\begin{proof}
First, apply the \textsc{K-Bend-Biconnected-Decision} algorithm to decide whether $\langle G,S \rangle$ satisfies the condition of Theorem~\ref{th:k-bend-biconnected-decision}. If this is the case, let $\cal E$ be the computed embedding of $S$. Then, apply Algorithm~\textsc{2-Bend-Biconnected-Drawing} to construct a grid 2-bend \cdrawing $\Gamma$ of $\langle G,S \rangle$ in which the embedding of $S$ is $\cal E$.

\medskip
\noindent\textsc{Correctness.} By Theorem~\ref{th:k-bend-biconnected-decision}, Algorithm~\textsc{K-Bend-Biconnected-Decision} correctly recognizes instances that do not admit any $k$-bend \cdrawing, and hence any 2-bend \cdrawing of $\langle G,S \rangle$. On the other hand, if Algorithm~\textsc{K-Bend-Biconnected-Decision} gives a positive answer, returning an embedding $\cal E$ with the required properties, Algorithm~\textsc{2-Bend-Biconnected-Drawing} correctly constructs a 2-bend \cdrawing $\Gamma$ of $\langle G,S \rangle$. Indeed, the addition of a cycle $c_f$ and its incident dummy edges inside every face $f$ of the computed embedding of $S$ can always be performed while maintaining the property that the resulting graph is planar and simple. Also, extra dummy edges that make the graph triconnected and still planar can be always added. Denote by $S'$ the resulting triconnected planar graph. Since $S'$ is triconnected, a convex drawing $\Gamma'$ can always be constructed.
Finally, for each edge $e \in E \setminus W$, note that the bends of $e$ are placed on points onto which (dummy) vertices of $S'$ had been placed; hence, there is no overlapping among bend-points and edges. Also, the first and the last straight-line segment of $e$ coincide with edges of $\Gamma'$, while the second straight-line segment of $e$ lies entirely inside the convex face of $\Gamma'$ bounded by the cycle $c_f$ added inside the face $f$ in which $e$ is routed. Hence, crossings in $\Gamma$ only happen inside faces bounded by a cycle $c_f$ for some face $f$. Since such cycles only contain (straight-line segments of) edges of $E \setminus W$, we have that $\Gamma$ is a 2-bend \cdrawing. 

\noindent\textsc{Time and Area Requirements.} By Lemma~\ref{le:k-bend-biconnected-algorithm}, positive or negative instances can be recognized in $O(|V|+|E\setminus W|)$ time. If the instance is positive, graph $S'$ can be constructed and augmented to triconnected in $O(|V|+|E\setminus W|)$ time. Note that, $|S'| = O(|V|+|E\setminus W|)$. Then, a convex drawing $\Gamma'$ of $S'$ can be constructed in $O(|V|+|E\setminus W|)$ time~\cite{br-scdpg-06}. Finally, each edge of $E\setminus W$ can be routed inside a face of $\cal E$ in constant time.
Since the area of $\Gamma$ is the same as the area of $\Gamma'$, and since the area of $\Gamma'$ is $O((|V|+|E\setminus W|)^2)$~\cite{br-scdpg-06}, the area bound follows.
\end{proof}

\section{Conclusions and Open Problems}\label{se:conclusions}

We introduced a new graph drawing problem, i.e., computing a drawing $\Gamma$ of a non-planar graph $G$ such that a desired subgraph $S \subseteq G$ is crossing-free in $\Gamma$. In the setting where edges are straight-line segments, we showed that $\Gamma$ does not always exist even if $S$ is a spanning tree of $G$; also, we provided existential and algorithmic results for meaningful subfamilies of spanning trees and we described a linear-time testing and drawing algorithm when $S$ is a triconnected spanning subgraph. One of the main problems still open in this setting is the following: 

\begin{openproblem}\label{pr:problem-1} Given a graph $G$ and a spanning tree $S$ of $G$, what is the complexity of deciding whether $\langle G,S \rangle$ admits a straight-line \cdrawing?
\end{openproblem}

The problem is still open also when $S$ is a biconnected spanning subgraph (for which one can try to extend the characterization of Theorem~\ref{th:triconnected-decision}) and for subgraphs $S$ that are not necessarily connected and spanning. Hence, the following more general open problem can be stated:

\begin{openproblem}\label{pr:problem-2} 
Given a graph $G$ and a subgraph $S$ of $G$, what is the complexity of deciding whether $\langle G,S \rangle$ admits a straight-line \cdrawing?
\end{openproblem}

Schaefer proposed a variant of Open Problem~\ref{pr:problem-2}, where the edges of $S$ are allowed to cross at most $h$ times~\cite{DBLP:journals/corr/Schaefer13}. In particular, for $h=1$ he proved that the problem turns out to be as hard as the existential theory of the real numbers.

Another intriguing problem, which is more specific, is to extend the results of Lemmas~\ref{le:bad-trees-1} and~\ref{le:bad-trees-2}. Namely:

\begin{openproblem}\label{pr:problem-3} Give a characterization of the pairs $\langle G, S \rangle$ that admit a \cdrawing, when $G$ is a complete graph and $S$ is a spanning tree of~$G$.
\end{openproblem}

In the setting where the edges of $G$ not in $S$ are allowed to bend, we showed that a \cdrawing of $\langle G,S \rangle$ always exists if $S$ is a spanning tree, with different compromises between number of bends per edge and drawing area (see Theorems~\ref{th:trees-1-bend},  \ref{th:trees-3-bend}, and \ref{th:trees-4-bends}). Also, if $S$ is a biconnected spanning subgraph we provided efficient testing and drawing algorithms to compute either 1-bend \cdrawings (with possible exponential area) or 2-bend \cdrawings with polynomial area (see Theorems~\ref{th:1-bend-biconnected-algorithm} and~\ref{th:2-bend-biconnected-algorithm}). However, also in this setting several interesting open problems can be studied; among them we mention the following:

\begin{openproblem}\label{pr:problem-4}
Given a graph $G$, a non-connected subgraph $S$ of $G$, and a positive integer $k$, what is the complexity of deciding whether $\langle G,S \rangle$ admits a $k$-bend \cdrawing? If such a drawing exists, what is the required area?
\end{openproblem}

\begin{openproblem}\label{pr:problem-5}
Given a graph $G$ and a spanning tree $S$ of $G$, does there exist an algorithm that computes a $1$-bend \cdrawing with $o(n^2(n+m))$ area? 
\end{openproblem}

\begin{openproblem}\label{pr:problem-6}
Given a graph $G$ and a biconnected spanning subgraph $S$ of $G$, does there exist an algorithm that computes a $1$-bend \cdrawing of $\langle G,S \rangle$ with polynomial area?
\end{openproblem}

{\small \bibliography{bibliography}}
\bibliographystyle{splncs03}

\end{document}